\def\citetechreport{0}    %Uncomment this if the document is technical report, so that appendices are referenced directly
\DeclareMathOperator*{\argmin}{arg\,min}
\newtheorem{definition}{Definition}
\newtheorem{assumption}{Assumption}
\newtheorem{theorem}{Theorem}
\newtheorem{lemma}{Lemma}
\newtheorem{proposition}{Proposition}
\begin{document}

\title{Adaptive Federated Learning in Resource Constrained Edge Computing Systems
}

\makeatletter

\author{
\IEEEauthorblockN{Shiqiang Wang, Tiffany Tuor, Theodoros Salonidis, Kin K. Leung, \\ Christian Makaya, Ting He, Kevin Chan}
\thanks{
S. Wang, T. Salonidis, and C. Makaya are with IBM T. J. Watson Research Center, Yorktown Heights, NY, USA. Email: \{wangshiq, tsaloni\}@us.ibm.com, chrismak@ieee.org

T. Tuor and K. K. Leung are with Imperial College London, UK. Email: \{tiffany.tuor14, kin.leung\}@imperial.ac.uk

T. He is with Pennsylvania State University, University Park, PA, USA. Email: t.he@cse.psu.edu

K. Chan is with Army Research Laboratory, Adelphi, MD, USA. Email: kevin.s.chan.civ@mail.mil

This research was sponsored by the U.S. Army Research Laboratory and the U.K. Ministry of Defence under Agreement Number W911NF-16-3-0001. The views and conclusions contained in this document are those of the authors and should not be interpreted as representing the official policies, either expressed or implied, of the U.S. Army Research Laboratory, the U.S. Government, the U.K. Ministry of Defence or the U.K. Government. The U.S. and U.K. Governments are authorized to reproduce and distribute reprints for Government purposes notwithstanding any copyright notation hereon.

\if\citetechreport0
This paper (excluding appendices) has been accepted for publication in the IEEE Journal on Selected Areas in Communications.
\fi
A preliminary version of this work entitled ``When edge meets learning: adaptive control for resource-constrained distributed machine learning'' was presented at IEEE INFOCOM 2018~\cite{WangINFOCOM2018}. 
%The current (journal) version includes a new convergence bound that is more general than the bound in the conference version. The control algorithm and experimentation results in the current version are new, to match with the new convergence bound. The new control algorithm supports multiple resource types and can guarantee convergence to zero optimality gap as the resource budget goes to infinity, whereas the old control algorithm in the conference version only supports a single type of resource and does not provide convergence guarantee. In addition, the experiments in the current version are conducted on larger datasets, and more results are included. The current version also includes more detailed descriptions of algorithms and proofs of theorems.
}
\vspace{-0.2in}
}

\maketitle

\begin{abstract}
Emerging technologies and applications including Internet of Things (IoT), social networking, and crowd-sourcing generate large amounts of data at the network edge. Machine learning models are often built from the collected data, to enable the detection, classification, and prediction of future events. Due to bandwidth, storage, and privacy concerns, it is often impractical to send all the data to a centralized location. In this paper, we consider the problem of learning model parameters from data distributed across multiple edge nodes, without sending raw data to a centralized place. Our focus is on a generic class of machine learning models that are trained using gradient-descent based approaches. We analyze the convergence bound of distributed gradient descent from a theoretical point of view, based on which we propose a control algorithm that determines the best trade-off between local update and global parameter aggregation to minimize the loss function under a given resource budget. The performance of the proposed algorithm is evaluated via extensive experiments with real datasets, both on a  networked prototype system and in a larger-scale simulated environment. The experimentation results show that our proposed approach performs near to the optimum with various machine learning models and different data distributions.
\end{abstract}

\begin{IEEEkeywords}
Distributed machine learning, federated learning, mobile edge computing, wireless networking
\end{IEEEkeywords}

\IEEEpeerreviewmaketitle

\section{Introduction}

The rapid advancement of Internet of Things (IoT) and social networking applications results in an exponential growth of the data generated at the network edge. It has been predicted that the data generation rate will exceed the capacity of today's Internet in the near future \cite{ChiangFogOverview}.
Due to network bandwidth and data privacy concerns, it is impractical and often unnecessary to send all the data to a remote cloud. As a result, research organizations estimate that over $90\%$ of the data will be stored and processed locally   \cite{KellyABI}.
Local data storing and processing with global coordination is made possible by the emerging technology of mobile edge computing (MEC) \cite{MaoEdgeComptSurvey, MachEdgeComputSurvey}, where edge nodes, such as sensors, home gateways, micro servers, and small cells, are equipped with storage and computation capability. Multiple edge nodes work together with the remote cloud to perform large-scale distributed tasks that involve both local processing and remote coordination/execution.

To analyze large amounts of data and obtain useful information for the detection, classification, and prediction of future events, machine learning techniques are often applied. The definition of machine learning is very broad, ranging from simple data summarization with linear regression to multi-class classification with support vector machines (SVMs) and deep neural networks \cite{shalev2014understanding, Goodfellow-et-al-2016}.
The latter have shown very promising performance in recent years, for complex tasks such as image classification. One key enabler of machine learning is the ability to learn (train) models using a very large amount of data.
With the increasing amount of data being generated by new applications and with more applications becoming data-driven, one can foresee that machine learning tasks will become a dominant workload in distributed MEC systems in the future. However, it is challenging to perform distributed machine learning on resource-constrained MEC systems.

In this paper, we address the problem of how to efficiently utilize the limited computation and communication resources at the edge for the optimal learning performance. 
We consider a typical edge computing architecture where edge nodes are interconnected with the remote cloud via network elements, such as gateways and routers, as illustrated in Fig.~\ref{fig:architecture}. The raw data is collected and stored at multiple edge nodes, and a machine learning model is trained from the distributed data \emph{without} sending the raw data from the nodes to a central place. This variant of distributed machine learning (model training) from a federation of edge nodes is known as \emph{federated learning}~\cite{GoogleFederatedLearningBlog,mcmahan2016communication,WirelessNetworkIntelligence}.

\begin{figure}
\centering
\includegraphics[width=0.47\textwidth]{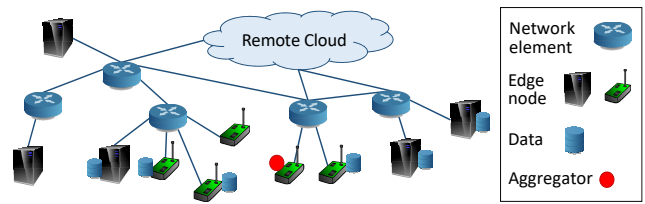}
\caption{System architecture.}
\label{fig:architecture}
\end{figure}

We focus on gradient-descent based federated learning algorithms, which have general applicability to a wide range of machine learning models. 
The learning process includes \emph{local update} steps where each edge node performs gradient descent to adjust the (local) model parameter to minimize the loss function defined on its own dataset. It also includes \emph{global aggregation} steps where model parameters obtained at different edge nodes are sent to an aggregator, which is a logical component that can run on the remote cloud, a network element, or an edge node. The aggregator aggregates these parameters (e.g., by taking a weighted average) and sends an updated parameter back to the edge nodes for the next round of iteration. 
The frequency of global aggregation is configurable; one can aggregate at an interval of one or multiple local updates. 
Each local update consumes computation resource of the edge node, and each global aggregation consumes communication resource of the network. 
The amount of consumed resources may vary over time, and there is a complex relationship among the frequency of global aggregation, the model training accuracy, and resource consumption.

We propose an algorithm to determine the frequency of global aggregation so that the available resource is most efficiently used. This is important because the training of machine learning models is usually resource-intensive, and a non-optimal operation of the learning task may waste a significant amount of resources.
Our main contributions in this paper are as follows:
\begin{enumerate}
\item We analyze the convergence bound of gradient-descent based federated learning from a theoretical perspective, and obtain a novel convergence bound that incorporates non-independent-and-identically-distributed (non-i.i.d.) data distributions among nodes and an arbitrary number of local updates between two global aggregations.
\item Using the above theoretical convergence bound, we propose a control algorithm that learns the data distribution, system dynamics, and model characteristics, based on which it dynamically adapts the frequency of global aggregation in real time to minimize the learning loss under a fixed resource budget.
\item We evaluate the performance of the proposed control algorithm via extensive experiments using real datasets both on a hardware prototype and in a simulated environment, which confirm that our proposed approach provides near-optimal performance for different data distributions, various machine learning models, and system configurations with different numbers of edge nodes. 
\end{enumerate}

\section{Related Work}

Existing work on MEC focuses on generic applications, where solutions have been proposed for application offloading~\cite{XiaoOffloading2017, TongOffloading2016}, workload scheduling \cite{TongHierarchical2016, TanInfocom2017}, and service migration triggered by user mobility \cite{WangICDCS2017, WangMicroCloudPredictedCost}. However, they do not address the relationship among communication, computation, and training accuracy for machine learning applications, which is important for optimizing the performance of machine learning tasks.

The concept of federated learning was first proposed in \cite{mcmahan2016communication}, which showed its effectiveness through experiments on various datasets. Based on the comparison of synchronous and asynchronous methods of distributed gradient descent in \cite{Chen2016}, it is proposed in \cite{mcmahan2016communication} that federated learning should use the synchronous approach because it is more efficient than asynchronous approaches. The approach in \cite{mcmahan2016communication} uses a fixed global aggregation frequency. It does not provide theoretical convergence guarantee and the experiments were not conducted in a network setting.  
Several extensions have been made to the original federated learning proposal recently. For example, a mechanism for secure global aggregation is proposed in~\cite{SecurityAggregationFederatedLearning}. Methods for compressing the information exchanged within one global aggregation step is proposed in~\cite{Jakub2016,hardy2017distributed}. Adjustments to the standard gradient descent procedure for better performance in the federated setting is studied in~\cite{FederatedOptimization}. Participant (client) selection for federated learning is studied in~\cite{nishio2018client}. An approach that shares a small amount of data with other nodes for better learning performance with non-i.i.d. data distribution is proposed in~\cite{zhao2018federated}.
These studies do not consider the adaptation of global aggregation frequency, and thus they are orthogonal to our work in this paper. To the best of our knowledge, the adaptation of global aggregation frequency for federated learning with resource constraints has not been studied in the literature.

An area related to federated learning is distributed machine learning in datacenters through the use of worker machines and parameter servers~\cite{li2014scaling}. The main difference between the datacenter environment and edge computing environment is that in datacenters, shared storage is usually used. The worker machines do not keep persistent data storage on their own, and they fetch the data from the shared storage at the beginning of the learning process. As a result, the data samples obtained by different workers are usually independent and identically distributed (i.i.d.). 
In federated learning, the data is collected at the edge directly and stored persistently at edge nodes, thus the data distribution at different edge nodes is usually non-i.i.d.
Concurrently with our work in this paper, optimization of synchronization frequency with running time considerations is studied in~\cite{WangSysML2019} for the datacenter setting. It does not consider characteristics of non-i.i.d. data distributions which is essential in federated learning.

Distributed machine learning across multiple datacenters in different geographical locations is studied in \cite{Gaia2017}, where a threshold-based approach to reduce the communication among different datacenters is proposed. Although the work in \cite{Gaia2017} is related to the adaptation of synchronization frequency with resource considerations, it focuses on peer-to-peer connected datacenters, which is different from the federated learning architecture that is not peer-to-peer. It also allows asynchronism among datacenter nodes, which is not the case in federated learning. 
In addition, the approach in \cite{Gaia2017} is designed empirically and does not consider a concrete theoretical objective, nor does it consider computation resource constraint which is important in MEC systems in addition to constrained communication resource.

From a theoretical perspective, bounds on the convergence of distributed gradient descent are obtained in~\cite{agarwal2011distributed,lian2015asynchronous,pmlr-v70-zheng17b}, which only allow one step of local update before global aggregation. Partial global aggregation is allowed in the decentralized gradient descent approach in \cite{lian2017can,lian2017asynchronous}, where after each local update step, parameter aggregation is performed over a non-empty subset of nodes, which does not apply in our federated learning setting where there is no aggregation at all after some of the local update steps. 
Multiple local updates before aggregation is possible in the bound derived in~\cite{Gaia2017}, but the number of local updates varies based on the thresholding procedure and cannot be specified as a given constant. Concurrently with our work, bounds with a fixed number of local updates between global aggregation steps are derived in~\cite{CooperativeSGD,YuAAAI2019}. However, the bound in~\cite{CooperativeSGD} only works with i.i.d. data distribution; the bound in~\cite{YuAAAI2019} is independent from how different the datasets are, which is inefficient because it does not capture the fact that training on i.i.d. data is likely to converge faster than training on non-i.i.d. data.
Related studies on distributed optimization that are applicable for machine learning applications also include \cite{zhang2012communication, arjevani2015communication, ma2017distributed}, where a separate solver is used to solve a local problem. The main focus of \cite{zhang2012communication, arjevani2015communication, ma2017distributed} is the trade-off between communication and optimality, where the complexity of solving the local problem (such as the number of local updates needed) is not studied. 
In addition, many of the existing studies either explicitly or implicitly assume i.i.d. data distribution at different nodes, which is inappropriate in federated learning.
To our knowledge, the convergence bound of distributed gradient descent in the federated learning setting, {which captures both the characteristics of different (possibly non-i.i.d. distributed) datasets and a given number of local update steps between two global aggregation steps, has not been studied in the literature.

In contrast to the above research, our work in this paper formally addresses the problem of dynamically determining the global aggregation frequency to \emph{optimize the learning with a given resource budget} for federated learning in MEC systems. 
This is a non-trivial problem due to the complex dependency between each learning step and its previous learning steps, which is hard to capture analytically.
It is also challenging due to non-i.i.d. data distributions at different nodes, where the data distribution is unknown beforehand and the datasets may have different degrees of similarities with each other, and the real-time dynamics of the system.
We propose an algorithm that is derived from theoretical analysis and adapts to real-time system dynamics.

We start with summarizing the basics of federated learning in the next section. In Section IV, we describe our problem formulation. The convergence analysis and control algorithm are presented in Sections V and VI, respectively. Experimentation results are shown in Section VII and the conclusion is presented in Section VIII.

\section{Preliminaries and Definitions \label{sec:distributedMachineLearning}}

\subsection{Loss Function}

Machine learning models include a set of parameters which are learned based on training data. A training data sample~$j$ usually consists of two parts. One is a vector $\mathbf{x}_j$ that is regarded as the input of the machine learning model (such as the pixels of an image); the other is a scalar $y_j$ that is the desired output of the model (such as the label of the image). To facilitate the learning, each model has a loss function defined on its parameter vector $\mathbf{w}$ for each data sample $j$. The loss function captures the error of the model on the training data, and the model learning process is to minimize the loss function on a collection of training data samples.
For each data sample $j$, we define the loss function as $f(\mathbf{w}, \mathbf{x}_j, y_j)$, which we write as $f_j(\mathbf{w})$ in short\footnote{Note that some unsupervised models (such as K-means) only learn on $\mathbf{x}_j$ and do not require the existence of $y_j$ in the training data. In such cases, the loss function value only depends on $\mathbf{x}_j$.}. 

\begin{table} \protect\caption{Loss functions for popular machine learning models}  \label{tab:learningModels} 

\vspace{-0.15in}

\renewcommand{\arraystretch}{1.4} 

\center{\footnotesize

\begin{tabularx}{\linewidth}
{>{\setlength\hsize{0.5\hsize}}X >{\setlength\hsize{1.5\hsize}}X} 
\hline Model & Loss function $f(\mathbf{w}, \mathbf{x}_j, y_j)$  ($\triangleq f_j (\mathbf{w})$)\\ 
\hline 

Squared-SVM & $\frac{\lambda}{2} \Vert \mathbf{w} \Vert^2  + \frac{1}{2} \max \left\{ 0; 1 - y_j \mathbf{w}^\mathrm{T} \mathbf{x}_j \right\}^2 $  ($\lambda$ is const.) \\
Linear regression & $\frac{1}{2} \Vert y_j - \mathbf{w}^\mathrm{T} \mathbf{x}_j \Vert^2 $  \\
K-means & $ \frac{1}{2} \min_l \Vert \mathbf{x}_j - \mathbf{w}_{(l)}  \Vert^2 $ where $\mathbf{w} \triangleq [\mathbf{w}_{(1)}^\mathrm{T}, \mathbf{w}_{(2)}^\mathrm{T}, ...]^\mathrm{T}$ \\
Convolutional neural network & Cross-entropy on cascaded linear and non-linear transforms, see \cite{Goodfellow-et-al-2016} \\
\hline
\end{tabularx} 

}

\end{table}

Examples of loss functions of popular machine learning models are summarized\footnote{While our focus is on non-probabilistic learning models, similar loss functions can be defined for probabilistic models where the goal is to minimize the negative of the log-likelihood function, for instance.} in Table~\ref{tab:learningModels} \cite{shalev2014understanding,Goodfellow-et-al-2016,bottou2010large}.
For convenience, we assume that all vectors are column vectors in this paper and use $\mathbf{x}^\mathrm{T}$ to denote the transpose of $\mathbf{x}$. We use ``$\triangleq$'' to denote ``is defined to be equal to'' and use $\Vert \cdot \Vert$ to denote the $\mathcal{L}^2$ norm.

Assume that we have $N$ edge nodes with local datasets $\mathcal{D}_{1},\mathcal{D}_{2},...,\mathcal{D}_{i},...,\mathcal{D}_{N}$. For each dataset $\mathcal{D}_{i}$ at node~$i$, the loss function on the collection of data samples at this node is
\begin{equation}
F_{i}(\mathbf{w}) \triangleq \frac{1}{\left|\mathcal{D}_{i}\right|}\sum_{j\in\mathcal{D}_{i}}f_{j}(\mathbf{w}).
\label{eq:localLossFuncAllSamples}
\end{equation}
We define $D_{i}\triangleq\left|\mathcal{D}_{i}\right|$, where $| \cdot |$ denotes the size of the set, and $D\triangleq\sum_{i=1}^{N}D_{i}$.
Assuming $\mathcal{D}_{i} \cap \mathcal{D}_{i'} = \emptyset$ for $i \neq i'$,  we define the global loss function on all the distributed datasets as
\begin{equation}
F(\mathbf{w}) \triangleq  \frac{\sum_{j\in\cup_i\mathcal{D}_{i}}f_{j}(\mathbf{w})}{\left|\cup_i\mathcal{D}_{i}\right|} = \frac{\sum_{i=1}^{N}D_{i}F_{i}(\mathbf{w})}{D}.
\label{eq:globalLossFuncAllSamples}
\end{equation}
Note that $F(\mathbf{w})$ \emph{cannot} be directly computed without sharing information among multiple nodes.

\subsection{The Learning Problem}

The learning problem is to minimize $F(\mathbf{w})$, i.e., to find
\begin{equation}
\mathbf{w}^{*} \triangleq \arg\min F(\mathbf{w}).
\label{eq:learningProblem}
\end{equation}
Due to the inherent complexity of most machine learning models, it is usually impossible to find a closed-form solution to (\ref{eq:learningProblem}). Thus, (\ref{eq:learningProblem}) is often solved using gradient-descent techniques.

\subsection{Distributed Gradient Descent}
\label{subsec:distGradDescent}

We present a canonical distributed gradient-descent algorithm to solve (\ref{eq:learningProblem}), which is widely used in state-of-the-art federated learning systems (e.g., \cite{mcmahan2016communication}).
Each node~$i$ has its local model parameter $\mathbf{w}_{i}{(t)}$, where $t=0,1,2,...$ denotes the iteration index. At $t=0$, the local parameters 
for all nodes $i$ are initialized to the same value. For $t>0$, new values of $\mathbf{w}_{i}{(t)}$ are computed according to a gradient-descent update rule on the local loss function, based on the parameter value in the previous iteration $t-1$. This gradient-descent step on the local loss function (defined on the local dataset) at each node is referred to as the \emph{local update}.
After one or multiple local updates, a \emph{global aggregation} is performed through the aggregator to update the local parameter at each node to the weighted average of all nodes' parameters.
We define that each \emph{iteration} includes a local update step which is possibly followed by a global aggregation step.

After global aggregation, the local parameter $\mathbf{w}_{i}{(t)}$ at each node~$i$ usually changes. For convenience, we use $\widetilde{\mathbf{w}}_{i}{(t)}$ to denote the parameter at node~$i$ after possible global aggregation. 
If no aggregation is performed at iteration $t$, we have $\widetilde{\mathbf{w}}_{i}{(t)}=\mathbf{w}_{i}{(t)}$.
If aggregation is performed at iteration $t$, then generally
$\widetilde{\mathbf{w}}_{i}{(t)} \neq \mathbf{w}_{i}{(t)}$ and we set $\widetilde{\mathbf{w}}_{i}{(t)} = \mathbf{w}{(t)}$, where $\mathbf{w}{(t)}$ is a weighted average of $\mathbf{w}_{i}{(t)}$ defined in (\ref{eq:globalAverage}) below. An example of these definitions is shown in Fig.~\ref{fig:wVariables}.

\begin{figure}
\centering
\includegraphics[width=0.48\textwidth]{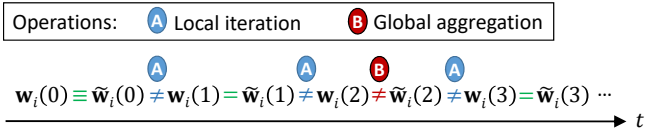}
\caption{Illustration of the values of $\mathbf{w}_i(t)$ and $\widetilde{\mathbf{w}}_i(t)$ at node $i$.}
\label{fig:wVariables}
\end{figure}

The local update in each iteration is performed on the parameter after possible global aggregation in the previous iteration. For each node~$i$, the update rule is as follows:
\begin{align}
\mathbf{w}_{i}{(t)}=\widetilde{\mathbf{w}}_{i}{(t-1)}-\eta\nabla F_{i}\left(\widetilde{\mathbf{w}}_{i}{(t-1)}\right)\label{eq:localUpdate}
\end{align} 
where $\eta > 0$ is the step size.
For any iteration $t$ (which may or may not include a global aggregation step), we define 
\begin{equation}
\mathbf{w}{(t)}=\frac{\sum_{i=1}^{N}D_{i}\mathbf{w}_{i}{(t)}}{D}\label{eq:globalAverage}.
\end{equation}
This global model parameter $\mathbf{w}{(t)}$ is \emph{only observable to nodes in the system if  global aggregation is performed at iteration $t$}, but we define it for all $t$ to facilitate the analysis later. 

We define that the system performs $\tau$ steps of local updates at each node between every two global aggregations. We define $T$ as the total number of local iterations at each node.
For ease of presentation, we assume that $T$ is an integer multiple of $\tau$ in the theoretical analysis, which will be relaxed when we discuss practical aspects in Section~\ref{subsec:controlAlgImpl}.
The logic of distributed gradient descent is presented in Algorithm~\ref{alg:distGradDescent}, which ignores aspects related to the communication between the aggregator and edge nodes. Such aspects will be discussed later in Section~\ref{subsec:controlAlgImpl}.

\begin{algorithm}[t]
\caption{Distributed gradient descent (logical view)} 
\label{alg:distGradDescent} 

{\footnotesize

\KwIn{$\tau$, $T$}
\KwOut{Final model parameter $\mathbf{w}^\mathrm{f}$}
Initialize $\mathbf{w}^\mathrm{f}$, $\mathbf{w}_{i}{(0)}$ and $\widetilde{\mathbf{w}}_{i}{(0)}$ to the same value for all $i$;

\For{$t = 1,2,...,T$}
{
    For each node $i$ \emph{in parallel}, compute local update  using (\ref{eq:localUpdate});   \label{alg:distGradDescent:LocalUpdate}
            
    \If{$t$ is an integer multiple of $\tau$}
    {
        Set $\widetilde{\mathbf{w}}_{i}{(t)} \leftarrow \mathbf{w}{(t)}$ for all $i$, where $\mathbf{w}{(t)}$ is defined in (\ref{eq:globalAverage});  //Global aggregation    \label{alg:distGradDescent:TildeSetGlobalAgg}

        Update $\mathbf{w}^\mathrm{f} \leftarrow \arg\min_{\mathbf{w}\in\{\mathbf{w}^\mathrm{f},\mathbf{w}{(t)}\}}F(\mathbf{w})$;  \label{alg:distGradDescent:minStep}
    }
    \Else
    {  Set $\widetilde{\mathbf{w}}_{i}{(t)} \leftarrow \mathbf{w}_{i}{(t)}$ for all $i$;  //No global aggregation   \label{alg:distGradDescent:TildeSet}
    }

}

}
\end{algorithm}

The final model parameter $\mathbf{w}^\mathrm{f}$ obtained from Algorithm~\ref{alg:distGradDescent} is the one that has produced the minimum global loss after each global aggregation throughout the entire execution of the algorithm. We use $\mathbf{w}^\mathrm{f}$ instead of $\mathbf{w}(T)$, to align with the theoretical convergence bound that will be presented in Section~\ref{sec:mainResults}. In practice, we have seen that $\mathbf{w}^\mathrm{f}$ and $\mathbf{w}(T)$ are usually the same, but using $\mathbf{w}^\mathrm{f}$ provides theoretical rigor in terms of convergence guarantee so we use $\mathbf{w}^\mathrm{f}$ in this paper. Note that $F(\mathbf{w})$ in Line~\ref{alg:distGradDescent:minStep} of Algorithm~\ref{alg:distGradDescent} is computed in a distributed manner according to (\ref{eq:globalLossFuncAllSamples}); the details will be presented later.

The rationale behind Algorithm~\ref{alg:distGradDescent} is that when $\tau=1$, i.e., when we perform global aggregation after every local update step, the distributed gradient descent (ignoring communication aspects) is equivalent to the centralized gradient descent, where the latter assumes that all data samples are available at a centralized location and the global loss function and its gradient can be observed directly. 
This is due to the linearity of the gradient operator. 
\if\citetechreport1
Due to space limitation, see our online technical report \cite[Appendix~\ref{append:optimalityOfDistributedGradDescent}]{JournalTechReport} as well as \cite{FUSION2018} for detailed discussions about this.
\else
See Appendix~\ref{append:optimalityOfDistributedGradDescent} as well as \cite{FUSION2018} for detailed discussions about this.
\fi

The main notations in this paper are summarized in Table~\ref{tab:MainNotations}.

\begin{table}[t]
\caption {Summary of main notations} \label{tab:MainNotations} 
\vspace{-0.15in}
{\footnotesize
\begin{center}
\begin{tabularx}{\linewidth}
{>{\setlength\hsize{0.2\hsize}}X >{\setlength\hsize{1.8\hsize}}X} 
\hline
$F(\mathbf{w})$ & Global loss function \\
$F_i(\mathbf{w})$ & Local loss function for node $i$ \\
$t$ & Iteration index \\
$\mathbf{w}_i(t)$ & Local model parameter at node $i$ in iteration $t$ \\
$\mathbf{w}(t)$ & Global model parameter in iteration $t$ \\
$\mathbf{w}^\mathrm{f}$ & Final model parameter obtained at the end of learning process\\
$\mathbf{w}^*$ & True optimal model parameter that minimizes $F(\mathbf{w})$\\
$\eta$ & Gradient descent step size\\
$\tau$ & Number of local update steps between two global aggregations \\
$T$ & Total number of local update steps at each node \\
$K$ & Total number of global aggregation steps, equal to $T/\tau$ \\
$M$ ($m$)\! & Total number of resource types (the $m$-th type of resource) \\
$R_m$ & Total budget of the $m$-th type of resource \\
$c_m$ & Consumption of type-$m$ resource in one lo\underline{c}al update step\\
$b_m$ & Consumption of type-$m$ resource in one glo\underline{b}al aggregation step\\
$\rho$ & Lipschitz parameter of $F_i(\mathbf{w})$ ($\forall i$) and $F(\mathbf{w})$ \\
$\beta$ & Smoothness parameter of $F_i(\mathbf{w})$ ($\forall i$) and $F(\mathbf{w})$ \\
$\delta$ & Gradient divergence \\
$h(\tau)$ & Function defined in (\ref{eq:hFuncDef}), gap between the model parameters obtained from distributed and centralized gradient descents\\
$\varphi$ & Constant defined in Lemma~\ref{lemma:convergenceUpperBound}, control parameter \\
$G(\tau)$ & Function defined in (\ref{eq:GTauDef}), control objective \\
$\tau^*$ & Optimal $\tau$ obtained by minimizing $G(\tau)$ \\
\hline
\end{tabularx}
\end{center}
}
\end{table}

\section{Problem Formulation}
\label{sec:ProblemFormulation}

When there is a large amount of data (which is usually needed for training an accurate model) distributed at a large number of nodes, the federated learning process can consume a significant amount of resources. The notion of ``resources'' here is generic and can include time, energy, monetary cost etc. \emph{related to both computation and communication}.
One often has to limit the amount of resources used for learning each model, in order not to backlog the system and to keep the operational cost low.
This is particularly important in edge computing environments where the computation and communication resources are not as abundant as in datacenters.

Therefore, a natural question is how to make efficient use of a given amount of resources to minimize the loss function of model training.
For the distributed gradient-descent based learning approach presented above, the question narrows down to determining the optimal values of $T$ and $\tau$, so that the global loss function is minimized subject to a given resource constraint for this learning task.

We use $K$ to denote the total number of global aggregations within $T$ iterations. Because we assumed earlier that $T$ is an integer multiple of $\tau$, we have $K=\frac{T}{\tau}$.
We define 
\begin{equation}
\label{eq:wminDef}
\mathbf{w}^\mathrm{f} \triangleq \argmin_{\mathbf{w} \in \{\mathbf{w}(k\tau): k=0, 1, 2, ..., K\}} F(\mathbf{w}).
\end{equation}
It is easy to verify that this definition is equivalent to $\mathbf{w}^\mathrm{f}$ found from Algorithm~\ref{alg:distGradDescent}.

To compute $F(\mathbf{w})$ in (\ref{eq:wminDef}), each node $i$ first computes $F_i(\mathbf{w})$ and sends the result to the aggregator, then the aggregator computes $F(\mathbf{w})$ according to (\ref{eq:globalLossFuncAllSamples}). Since each node only knows the value of $\mathbf{w}(k\tau)$ after the $k$-th global aggregation,  $F_i(\mathbf{w}(k\tau))$ at node $i$ will be sent back to the aggregator at the $(k+1)$-th global aggregation, and the aggregator computes $F(\mathbf{w}(k\tau))$ afterwards. 
To compute the last loss value $F(\mathbf{w}(K\tau)) = F(\mathbf{w}(T))$, an additional round of local and global update is performed at the end.
We assume that at each node, local update consumes the same amount of resource no matter whether only the local loss is computed (in the last round) or both the local loss and gradient are computed (in all the other rounds), because the loss and gradient computations can usually be based on the same intermediate result. For example, the back propagation approach for computing gradients in neural networks requires a forward propagation procedure that essentially obtains the loss as an intermediate step \cite{Goodfellow-et-al-2016}.

We consider $M$ different types of resources. For example, one type of resource can be time, another type can be energy, a third type can be communication bandwidth, etc. For each $m\in\{1,2,...,M\}$, we define that each lo\underline{c}al update step at \emph{all} nodes consumes $c_m$ units of type-$m$ resource, and each glo\underline{b}al aggregation step consumes $b_m$ units of type-$m$ resource, where $c_m\geq 0$ and $b_m\geq 0$ are both \emph{finite} real numbers. 
For given $T$ and $\tau$, the total amount of consumed type-$m$ resource is $(T+1)c_m + \left(K + 1\right)b_m$, where the additional ``$+1$'' is for computing $F(\mathbf{w}(K\tau))$, as discussed above.

Let $R_m$ denote the total budget of type-$m$ resource. 
We seek the solution to the following problem:
\begin{align}
\min_{\tau, K \in \{1,2,3,...\}} & \,\, F(\mathbf{w}^\mathrm{f}) \label{eq:optimizationProblem} \\
\textrm{s.t.} & \,\, (T\!+\! 1)c_m \!+\! \left(K\! +\! 1\right)b_m \leq R_m \, , \,\, \forall m\!\in\! \{\!1,...,M\!\} \nonumber \\
& \,\,T=K\tau . \nonumber
\end{align}
To solve (\ref{eq:optimizationProblem}), we need to find out how  $\tau$ and $K$ (and thus $T$) affect the loss function $F(\mathbf{w}^\mathrm{f})$ computed on the final model parameter $\mathbf{w}^\mathrm{f}$.
It is generally impossible to find an exact analytical expression to relate $\tau$ and $K$ with $F(\mathbf{w}^\mathrm{f})$, because it depends on the convergence property of gradient descent (for which only upper/lower bounds are known \cite{convex}) and the impact of the global aggregation frequency on the convergence. Further, the resource consumptions $c_m$ and $b_m$ can be time-varying in practice which makes the problem even more challenging than (\ref{eq:optimizationProblem}) alone.

We analyze the convergence bound of distributed gradient descent (Algorithm~\ref{alg:distGradDescent}) in Section~\ref{sec:convergenceAnalysis}, then use this bound to approximately solve (\ref{eq:optimizationProblem}) and propose a control algorithm for adaptively choosing the best values of $\tau$ and $T$ to achieve near-optimal resource utilization in Section~\ref{sec:schedulingAlgorithm}.

\section{Convergence Analysis}
\label{sec:convergenceAnalysis}

We analyze the convergence of Algorithm~\ref{alg:distGradDescent} in this section and find an upper bound of $F(\mathbf{w}^\mathrm{f})-F(\mathbf{w}^*)$. To facilitate the analysis, we first introduce some notations.

\subsection{Definitions}

We can divide the $T$ iterations into $K$ different intervals, as shown in Fig.~\ref{fig:interval}, with only the first and last iterations in each interval containing global aggregation. We use the shorthand notations $[k]$ to denote the iteration interval\footnote{With slight abuse of notation, we use $[(k-1)\tau,k\tau]$ to denote the integers contained in the interval for simplicity. We use the same convention in other parts of the paper as long as there is no ambiguity.} $[(k-1)\tau,k\tau]$, for $k=1,2,...,K$.

\begin{figure}
\centering
\includegraphics[width=0.47\textwidth]{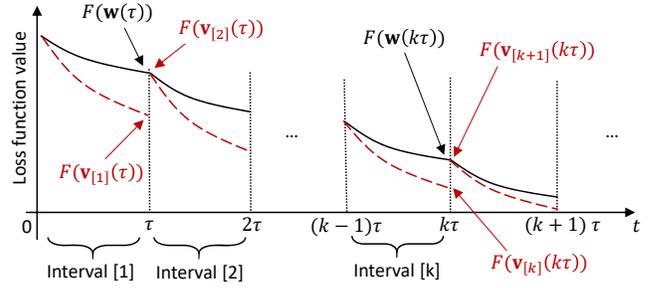}
\caption{Illustration of definitions in different intervals.}
\label{fig:interval}
\end{figure}

For each interval $[k]$, we use $\mathbf{v}_{[k]}(t)$ to denote an auxiliary parameter vector that follows a \emph{centralized} gradient descent according to
\begin{align}
\mathbf{v}_{[k]}{(t)}=\mathbf{v}_{[k]}{(t-1)}-\eta\nabla F(\mathbf{v}_{[k]}{(t-1)})\label{eq:updateV}
\end{align}
where $\mathbf{v}_{[k]}(t)$ is only defined for $t \in [(k-1)\tau,k\tau]$ for a given $k$. 
This update rule is based on the global loss function $F(\mathbf{w})$ which is only observable when all data samples are available at a central place (thus we call it centralized gradient descent), whereas the iteration in (\ref{eq:localUpdate}) is on the local loss function $F_i(\mathbf{w})$.

We define that $\mathbf{v}_{[k]}(t)$ is ``synchronized'' with $\mathbf{w}(t)$ at the beginning of each interval $[k]$, i.e., $\mathbf{v}_{[k]}((k-1)\tau) \triangleq \mathbf{w}((k-1)\tau)$, where $\mathbf{w}(t)$ is the average of local parameters defined in (\ref{eq:globalAverage}).
Note that we also have $\widetilde{\mathbf{w}}_{i}{((k-1)\tau)}=\mathbf{w}{((k-1)\tau)}$ for all~$i$ because the global aggregation (or initialization when $k=1$) is performed in iteration $(k-1)\tau$.

The above definitions enable us to find the convergence bound of Algorithm~\ref{alg:distGradDescent} by taking a two-step approach. The first step is to find the gap between $\mathbf{w}(k\tau)$ and $\mathbf{v}_{[k]}(k\tau)$ for each $k$, which is the difference between the distributed and centralized gradient descents after $\tau$ steps of local updates without global aggregation. The second step is to combine this gap with the convergence bound of $\mathbf{v}_{[k]}(t)$ within each interval $[k]$ to obtain the convergence bound of $\mathbf{w}(t)$.

For the purpose of the analysis, we make the following assumption to the loss function.
\begin{assumption}
\label{assumption:Convex}
We assume the following for all $i$:
\begin{enumerate}
\item $F_{i}(\mathbf{w})$ is convex
\item $F_{i}(\mathbf{w})$ is $\rho$-Lipschitz, i.e., $\Vert F_{i}(\mathbf{w}) - F_{i}(\mathbf{w}')  \Vert \leq \rho \Vert \mathbf{w} - \mathbf{w'}  \Vert$ for any $\mathbf{w}, \mathbf{w}'$
\item $F_{i}(\mathbf{w})$ is $\beta$-smooth, i.e., $\left\Vert \nabla F_{i}(\mathbf{w}) - \nabla F_{i}(\mathbf{w}')  \right\Vert \leq \beta \left\Vert \mathbf{w} - \mathbf{w}'  \right\Vert$ for any $\mathbf{w}, \mathbf{w'}$
\end{enumerate}
\end{assumption}
Assumption~\ref{assumption:Convex} is satisfied for squared-SVM and linear regression (see Table~\ref{tab:learningModels}). The experimentation results that will be presented in Section~\ref{sec:experimentation} show that our control algorithm also works well for models (such as neural network) whose loss functions do not satisfy Assumption~\ref{assumption:Convex}.

\begin{lemma}
$F(\mathbf{w})$ is convex, $\rho$-Lipschitz, and $\beta$-smooth.
\end{lemma}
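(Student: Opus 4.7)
The plan is to exploit the fact that $F(\mathbf{w})$ is a convex combination of the local loss functions: from (\ref{eq:globalLossFuncAllSamples}), $F(\mathbf{w}) = \sum_{i=1}^{N} p_i F_i(\mathbf{w})$ where $p_i \triangleq D_i/D \geq 0$ and $\sum_i p_i = 1$. Each of the three properties (convexity, Lipschitz continuity, smoothness) is preserved under nonnegative linear combinations, so the result will follow by applying Assumption~\ref{assumption:Convex} to each $F_i$ term-by-term and then summing.

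Concretely, I would proceed in three short steps. First, for convexity, given $\mathbf{w}, \mathbf{w}'$ and $\alpha \in [0,1]$, apply convexity of each $F_i$ inside the weighted sum: $F(\alpha \mathbf{w} + (1-\alpha) \mathbf{w}') = \sum_i p_i F_i(\alpha \mathbf{w} + (1-\alpha)\mathbf{w}') \leq \sum_i p_i\bigl(\alpha F_i(\mathbf{w}) + (1-\alpha) F_i(\mathbf{w}')\bigr) = \alpha F(\mathbf{w}) + (1-\alpha) F(\mathbf{w}')$. Second, for the $\rho$-Lipschitz property, use the triangle inequality followed by the per-node bound: $\lVert F(\mathbf{w}) - F(\mathbf{w}') \rVert \leq \sum_i p_i \lVert F_i(\mathbf{w}) - F_i(\mathbf{w}') \rVert \leq \sum_i p_i \rho \lVert \mathbf{w} - \mathbf{w}' \rVert = \rho \lVert \mathbf{w} - \mathbf{w}' \rVert$. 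Third, for $\beta$-smoothness, note that $\nabla F(\mathbf{w}) = \sum_i p_i \nabla F_i(\mathbf{w})$ by linearity of differentiation, so $\lVert \nabla F(\mathbf{w}) - \nabla F(\mathbf{w}') \rVert \leq \sum_i p_i \lVert \nabla F_i(\mathbf{w}) - \nabla F_i(\mathbf{w}') \rVert \leq \beta \lVert \mathbf{w} - \mathbf{w}' \rVert$, again using $\sum_i p_i = 1$.

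There is no real obstacle here: the lemma is essentially a bookkeeping observation that the three properties are closed under convex combinations, and the weights $p_i = D_i/D$ form a valid probability vector. The only thing worth being careful about is ensuring that the gradient of the sum equals the sum of the gradients, which is immediate since each $F_i$ is differentiable (as implied by $\beta$-smoothness in Assumption~\ref{assumption:Convex}). Hence no delicate argument is needed beyond the triangle inequality and the normalization $\sum_i p_i = 1$.
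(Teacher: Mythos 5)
Your proof is correct and is precisely the argument the paper intends: its own proof simply states that the result follows "straightforwardly from Assumption~\ref{assumption:Convex}, the definition of $F(\mathbf{w})$, and triangle inequality," which is exactly the convex-combination-plus-triangle-inequality bookkeeping you have written out. No differences to report.
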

\begin{proof}
Straightforwardly from Assumption~\ref{assumption:Convex}, the definition of $F(\mathbf{w})$, and triangle inequality.
\end{proof}

We also define the following metric to capture the \emph{divergence} between the gradient of a local loss function and the gradient of the global loss function. This divergence is \emph{related to how the data is distributed at different nodes}.

\begin{definition}
\label{def:gradientDivergence}
(Gradient Divergence)
For any  $i$ and $\mathbf{w}$, we define $\delta_i$  as an upper bound of $\left\Vert \nabla F_{i}(\mathbf{w})-\nabla F(\mathbf{w})\right\Vert$, i.e.,
\begin{equation}
\left\Vert \nabla F_{i}(\mathbf{w})-\nabla F(\mathbf{w})\right\Vert \leq \delta_i  .
\label{eq:deltaiDef}
\end{equation}
We also define $\delta \triangleq \frac{\sum_{i}D_{i}\delta_{i}}{D}$.
\end{definition}

\subsection{Main Results}
\label{sec:mainResults}

The below theorem gives an upper bound on the difference between $\mathbf{w}(t)$ and $\mathbf{v}_{[k]}(t)$ when $t$ is within the interval $[k]$.

\begin{theorem} \label{theorem:wBound}
For any interval $[k]$ and $t\in [k]$, we have
\begin{equation}
\left\Vert \mathbf{w}{(t)}-\mathbf{v}_{[k]}{(t)}\right\Vert \leq h(t-(k-1)\tau)
\label{eq:wBound}
\end{equation}
where
\begin{equation}
h(x) \triangleq \frac{\delta}{\beta}\left((\eta\beta+1)^{x}-1\right)-\eta\delta x 
\label{eq:hFuncDef}
\end{equation}
for any $x=0, 1, 2, ...$. 

Furthermore, as $F(\cdot )$ is $\rho$-Lipschitz, we have
$F(\mathbf{w}{(t)})-F(\mathbf{v}_{[k]}(t))\leq \rho h(t - (k-1)\tau)$.
\end{theorem}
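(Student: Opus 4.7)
}

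My plan is to prove the bound by induction on $t' \triangleq t-(k-1)\tau \in \{0,1,\dots,\tau\}$, but in two stages: first bound each node's individual deviation from the centralized trajectory $\mathbf{v}_{[k]}$, then use an averaging step that exploits the identity $\tfrac{1}{D}\sum_i D_i\nabla F_i(\mathbf{w}) = \nabla F(\mathbf{w})$ to obtain the sharper $-\eta\delta\, t'$ correction that distinguishes $h(t')$ from the cruder geometric bound.

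The base case $t'=0$ is immediate: by construction $\mathbf{v}_{[k]}((k-1)\tau)=\mathbf{w}((k-1)\tau)$, and because a global aggregation (or the initialization, for $k=1$) has just synchronized every node, $\widetilde{\mathbf{w}}_i((k-1)\tau)=\mathbf{w}((k-1)\tau)$ for all $i$, so both $\|\mathbf{w}(t)-\mathbf{v}_{[k]}(t)\|$ and $\|\mathbf{w}_i(t)-\mathbf{v}_{[k]}(t)\|$ vanish, matching $h(0)=0$.

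For the inductive step, fix $t'\ge 1$ so no aggregation has occurred since the start of interval $[k]$ and therefore $\widetilde{\mathbf{w}}_i(t-1)=\mathbf{w}_i(t-1)$. Subtracting the centralized update (\ref{eq:updateV}) from the local update (\ref{eq:localUpdate}) at node $i$ and applying the triangle inequality together with $\beta$-smoothness of $F_i$ and the gradient-divergence bound $\|\nabla F_i(\mathbf{v}_{[k]}(t-1))-\nabla F(\mathbf{v}_{[k]}(t-1))\|\le\delta_i$, I expect
\begin{equation*}
\|\mathbf{w}_i(t)-\mathbf{v}_{[k]}(t)\| \le (1+\eta\beta)\,\|\mathbf{w}_i(t-1)-\mathbf{v}_{[k]}(t-1)\| + \eta\,\delta_i .
\end{equation*}
Taking the $D_i/D$-weighted average over $i$, the quantity $B(t')\triangleq \tfrac{1}{D}\sum_i D_i\|\mathbf{w}_i(t)-\mathbf{v}_{[k]}(t)\|$ satisfies $B(t') \le (1+\eta\beta)B(t'-1)+\eta\delta$ with $B(0)=0$, which I can unroll in closed form to $B(t') \le \tfrac{\delta}{\beta}\bigl((1+\eta\beta)^{t'}-1\bigr)$.

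The second stage is where the key cancellation happens. Writing $\mathbf{w}(t)=\tfrac{1}{D}\sum_i D_i\mathbf{w}_i(t)$ and adding and subtracting $\tfrac{\eta}{D}\sum_i D_i\nabla F_i(\mathbf{v}_{[k]}(t-1))$, the cross term collapses because $\tfrac{1}{D}\sum_i D_i\nabla F_i(\mathbf{v}_{[k]}(t-1))=\nabla F(\mathbf{v}_{[k]}(t-1))$, leaving
\begin{equation*}
\mathbf{w}(t)-\mathbf{v}_{[k]}(t) = \mathbf{w}(t-1)-\mathbf{v}_{[k]}(t-1) - \tfrac{\eta}{D}\!\sum_i D_i\bigl[\nabla F_i(\mathbf{w}_i(t-1)) - \nabla F_i(\mathbf{v}_{[k]}(t-1))\bigr].
\end{equation*}
Using $\beta$-smoothness of each $F_i$ bounds the bracket by $\beta B(t'-1)$, so $\|\mathbf{w}(t)-\mathbf{v}_{[k]}(t)\| \le \|\mathbf{w}(t-1)-\mathbf{v}_{[k]}(t-1)\| + \eta\delta\bigl((1+\eta\beta)^{t'-1}-1\bigr)$. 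Telescoping this from $t'=0$ and summing the geometric series produces exactly $\tfrac{\delta}{\beta}\bigl((1+\eta\beta)^{t'}-1\bigr)-\eta\delta\, t'=h(t')$, proving (\ref{eq:wBound}); the $\rho$-Lipschitz consequence on $F(\mathbf{w}(t))-F(\mathbf{v}_{[k]}(t))$ is then a one-line application of Assumption~\ref{assumption:Convex}(2).

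The subtle step, and the one I expect to be the main obstacle if one tries a more direct approach, is the two-stage structure. A naive one-shot induction on $\|\mathbf{w}(t)-\mathbf{v}_{[k]}(t)\|$ loses the averaging cancellation and yields only $\tfrac{\delta}{\beta}\bigl((1+\eta\beta)^{t'}-1\bigr)$, missing the tighter $-\eta\delta\, t'$ term. The improvement relies on recognizing that the $\delta_i$-type divergence enters the recursion for $B(t')$ only once per step, and then the recursion for the \emph{averaged} deviation $\|\mathbf{w}(t)-\mathbf{v}_{[k]}(t)\|$ has no additional $\delta$ source term at all, since the average-of-local-gradients agrees with the global gradient at the common anchor point $\mathbf{v}_{[k]}(t-1)$.
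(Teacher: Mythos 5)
Your proposal is correct and follows essentially the same route as the paper: the per-node recursion $\|\widetilde{\mathbf{w}}_i(t)-\mathbf{v}_{[k]}(t)\|\le(1+\eta\beta)\|\widetilde{\mathbf{w}}_i(t-1)-\mathbf{v}_{[k]}(t-1)\|+\eta\delta_i$ unrolled to a geometric bound is precisely the paper's Lemma~\ref{lemma:wiBound}, and your second stage — exploiting $\tfrac{1}{D}\sum_i D_i\nabla F_i(\mathbf{v}_{[k]}(t-1))=\nabla F(\mathbf{v}_{[k]}(t-1))$ so that the averaged recursion has no fresh $\delta$ source and telescopes to $h(t')$ — is exactly the paper's proof of Theorem~\ref{theorem:wBound}. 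The only cosmetic difference is that you average the per-node recursion before unrolling while the paper unrolls each node individually and averages afterward; these are equivalent.
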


\begin{proof}
We first obtain an upper bound of $\left\Vert \widetilde{\mathbf{w}}_{i}{(t)}-\mathbf{v}_{[k]}{(t)}\right\Vert$ for each node $i$, based on which the final result is obtained.
\if\citetechreport1
See our online technical report \cite[Appendix~\ref{append:proofWBound}]{JournalTechReport} for details.
\else
For details, see Appendix~\ref{append:proofWBound}.
\fi
\end{proof}

Note that we always have $\eta>0$ and $\beta>0$ because otherwise the gradient descent procedure or the loss function becomes trivial. Therefore, we have $(\eta\beta + 1)^x \geq \eta\beta x + 1$ for $x = 0, 1, 2, ...$ due to Bernoulli's inequality. Substituting this into (\ref{eq:hFuncDef}) confirms that we always have $h(x) \geq 0$.

It is easy to see that $h(0) = h(1) = 0$.
Therefore, when $t = (k-1)\tau$, i.e., at the beginning of the interval $[k]$, the upper bound in (\ref{eq:wBound}) is zero. This is consistent with the definition of $\mathbf{v}_{[k]}((k-1)\tau) = \mathbf{w}((k-1)\tau)$ for any $k$. When $t = (k-1)\tau + 1$ (i.e., the second iteration in interval $[k]$), the upper bound in (\ref{eq:wBound}) is also zero. This agrees with the discussion at the end of Section~\ref{subsec:distGradDescent}, showing that there is no gap between distributed and centralized gradient descents when only one local update is performed after the global aggregation.
If $\tau = 1$, then $t - (k-1)\tau$ is either $0$ or $1$ for any interval $[k]$ and $t \in [k]$. Hence, the upper bound in (\ref{eq:wBound}) becomes exact for $\tau = 1$.

For $\tau > 1$, the value of $x = t - (k-1)$ can be larger. When $x$ is large, the exponential term with $(\eta\beta+1)^{x}$ in (\ref{eq:hFuncDef}) becomes dominant, and the gap between $\mathbf{w}(t)$ and $\mathbf{v}_{[k]}(t)$ can increase exponentially with $t$ for $t \in [k]$.
We also note that $h(x)$ is proportional to the gradient divergence $\delta$ (see (\ref{eq:hFuncDef})), which is intuitive because the more the local gradient is different from the global gradient (for the same parameter $\mathbf{w}$), the larger the gap will be. The gap is caused by the difference in the local gradients at different nodes starting at the second local update after each global aggregation. In an extreme case when all nodes have exactly the same data samples (and thus the same local loss functions), the gradients will be always the same and $\delta = 0$, in which case $\mathbf{w}(t)$ and $\mathbf{v}_{[k]}(t)$ are always equal.

Theorem~\ref{theorem:wBound} gives an upper bound of the difference between distributed and centralized gradient descents for each iteration interval $[k]$, assuming that $\mathbf{v}_{[k]}(t)$ in the centralized gradient descent is synchronized with $\mathbf{w}(t)$ at the beginning of each $[k]$.
Based on this result, we first obtain the following lemma.

\begin{lemma}
\label{lemma:convergenceUpperBound}
When all the following conditions are satisfied:
\begin{enumerate}
\item $\eta \leq \frac{1}{\beta}$
\item $\eta\varphi  -\frac{\rho h(\tau)}{\tau\varepsilon^2} > 0$
\item $ F\left(\mathbf{v}_{[k]}(k\tau)\right)-F(\mathbf{w}^*) \geq \varepsilon$ for all $k$ 
\item $ F\left(\mathbf{w}(T)\right)-F(\mathbf{w}^*) \geq \varepsilon$
\end{enumerate}
for some $\varepsilon > 0$, where we define $\varphi \triangleq \omega \left(1-\frac{\beta \eta}{2}\right)$ and $\omega \triangleq \min_k \frac{1}{{\left\Vert \mathbf{v}_{[k]}((k-1)\tau)-\mathbf{w}^*\right\Vert^2 }}$,
then the convergence upper bound of Algorithm~\ref{alg:distGradDescent} after $T$ iterations is given by
\begin{equation}
F(\mathbf{w}(T))-F(\mathbf{w}^*) \leq \frac{1}{T\left(\eta\varphi -\frac{\rho h(\tau)}{\tau\varepsilon^2}\right)}.
\label{eq:convergenceUpperBound}
\end{equation}
\end{lemma}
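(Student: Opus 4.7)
The plan is to set $\theta_k\triangleq F(\mathbf{w}(k\tau))-F(\mathbf{w}^*)$ and $\alpha_k\triangleq F(\mathbf{v}_{[k]}(k\tau))-F(\mathbf{w}^*)$ and to derive a per-interval recursion of the form $\frac{1}{\theta_k}-\frac{1}{\theta_{k-1}}\geq \eta\varphi\tau-\frac{\rho h(\tau)}{\varepsilon^2}$, then telescope over $k=1,\ldots,K$. Since the synchronization convention gives $\mathbf{v}_{[k]}((k-1)\tau)=\mathbf{w}((k-1)\tau)$, we have $F(\mathbf{v}_{[k]}((k-1)\tau))-F(\mathbf{w}^*)=\theta_{k-1}$, so the auxiliary centralized iterates and the distributed ones only need to be compared at the end of each interval via Theorem~\ref{theorem:wBound}.

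For the \textbf{centralized part}, within interval $[k]$ the iterates $\mathbf{v}_{[k]}(t)$ perform pure gradient descent on the convex $\beta$-smooth function $F$ with step size $\eta\leq 1/\beta$. The classical descent lemma yields $F(\mathbf{v}_{[k]}(t))\leq F(\mathbf{v}_{[k]}(t-1))-\eta(1-\beta\eta/2)\Vert\nabla F(\mathbf{v}_{[k]}(t-1))\Vert^2$, while convexity gives $F(\mathbf{v}_{[k]}(t-1))-F(\mathbf{w}^*)\leq \Vert\nabla F(\mathbf{v}_{[k]}(t-1))\Vert\cdot\Vert\mathbf{v}_{[k]}(t-1)-\mathbf{w}^*\Vert$. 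An auxiliary step I would verify is that $\Vert\mathbf{v}_{[k]}(t)-\mathbf{w}^*\Vert$ is non-increasing in $t$ under $\eta\leq 1/\beta$, which is a standard fact for GD on convex smooth objectives; this upper-bounds the distance by $\Vert\mathbf{v}_{[k]}((k-1)\tau)-\mathbf{w}^*\Vert^2\leq 1/\omega$ uniformly in $t$. Substituting the gradient-norm lower bound into the descent lemma, dividing by the product of successive suboptimality gaps, and rearranging gives $\frac{1}{F(\mathbf{v}_{[k]}(t))-F(\mathbf{w}^*)}-\frac{1}{F(\mathbf{v}_{[k]}(t-1))-F(\mathbf{w}^*)}\geq \eta\varphi$; telescoping this over the $\tau$ steps of interval $[k]$ produces $\frac{1}{\alpha_k}-\frac{1}{\theta_{k-1}}\geq \eta\varphi\tau$.

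To incorporate the \textbf{distributed-vs-centralized gap}, Theorem~\ref{theorem:wBound} combined with $\rho$-Lipschitz continuity of $F$ gives $\theta_k\leq \alpha_k+\rho h(\tau)$. The elementary inequality $\frac{1}{a+b}\geq \frac{1}{a}-\frac{b}{a^2}$ for $a,b>0$, applied with $a=\alpha_k$ and $b=\rho h(\tau)$, together with $\alpha_k\geq \varepsilon$ from condition~3, yields $\frac{1}{\theta_k}\geq \frac{1}{\alpha_k}-\frac{\rho h(\tau)}{\varepsilon^2}$. Chaining this with the centralized bound produces $\frac{1}{\theta_k}-\frac{1}{\theta_{k-1}}\geq \eta\varphi\tau-\frac{\rho h(\tau)}{\varepsilon^2}$, which is strictly positive by condition~2 and in particular implies $\theta_k\leq \theta_{k-1}$; condition~4 then guarantees $\theta_K\geq \varepsilon>0$ so that every $1/\theta_k$ is well-defined. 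Telescoping from $k=1$ to $K$ and using $T=K\tau$ yields $\frac{1}{\theta_K}\geq T\left(\eta\varphi-\frac{\rho h(\tau)}{\tau\varepsilon^2}\right)$, i.e., the claimed bound on $F(\mathbf{w}(T))-F(\mathbf{w}^*)$.

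The main obstacle I expect is the bookkeeping around conditions~3 and~4 and properly justifying that $\Vert\mathbf{v}_{[k]}(t)-\mathbf{w}^*\Vert$ remains bounded by $\Vert\mathbf{v}_{[k]}((k-1)\tau)-\mathbf{w}^*\Vert$ throughout each interval—this is what lets us replace the iterate-dependent distance by the interval-initial distance, whose worst-case reciprocal is exactly $\omega$. Once that monotonicity is in hand, the remainder is a clean combination of the classical $1/T$ rate argument for convex smooth GD with a single application of Theorem~\ref{theorem:wBound} per interval boundary.
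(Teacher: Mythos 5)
Your proposal is correct and follows essentially the same route as the paper's proof: the same two-step decomposition into a per-step reciprocal-gap bound $\frac{1}{\theta_{[k]}(t+1)}-\frac{1}{\theta_{[k]}(t)}\geq\eta\varphi$ for the centralized iterates (via the descent lemma, convexity with Cauchy--Schwarz, and the non-increase of $\Vert\mathbf{v}_{[k]}(t)-\mathbf{w}^*\Vert$), plus one application of Theorem~\ref{theorem:wBound} per interval boundary costing $\rho h(\tau)/\varepsilon^2$ in reciprocal terms, followed by telescoping over $T=K\tau$ steps. The only cosmetic differences are that you arrange the telescoping per interval rather than globally and use the inequality $\frac{1}{a+b}\geq\frac{1}{a}-\frac{b}{a^2}$ in place of the paper's direct bound on $\frac{1}{\theta_{[k+1]}(k\tau)}-\frac{1}{\theta_{[k]}(k\tau)}$, which yield the same $-K\rho h(\tau)/\varepsilon^2$ correction.
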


\begin{proof}
We first analyze the convergence of $F\left(\mathbf{v}_{[k]}(t)\right)$ within each interval $[k]$. Then, we combine this result with the gap between $F(\mathbf{w}{(t)})$ and $F(\mathbf{v}_{[k]}(t))$ from Theorem~\ref{theorem:wBound} to obtain the final result. 
\if\citetechreport1
See our online technical report \cite[Appendix~\ref{append:proofConvergenceUpperBound}]{JournalTechReport} for details.
\else
For details, see Appendix~\ref{append:proofConvergenceUpperBound}.
\fi
\end{proof}

We then have the following theorem.

\begin{theorem}
\label{theorem:convergenceUpperBoundFinal}
When $\eta \leq \frac{1}{\beta}$, we have
\begin{equation}
F(\mathbf{w}^\mathrm{f}) -F(\mathbf{w}^*)\leq \frac{1}{2\eta\varphi T} + \sqrt{\frac{1}{4\eta^2\varphi^2 T^2} + \frac{\rho h(\tau)}{\eta\varphi\tau}} + \rho h(\tau). \label{eq:convergenceUpperBoundFinal}
\end{equation}
\end{theorem}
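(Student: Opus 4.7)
The plan is to reduce Theorem~\ref{theorem:convergenceUpperBoundFinal} to Lemma~\ref{lemma:convergenceUpperBound} by a case split on whether the hypotheses of that lemma are active, and then to choose the free parameter $\varepsilon$ so that the resulting bound collapses into the closed form on the right-hand side of (\ref{eq:convergenceUpperBoundFinal}). First I would observe that by the definition (\ref{eq:wminDef}) of $\mathbf{w}^\mathrm{f}$, we have $F(\mathbf{w}^\mathrm{f}) \leq F(\mathbf{w}(k\tau))$ for every $k=1,\ldots,K$, and in particular $F(\mathbf{w}^\mathrm{f}) \leq F(\mathbf{w}(T))$. Hence it suffices to control the smallest of the $F(\mathbf{w}(k\tau))$ across $k$, not just $F(\mathbf{w}(T))$.

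Fix some $\varepsilon>0$ to be chosen later and distinguish three possibilities. If condition~3 of Lemma~\ref{lemma:convergenceUpperBound} fails, then there exists a $k$ with $F(\mathbf{v}_{[k]}(k\tau))-F(\mathbf{w}^*)<\varepsilon$; combining this with the Lipschitz corollary of Theorem~\ref{theorem:wBound} applied at $t=k\tau$ (so $x=\tau$) gives $F(\mathbf{w}(k\tau))-F(\mathbf{v}_{[k]}(k\tau))\leq \rho h(\tau)$, and therefore $F(\mathbf{w}^\mathrm{f})-F(\mathbf{w}^*)<\varepsilon+\rho h(\tau)$. If instead condition~4 fails, then $F(\mathbf{w}^\mathrm{f})-F(\mathbf{w}^*)\leq F(\mathbf{w}(T))-F(\mathbf{w}^*)<\varepsilon\leq \varepsilon+\rho h(\tau)$. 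Otherwise conditions~1--4 all hold (using $\eta\leq 1/\beta$, with condition~2 verified below), and Lemma~\ref{lemma:convergenceUpperBound} gives $F(\mathbf{w}(T))-F(\mathbf{w}^*)\leq \frac{1}{T(\eta\varphi-\rho h(\tau)/(\tau\varepsilon^2))}$, which again upper-bounds $F(\mathbf{w}^\mathrm{f})-F(\mathbf{w}^*)$.

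To make all three branches produce the same estimate I would balance them by solving $\varepsilon=\frac{1}{T(\eta\varphi-\rho h(\tau)/(\tau\varepsilon^2))}$, which after clearing denominators becomes the quadratic $T\eta\varphi\,\varepsilon^2-\varepsilon-T\rho h(\tau)/\tau=0$. Its positive root is
\[
\varepsilon \;=\; \frac{1}{2\eta\varphi T} + \sqrt{\frac{1}{4\eta^2\varphi^2 T^2} + \frac{\rho h(\tau)}{\eta\varphi\tau}}.
\]
Because the first summand is strictly positive, $\varepsilon^2 > \frac{\rho h(\tau)}{\eta\varphi\tau}$, which verifies condition~2 of Lemma~\ref{lemma:convergenceUpperBound} and keeps the denominator in its bound strictly positive. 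Substituting this $\varepsilon$ into the common estimate $F(\mathbf{w}^\mathrm{f})-F(\mathbf{w}^*)\leq \varepsilon+\rho h(\tau)$ reproduces exactly (\ref{eq:convergenceUpperBoundFinal}).

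The main obstacle, as I see it, is not any single calculation but the bookkeeping around the $\varepsilon$-dependent hypotheses of Lemma~\ref{lemma:convergenceUpperBound}: the lemma produces a useful rate only when its conditions~3 and~4 hold, so the argument has to absorb their failure gracefully. The reason we work with $\mathbf{w}^\mathrm{f}$ (the minimum over aggregation iterates) rather than with $\mathbf{w}(T)$ alone is exactly so that a failure of condition~3 at \emph{some} intermediate index $k$ remains usable: Theorem~\ref{theorem:wBound} then bridges $\mathbf{v}_{[k]}(k\tau)$ to $\mathbf{w}(k\tau)$ with error at most $\rho h(\tau)$, and the extra ``$+\,\rho h(\tau)$'' in the final bound is precisely the cost of converting a centralized gradient-descent guarantee into a distributed one at an aggregation point.
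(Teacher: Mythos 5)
Your proposal is correct and follows essentially the same route as the paper's proof: both reduce to Lemma~\ref{lemma:convergenceUpperBound}, pick $\varepsilon$ as the positive root of $\varepsilon = \frac{1}{T(\eta\varphi - \rho h(\tau)/(\tau\varepsilon^2))}$, and use Theorem~\ref{theorem:wBound} together with the definition of $\mathbf{w}^\mathrm{f}$ in (\ref{eq:wminDef}) to absorb the failure of conditions 3--4 at the cost of an extra $\rho h(\tau)$. The only difference is packaging: the paper argues by contradiction that no $\varepsilon > \varepsilon_0$ can satisfy conditions 3 and 4 simultaneously (and separately treats $\rho h(\tau)=0$), whereas your direct trichotomy at $\varepsilon = \varepsilon_0$ handles both cases uniformly and is, if anything, slightly cleaner.
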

\begin{proof}
Condition 1 in Lemma~\ref{lemma:convergenceUpperBound} is always satisfied due to the condition $\eta \leq \frac{1}{\beta}$ in this theorem.

When $\rho h(\tau) = 0$, we can choose $\varepsilon$ to be arbitrarily small (but greater than zero) so that conditions 2--4 in Lemma~\ref{lemma:convergenceUpperBound} are satisfied.
We see that the right-hand sides of (\ref{eq:convergenceUpperBound}) and (\ref{eq:convergenceUpperBoundFinal}) are equal in this case (when $\rho h(\tau) = 0$), and the result in (\ref{eq:convergenceUpperBoundFinal}) follows directly from Lemma~\ref{lemma:convergenceUpperBound} because $F(\mathbf{w}^\mathrm{f}) -F(\mathbf{w}^*)\leq F(\mathbf{w}(T))-F(\mathbf{w}^*)$ according to the definition of $\mathbf{w}^\mathrm{f}$ in (\ref{eq:wminDef}).

We consider $\rho h(\tau) > 0$ in the following.
Consider the right-hand side of (\ref{eq:convergenceUpperBound}) and let
\begin{equation}
\varepsilon_0 = \frac{1}{T\left(\eta\varphi -\frac{\rho h(\tau)}{\tau\varepsilon_0^2}\right)}.
\label{eq:convergenceUpperBoundFinalProof1}
\end{equation}
Solving for $\varepsilon_0$, we obtain
\begin{equation}
\varepsilon_0 = \frac{1}{2\eta\varphi T} + \sqrt{\frac{1}{4\eta^2\varphi^2 T^2} + \frac{\rho h(\tau)}{\eta\varphi\tau}}
\label{eq:convergenceUpperBoundFinalProof2}
\end{equation}
where the negative solution is ignored because $\varepsilon > 0$ in Lemma~\ref{lemma:convergenceUpperBound}.
Because $\varepsilon_0 > 0$ according to (\ref{eq:convergenceUpperBoundFinalProof2}), the denominator of (\ref{eq:convergenceUpperBoundFinalProof1}) is greater than zero, thus condition 2 in Lemma~\ref{lemma:convergenceUpperBound} is satisfied for any $\varepsilon \geq \varepsilon_0$, where we note that $\eta\varphi  -\frac{\rho h(\tau)}{\tau\varepsilon^2}$ increases with $\varepsilon$ when $\rho h(\tau) > 0$.

Suppose that there exists $\varepsilon > \varepsilon_0$ satisfying conditions 3 and 4 in Lemma~\ref{lemma:convergenceUpperBound}, so that all the conditions in Lemma~\ref{lemma:convergenceUpperBound} are satisfied. Applying Lemma~\ref{lemma:convergenceUpperBound} and considering (\ref{eq:convergenceUpperBoundFinalProof1}), we have
\begin{equation*}
F(\mathbf{w}(T))-F(\mathbf{w}^*) \leq \frac{1}{T\!\left(\eta\varphi \!-\!\frac{\rho h(\tau)}{\tau\varepsilon^2}\right)} < \frac{1}{T\!\left(\eta\varphi \! -\!\frac{\rho h(\tau)}{\tau\varepsilon_0^2}\right)} = \varepsilon_0
\end{equation*}
which contradicts with condition 4 in Lemma~\ref{lemma:convergenceUpperBound}. 
Therefore,  there \emph{does not} exist $\varepsilon > \varepsilon_0$ that satisfy both conditions 3 and 4 in Lemma~\ref{lemma:convergenceUpperBound}. This means that either 
1) $\exists k$ such that $ F\left(\mathbf{v}_{[k]}(k\tau)\right)-F(\mathbf{w}^*) \leq \varepsilon_0$ or 2) $ F\left(\mathbf{w}(T)\right)-F(\mathbf{w}^*) \leq \varepsilon_0$.
It follows that 
\begin{equation}
\min\left\{\min_{k=1,2,...,K} F\left(\mathbf{v}_{[k]}(k\tau)\right); F\left(\mathbf{w}(T)\right)\right\} -F(\mathbf{w}^*) \leq \varepsilon_0.
\label{eq:convergenceUpperBoundFinalProof3}
\end{equation}
From Theorem~\ref{theorem:wBound}, $F(\mathbf{w}{(k\tau)}) \leq F(\mathbf{v}_{[k]}(k\tau)) + \rho h(\tau)$ for any $k$. Combining with (\ref{eq:convergenceUpperBoundFinalProof3}), we get
\begin{equation*}
\min_{k=1,2,...,K} F\left(\mathbf{w}{(k\tau)}\right) -F(\mathbf{w}^*) \leq \varepsilon_0 + \rho h(\tau)
\end{equation*}
where we recall that $T=K\tau$.
Using (\ref{eq:wminDef}) and (\ref{eq:convergenceUpperBoundFinalProof2}), we obtain the result in (\ref{eq:convergenceUpperBoundFinal}).
\end{proof}

We note that the bound in (\ref{eq:convergenceUpperBoundFinal}) has no restriction on how the data is distributed at different nodes. The impact of different data distribution is captured by the gradient divergence $\delta$, which is included in $h(\tau)$.
It is easy to see from (\ref{eq:hFuncDef}) that $h(\tau)$ is non-negative, non-decreasing in $\tau$, and proportional to $\delta$. Thus, as one would intuitively expect, for a given total number of local update steps $T$, the optimality gap (i.e., $F(\mathbf{w}^\mathrm{f}) -F(\mathbf{w}^*)$) becomes larger when $\tau$ and $\delta$ are larger. For given $\tau$ and $\delta$, the optimality gap becomes smaller when $T$ is larger. 
When $\tau=1$, we have $h(\tau)=0$, and the optimality gap converges to zero as  $T \rightarrow \infty$.  When $\tau > 1$, we have $h(\tau) > 0$, and we can see from (\ref{eq:convergenceUpperBoundFinal}) that in this case, convergence is only guaranteed to a non-zero optimality gap as $T\rightarrow \infty$. This means that when we have unlimited budget for all types of resources (i.e., $R_m \rightarrow \infty, \forall m$), it is always optimal to set $\tau=1$ and perform global aggregation after every step of local update. However, when the resource budget $R_m$ is limited for some $m$, the training will be terminated after a finite number of iterations, thus the value of $T$ is finite. In this case, it may be better to perform global aggregation less frequently so that more resources can be used for local update, as we will see later in this paper.

\section{Control Algorithm}
\label{sec:schedulingAlgorithm}

We propose an algorithm that approximately solves (\ref{eq:optimizationProblem}) in this section. We first assume that the resource consumptions $c_m$ and $b_m$ ($\forall m$) are known, and we solve for the values of $\tau$ and $T$. Then, we consider practical scenarios where $c_m$, $b_m$, and some other parameters are unknown and may vary over time, and we propose a control algorithm that estimates the parameters and dynamically adjusts the value of $\tau$ in real time.

\subsection{Approximate Solution to (\ref{eq:optimizationProblem})}
\label{subsec:approxSolutionControlAlg}

We assume that $\eta$ is chosen small enough such that $\eta \leq \frac{1}{\beta}$, and use the upper bound in (\ref{eq:convergenceUpperBoundFinal}) as an approximation of $F(\mathbf{w}^\mathrm{f})-F(\mathbf{w}^*)$. Because for a given global loss function $F(\mathbf{w})$, its minimum value $F(\mathbf{w}^*)$ is a constant, the minimization of $F(\mathbf{w}^\mathrm{f})$ in (\ref{eq:optimizationProblem}) is equivalent to minimizing $F(\mathbf{w}^\mathrm{f})-F(\mathbf{w}^*)$.
With this approximation and rearranging the inequality constraints in (\ref{eq:optimizationProblem}), we can rewrite (\ref{eq:optimizationProblem}) as
\begin{align}
\min_{\tau, K \in \{1,2,3,...\}} & \,\, \frac{1}{2\eta\varphi T} + \sqrt{\frac{1}{4\eta^2\varphi^2 T^2} + \frac{\rho h(\tau)}{\eta\varphi\tau}} + \rho h(\tau) \label{eq:optimizationProblemApprox} \\
\textrm{s.t.} & \,\, K \leq \frac{R'_m }{c_m\tau + b_m}, \quad \forall m \in \{1,...,M\} \nonumber \\
& \,\, T = K\tau \nonumber
\end{align}
where $R'_m \triangleq R_m-b_m-c_m$.

It is easy to see that the objective function in (\ref{eq:optimizationProblemApprox}) decreases with $T$, thus it also decreases with $K$ because $T = K\tau$. Therefore, for any $\tau$, the optimal value of $K$ is $\left\lfloor \min_m \frac{R'_m }{c_m\tau + b_m} \right\rfloor$, i.e., the largest value of $K$ that does not violate any inequality constraint in (\ref{eq:optimizationProblemApprox}), where $\lfloor\cdot\rfloor$ denotes the floor function for rounding down to integer. To simplify the analysis, we approximate by ignoring the rounding operation and substituting $T = K\tau \approx \min_m \frac{R'_m \tau}{c_m\tau + b_m} =  1 \big/ \max_m \frac{c_m\tau + b_m}{R'_m \tau} $ into the objective function in (\ref{eq:optimizationProblemApprox}), yielding
\begin{equation}
G(\tau) \triangleq \frac{\max_m \!\frac{c_m\tau + b_m}{R'_m \tau}}{2\eta\varphi } + \sqrt{\!\frac{\left(\!\max_m \!\frac{c_m\tau + b_m}{R'_m \tau}\!\right)^{\!\!2}\!}{4\eta^2\varphi^2 }\! + \!\frac{\rho h(\!\tau\!)}{\eta\varphi\tau}} + \rho h(\!\tau\!) 
\label{eq:GTauDef}
\end{equation}
and we can define the (approximately) optimal $\tau$ as
\begin{equation}
\tau^* = \argmin_{\tau \in \{1,2,3,...\}} G(\tau)
\label{eq:optimizationGTau}
\end{equation}
from which we can directly obtain the (approximately) optimal $K$ as $K^* =\left\lfloor \min_m\frac{R'_m}{c_m \tau^* + b_m} \right\rfloor$, and the (approximately) optimal $T$ as $T^* = K^* \tau^* =\left\lfloor \min_m \frac{R'_m}{c_m\tau^* + b_m} \right\rfloor \tau^*$.

\begin{proposition}
When $\eta \leq \frac{1}{\beta}$, $\rho > 0$, $\beta > 0$, $\delta > 0$, we have $\lim_{R_{\textnormal{min}} \rightarrow \infty} \tau^* = 1$, where $R_{\textnormal{min}} \triangleq \min_m R_m$.
\label{prop:TauOptOneInfR}
\end{proposition}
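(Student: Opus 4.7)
The plan is to exploit the separation between the ``resource-dependent'' and ``resource-independent'' pieces of $G(\tau)$. Writing
\[
G(\tau) = \underbrace{\frac{A(\tau)}{2\eta\varphi}+\sqrt{\frac{A(\tau)^2}{4\eta^2\varphi^2}+\frac{\rho h(\tau)}{\eta\varphi\tau}}}_{\text{involves }A(\tau)}\;+\;\rho h(\tau),
\]
where $A(\tau)\triangleq\max_m \tfrac{c_m\tau+b_m}{R'_m\tau}$, the key observation is that $A(\tau)\to 0$ as $R_{\min}\to\infty$ (with every other parameter fixed), while $\rho h(\tau)$ does not depend on $R_{\min}$ at all. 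So I would split into $\tau=1$ vs.\ $\tau\geq 2$ and squeeze.

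First I would settle the behavior of $h$. A direct computation gives $h(1)=\tfrac{\delta}{\beta}\bigl((\eta\beta+1)-1\bigr)-\eta\delta=0$, so $\rho h(1)=0$. For $\tau\geq 2$, the strict form of Bernoulli's inequality (equivalently, the binomial expansion contains the positive term $\binom{\tau}{2}(\eta\beta)^2$) gives $(\eta\beta+1)^\tau>1+\eta\beta\tau$, hence $h(\tau)>0$; moreover $h$ is convex in $\tau$, and since $h(1)=0<h(2)$ convexity forces the right slope at $2$ to be positive, so $h$ is strictly increasing on $\{2,3,\ldots\}$ and $h(\tau)\geq h(2)>0$ for every integer $\tau\geq 2$.

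Next I would evaluate $G(1)$. At $\tau=1$, $A(1)=\max_m\tfrac{c_m+b_m}{R_m-b_m-c_m}$, which tends to $0$ as $R_{\min}\to\infty$ because each $R'_m\to\infty$ while $c_m,b_m$ stay fixed and finite. Since $\rho h(1)=0$, the third term of $G(1)$ vanishes, and the first two terms are continuous in $A(1)$, vanishing as $A(1)\to 0$. Therefore $\lim_{R_{\min}\to\infty}G(1)=0$. In contrast, for every $\tau\geq 2$ we have the trivial lower bound $G(\tau)\geq \rho h(\tau)\geq \rho h(2)>0$ uniformly in $R_{\min}$.

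To finish, pick $R_{\min}$ large enough that $G(1)<\rho h(2)$. Then $G(1)<G(\tau)$ for every integer $\tau\geq 2$, so the minimizer in \eqref{eq:optimizationGTau} must be $\tau^*=1$. Taking $R_{\min}\to\infty$ yields $\lim_{R_{\min}\to\infty}\tau^*=1$. I do not expect any real obstacle: the only mildly subtle point is confirming $h(\tau)\geq h(2)>0$ for all integer $\tau\geq 2$ (which uses convexity of $h$ together with $h(1)=0<h(2)$); everything else is direct substitution and monotonicity.
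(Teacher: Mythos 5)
Your proof is correct, but it takes a genuinely different route from the paper's. The paper passes to the pointwise limit $\lim_{R_{\textnormal{min}}\to\infty}G(\tau)=\sqrt{\rho h(\tau)/(\eta\varphi\tau)}+\rho h(\tau)$ and then shows this limiting function is increasing in (continuous) $\tau\geq 1$, which requires computing $\frac{d}{d\tau}h(\tau)$ and $\frac{d}{d\tau}\bigl(h(\tau)/\tau\bigr)$ and invoking lower bounds on the logarithm; minimality at $\tau=1$ then follows from $h(1)=0$. You instead avoid all monotonicity analysis by a squeeze: $G(1)=A(1)/(\eta\varphi)\to 0$ because $\rho h(1)=0$ and $A(1)\to 0$, while for every integer $\tau\geq 2$ you have the $R_{\textnormal{min}}$-\emph{uniform} lower bound $G(\tau)\geq\rho h(\tau)\geq\rho h(2)>0$ (strict Bernoulli plus convexity of $h$), so for $R_{\textnormal{min}}$ large enough $\tau^*=1$ exactly. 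Your version is more elementary and, arguably, cleaner on the analytic side: the paper's statement ``$\lim G(\tau)$ increases with $\tau$, hence $\lim\tau^*=1$'' implicitly interchanges the limit in $R_{\textnormal{min}}$ with the argmin over $\tau$, whereas your uniform bound makes that step explicit and airtight. What the paper's approach buys in exchange is the monotonicity of the limiting objective itself (used nowhere else, but of some independent interest, and its derivative computations are reused in spirit in Proposition~\ref{prop:TauOptBounded}); your approach buys brevity and rigor at the limit-interchange step. All of your intermediate claims check out: $h(1)=0$, $h(\tau)>0$ for $\tau\geq 2$, $h''(x)=\frac{\delta}{\beta}B^x(\log B)^2>0$ so $h$ is convex, and $h(\tau)\geq h(2)(\tau-1)\geq h(2)$ for integer $\tau\geq 2$.
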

\begin{proof}
Because $R_{\textnormal{min}} \rightarrow \infty \iff R_{m} \rightarrow \infty, \forall m \iff R'_{m} \rightarrow \infty, \forall m$, we have $\lim_{R_{\textnormal{min}} \rightarrow \infty} \max_m \frac{c_m\tau + b_m}{R'_m \tau} = \max_m \lim_{R'_{m} \rightarrow \infty} \frac{c_m\tau + b_m}{R'_m \tau} = 0$. Thus,
$\lim_{R_{\textnormal{min}} \rightarrow \infty} G(\tau) =  \sqrt{\frac{\rho h(\tau)}{\eta\varphi\tau}} + \rho h(\tau)$.
Let $B \triangleq \eta\beta + 1$. With a slight abuse of notation, we consider continuous values of $\tau \geq 1$. We have
\begin{align*}
d \left(\frac{ h(\tau)}{\tau}\right) \bigg/ d\tau & = \frac{\delta}{\beta\tau^2}\left( B^\tau \log B^\tau - (B^\tau - 1) \right) \\
& \geq \frac{\delta}{\beta\tau^2}\left( B^\tau \left( 1 - \frac{1}{B^\tau} \right) - B^\tau - 1 \right) \geq 0
\end{align*}
where the first inequality is from a lower bound of logarithmic function \cite{topsok2006some}. We also have
\begin{align*}
\frac{d h(\tau)}{d\tau} & = \frac{\delta}{\beta} (B^\tau \log B - \eta\beta) \geq \frac{\delta}{\beta} \left(\frac{2\eta\beta B^\tau}{2+\eta\beta}  - \eta\beta \right) \\
& = \frac{\delta (2\eta\beta B^\tau - 2\eta\beta - \eta^2\beta^2 )}{\beta (2+\eta\beta )} \\
& \geq \frac{\delta (2\eta\beta B - 2\eta\beta - \eta^2\beta^2 )}{\beta (2+\eta\beta )} = \frac{\delta \eta^2 \beta^2 }{\beta (2+\eta\beta )} > 0
\end{align*}
where the first inequality is from a lower bound of $\log B$ \cite{topsok2006some}, the second inequality is because $B > 1$ and $\tau \geq 1$.

Thus, for any $\tau \geq 1$, $h(\tau)$ increases with $\tau$, and $\frac{h(\tau)}{\tau}$ is non-decreasing with $\tau$. We also note that $\sqrt{x}$ increases with $x$ for any $x\geq 0$, and $h(1)=0$. It follows that $\lim_{R_{\textnormal{min}} \rightarrow \infty} G(\tau)$ increases with $\tau$  for any $\tau \geq 1$. Hence, $\lim_{R_{\textnormal{min}} \rightarrow \infty} \tau^* = 1$.
\end{proof}

Combining Proposition~\ref{prop:TauOptOneInfR} with Theorem~\ref{theorem:convergenceUpperBoundFinal}, we know that using $\tau^*$ found from (\ref{eq:optimizationGTau}) guarantees convergence with zero optimality gap as $R_\textnormal{min} \rightarrow \infty$ (and thus $R'_m \rightarrow \infty, \forall m$ and $T^* \rightarrow \infty$), because $\lim_{R_\textnormal{min} \rightarrow \infty} \tau^* = 1$ and $h(1)=0$.
For general values of $R_m$ (and $R'_m$), we have the following result.

\begin{proposition}
When $\eta \! \leq \!\frac{1}{\beta}$, $\rho \!>\! 0$, $\beta \!>\! 0$, $\delta \!>\! 0$, there exists a \emph{finite} value $\tau_0$, which only depends on $\eta$, $\beta$, $\rho$, $\delta$, $\varphi$, $c_m$, $b_m$, $R'_m$ ($\forall m$), such that $\tau^*  \leq \tau_0$. The quantity $\tau_0$ is defined as
\begin{align*}
\tau_0   \triangleq  \max \Bigg\{ &  \max_m  \frac{b_m R'_\nu - b_\nu R'_m}{c_\nu R'_m -c_m R'_\nu}  \, ;  \,\, \frac{\varphi(2 \!+\! \eta\beta)}{2 \rho\delta}\!  \left( \frac{2c_\nu b_\nu}{C_2} \!+\! \frac{2b_\nu^2}{C_2} \right) \!; \\
&   \frac{1}{\rho\delta\eta \log B} \left(\!\frac{b_\nu}{C_1} \! + \!\rho\eta\delta\!\right) \!  -\! \frac{1}{\eta\beta} ; \,\,\,\, \frac{1}{\eta\beta} + \frac{1}{2} \Bigg\}
\end{align*}
where index $\nu \triangleq \arg\max_{m\in V} \frac{b_m}{R'_m}$ (set $V \triangleq \arg\max_m \frac{c_m}{R'_m}$), $B \triangleq \eta\beta + 1$, $C_1 \triangleq 2\eta\varphi R'_\nu$, $C_2 \triangleq 4\eta^2\varphi^2 R'^2_\nu$. Here, for convenience, we allow $\arg\max$ to interchangeably return a set and an arbitrary value in that set, we also define $\frac{0}{0} \triangleq 0$. 

We also note that $0 < \eta\beta \leq 1$, thus $\tau_0 \geq \frac{1}{\eta\beta} + \frac{1}{2} > 1$.
\label{prop:TauOptBounded}
\end{proposition}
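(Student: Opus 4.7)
The plan is to show that $G(\tau)$ is strictly increasing for every real $\tau > \tau_0$; combined with continuity of $G$, this forces any integer minimizer $\tau^*$ to satisfy $\tau^* \leq \tau_0$. The approach is viable because the first term of $G$ has a bounded negative derivative of order $1/\tau^2$, whereas $\rho h(\tau)$ grows exponentially through $h'(\tau) = \frac{\delta}{\beta}(B^\tau \log B - \eta\beta)$ with $B = \eta\beta + 1 > 1$; the four components of $\tau_0$ will emerge as the thresholds past which each negative contribution to $G'(\tau)$ is dominated by this exponentially-growing budget.

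First, the component $\max_m \frac{b_m R'_\nu - b_\nu R'_m}{c_\nu R'_m - c_m R'_\nu}$ is exactly the point beyond which $\frac{c_\nu\tau + b_\nu}{R'_\nu} \geq \frac{c_m\tau + b_m}{R'_m}$ holds for every $m$; this follows by a direct case analysis using $\nu \in V = \argmax_m c_m/R'_m$ together with the tie-break $\nu = \argmax_{m \in V} b_m/R'_m$, where the convention $0/0 \triangleq 0$ handles the degenerate cases. Above this threshold the outer max in $G$ is attained at $\nu$, so $G(\tau) = Q(\tau) + R(\tau) + S(\tau)$ with $Q(\tau) = c_\nu/C_1 + b_\nu/(C_1\tau)$, $S(\tau) = \rho h(\tau)$, and $R(\tau) = \sqrt{Q(\tau)^2 + \rho h(\tau)/(\eta\varphi\tau)}$ becomes differentiable. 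Differentiating term-by-term, the possibly negative contributions to $G'(\tau)$ are $Q'(\tau) = -b_\nu/(C_1\tau^2)$ and $Q(\tau)Q'(\tau)/R(\tau)$ arising from $R'(\tau)$; the remaining part $\rho(h'(\tau)\tau - h(\tau))/(2\eta\varphi\tau^2 R(\tau))$ of $R'$ is non-negative by convexity of $h$, and $S'(\tau) = \rho h'(\tau) > 0$.

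Next, I lower-bound $\rho h'(\tau) \geq \frac{2\rho\delta\eta B^\tau}{2+\eta\beta} - \rho\delta\eta$ using the logarithmic inequality $\log B \geq \frac{2\eta\beta}{2+\eta\beta}$ (the same one from \cite{topsok2006some} invoked in Proposition~\ref{prop:TauOptOneInfR}) and match this exponential growth against each negative piece. Expanding $Q(\tau)^2 = c_\nu^2/C_2 + (2c_\nu b_\nu/C_2)/\tau + (b_\nu^2/C_2)/\tau^2$, the two $O(1/C_2)$ sub-contributions from $QQ'$ combine to give the second threshold component $\frac{\varphi(2+\eta\beta)}{2\rho\delta} \cdot \frac{2c_\nu b_\nu + 2b_\nu^2}{C_2}$; the $O(b_\nu/C_1)$ contribution from $Q'$, after absorbing the $-\rho\delta\eta$ correction, yields the third component $\frac{1}{\rho\delta\eta\log B}(b_\nu/C_1 + \rho\eta\delta) - 1/(\eta\beta)$; and the fourth component $1/(\eta\beta) + 1/2$ enforces $\tau > 1/(\eta\beta)$, which is what is needed for $B^\tau$ to exceed the constants in the lower bound on $\rho h'(\tau)$ and for the monotonicity properties of $h$ established in the proof of Proposition~\ref{prop:TauOptOneInfR} to apply. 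Once $\tau$ exceeds all four thresholds, $G'(\tau) > 0$, and the result follows.

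The main obstacle will be the bookkeeping inside $R(\tau)$: it couples $Q, Q', h,$ and $h'$ non-trivially, so one must split $R'(\tau)$ into positive and negative parts and allocate the $\rho h'(\tau)$ budget across the two negative pieces of order $1/C_2$ and $b_\nu/C_1$ respectively. Getting this allocation tight enough to yield a finite $\tau_0$ of the stated form — rather than one that is much larger, or that involves iterated exponentials — while matching the precise $2c_\nu b_\nu + 2b_\nu^2$ numerator in the statement, is the main technical hurdle.
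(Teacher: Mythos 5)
Your overall route is the same as the paper's: first show that the outer maximum in (\ref{eq:GTauDef}) is attained at index $\nu$ once $\tau$ exceeds the first component of $\tau_0$ (your case analysis there, using the definition of $\nu$ and the $\frac{0}{0}\triangleq 0$ convention, matches the paper's), and then show that $G$ is increasing for $\tau>\tau_0$ via a first-derivative analysis using Bernoulli's inequality and $\log B\geq\frac{2\eta\beta}{2+\eta\beta}$. The genuine gap is in how you allocate the positive derivative budget after differentiating $R(\tau)=\sqrt{Q(\tau)^2+\rho h(\tau)/(\eta\varphi\tau)}$. You dismiss the contribution $\frac{\rho(h'(\tau)\tau-h(\tau))}{2\eta\varphi\tau^2R(\tau)}$ as merely ``non-negative'' and propose to pay for both negative pieces, $Q'(\tau)$ and $Q(\tau)Q'(\tau)/R(\tau)$, out of $S'(\tau)=\rho h'(\tau)$. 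That allocation cannot reproduce the stated $\tau_0$. First, the second component $\frac{\varphi(2+\eta\beta)}{2\rho\delta}\left(\frac{2c_\nu b_\nu}{C_2}+\frac{2b_\nu^2}{C_2}\right)$ carries a factor of $\varphi$, which can only arise from comparing $-2Q(\tau)Q'(\tau)=\frac{2c_\nu b_\nu}{C_2\tau^2}+\frac{2b_\nu^2}{C_2\tau^3}$ against the derivative of $\frac{\rho h(\tau)}{\eta\varphi\tau}$, which is lower-bounded by $\frac{2\rho\delta}{\varphi(2+\eta\beta)\tau}$ via $\tau B^\tau\log B-(B^\tau-1)>\frac{2\eta\beta\tau}{2+\eta\beta}$ (this last step is where the fourth component $\frac{1}{\eta\beta}+\frac{1}{2}$ is actually needed, to make $\frac{2\eta\beta\tau}{2+\eta\beta}-1>0$); the term $\rho h'(\tau)$ contains no $\varphi$ and so cannot produce that threshold. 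Second, since $R\geq Q$, covering $|Q(\tau)Q'(\tau)/R(\tau)|$ in addition to $|Q'(\tau)|$ out of $\rho h'(\tau)$ would in the worst case require $\rho h'(\tau)>2b_\nu/(C_1\tau^2)$, whereas the third component of $\tau_0$ is calibrated to guarantee only $\rho h'(\tau)>b_\nu/(C_1\tau^2)$ plus the $\rho\eta\delta$ correction, i.e., one copy of $|Q'|$, not two.

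The repair is exactly the bookkeeping you flagged as the main hurdle, and it is what the paper does: do not differentiate the square root at all. Write $G=H_1+\sqrt{H_2}$ with $H_1=Q+\rho h$ and $H_2=Q^2+\frac{\rho h}{\eta\varphi\tau}$, and show separately that $H_1'(\tau)>0$ (using the third and fourth components and Bernoulli's inequality) and $H_2'(\tau)>0$ (using the second and fourth components and the logarithmic bound); monotonicity of $x\mapsto\sqrt{x}$ then yields $G'(\tau)>0$ without ever dividing by $R(\tau)$. Equivalently, within your decomposition: pair the two terms of $R'(\tau)$ against each other, since the common factor $\frac{1}{2R(\tau)}$ cancels from that comparison, and pair $S'(\tau)$ against $Q'(\tau)$ alone. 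With that reallocation your argument goes through and recovers the stated constants.
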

\begin{proof}
We can show that $\max_m  \frac{b_m R'_\nu - b_\nu R'_m}{c_\nu R'_m -c_m R'_\nu}$ is finite according to the definition of $\nu$ and $\frac{0}{0}$, then it is easy to see that $\tau_0$ is finite. We then show $\arg\max_m \frac{c_m\tau + b_m}{R'_m \tau} = \nu$ for any $\tau > \tau_0$, in which case the maximization over $m$ in  (\ref{eq:GTauDef}) becomes fixing $m=\nu$. Then, the proof separately considers the terms inside and outside the square root in (\ref{eq:GTauDef}). It shows that the first order derivatives of both parts are always larger than zero when $\tau > \tau_0$. Because the square root is an increasing function, $G(\tau)$ increases with $\tau$ for $\tau > \tau_0$, and thus $\tau^* \leq \tau_0$. 
\if\citetechreport1
See our online technical report \cite[Appendix~\ref{append:proofGTauOptBounded}]{JournalTechReport} for details.
\else
See Appendix~\ref{append:proofGTauOptBounded} for details.
\fi
\end{proof}

There is no closed-form solution for $\tau^*$ because $G(\tau)$ includes both polynomial and exponential terms of $\tau$, where the exponential term is embedded in $h(\tau)$. Because $\tau^*$ can only be a positive integer, according to Proposition~\ref{prop:TauOptBounded}, we can compute $G(\tau)$ within a finite range of $\tau$ to find $\tau^*$ that minimizes $G(\tau)$.

\subsection{Adaptive Federated Learning}

\label{subsec:controlAlgImpl}

In this subsection, we present the complete control algorithm for adaptive federated learning, which recomputes $\tau^*$ in every global aggregation step based on the most recent system state.
We use the theoretical results above to guide the design of the algorithm.

As mentioned earlier, the local updates run on edge nodes and the global aggregation is performed through the assistance of an aggregator, where the aggregator is a logical component that may also run on one of the edge nodes. The complete procedures at the aggregator and each edge node are presented in Algorithms~\ref{alg:protocolAggregator} and \ref{alg:protocolEdgeNode}, respectively, where Lines~\ref{alg:protocolEdgeNode:startLocalUpdt}--\ref{alg:protocolEdgeNode:endLocalUpdt} of Algorithm~\ref{alg:protocolEdgeNode} are for local updates and the rest is considered as part of global aggregation, initialization, or final operation. We assume that the aggregator initiates the learning process, and the initial model parameter $\mathbf{w}(0)$ is sent by the aggregator to all edge nodes. 
We note that instead of transmitting the entire model parameter vector in every global aggregation step, one can also transmit compressed or quantized model parameters to further save the communication bandwidth, where the compression or quantization can be performed using techniques described in~\cite{Jakub2016,hardy2017distributed}, for instance.

\subsubsection{Estimation of Parameters in $G(\tau)$}
\label{subsec:ControlAlgParamEst}

The expression of $G(\tau)$, which includes $h(\tau)$, has parameters which need to be estimated in practice.
Among these parameters, $c_m$ and $b_m$ ($\forall m$) are related to resource consumption, $\rho$, $\beta$, and $\delta$ are related to the loss function characteristics. These parameters are estimated in real time during the learning process. 

The values of $c_m$ and $b_m$ ($\forall m$) are estimated based on measurements of resource consumptions at the edge nodes and the aggregator (Line~\ref{alg:protocolAggregator:resourceConsumptionEst} of Algorithm~\ref{alg:protocolAggregator}). The  estimation depends on the type of resource under consideration. For example, when the type-$m$ resource is energy, the sum energy consumption (per local update) at all nodes is considered as $c_m$; when the type-$m$ resource is time, the maximum computation time (per local update) at all nodes is considered as $c_m$.
The aggregator also monitors the total resource consumption of each resource type $m$ based on the estimates, and compares the total resource consumption against the resource budget $R_m$ (Line~\ref{alg:protocolAggregator:checkRemainingResource} of Algorithm~\ref{alg:protocolAggregator}). If the consumed resource is at the budget limit for some $m$, it stops the learning and returns the final result.

The values of $\rho$, $\beta$, and $\delta$ are estimated based on the local and global losses and gradients computed at $\mathbf{w}(t)$ and $\mathbf{w}_i(t)$, see Line~\ref{alg:protocolAggregator:receiveBetaGrad} and Lines~\ref{alg:protocolAggregator:firstEstimation}--\ref{alg:protocolAggregator:estDelta} of Algorithm~\ref{alg:protocolAggregator} and Lines~\ref{alg:protocolEdgeNode:RhoEstimate}, \ref{alg:protocolEdgeNode:BetaEstimate}, and \ref{alg:protocolEdgeNode:sendBetaAndGrad} of Algorithm~\ref{alg:protocolEdgeNode}. To perform the estimation, each edge node needs to have access to both its local model parameter $\mathbf{w}_i(t)$ and the global model parameter $\mathbf{w}(t)$ for the same iteration $t$ (see Lines~\ref{alg:protocolEdgeNode:RhoEstimate} and \ref{alg:protocolEdgeNode:BetaEstimate} of Algorithm~\ref{alg:protocolEdgeNode}), which is only possible when global aggregation is performed in iteration~$t$. 
Because $\mathbf{w}(t)$ is only observable by each node after global aggregation,
estimated values of $\rho$, $\beta$, and $\delta$ are only available for recomputing $\tau^*$ starting from the second global aggregation step after initialization, which uses estimates obtained in the previous global aggregation step\footnote{See the condition in Line~\ref{alg:protocolAggregator:t0Cond} of Algorithm~\ref{alg:protocolAggregator} and Lines~\ref{alg:protocolEdgeNode:tCond} and \ref{alg:protocolEdgeNode:t0Cond} of Algorithm~\ref{alg:protocolEdgeNode}. Also note that the parameters $\hat{\rho}_i$, $\hat{\beta}_i$, $F_{i}(\mathbf{w}(t_0))$, $\nabla F_{i}(\mathbf{w}(t_0))$ sent in Line~\ref{alg:protocolEdgeNode:sendBetaAndGrad} of Algorithm~\ref{alg:protocolEdgeNode} are obtained at the previous global aggregation step ($t_0$, $\hat{\rho}_i$, and $\hat{\beta}_i$ are obtained in Lines~\ref{alg:protocolEdgeNode:t0Def}--\ref{alg:protocolEdgeNode:BetaEstimate} of Algorithm~\ref{alg:protocolEdgeNode}).}.

\begin{algorithm}[t]
\caption{Procedure at the aggregator} 
\label{alg:protocolAggregator} 
{\footnotesize

\KwIn{Resource budget $R$, control parameter $\varphi$, search range parameter $\gamma$, maximum $\tau$ value $\tau_\textrm{max}$}

\KwOut{$\mathbf{w}^\mathrm{f}$}

Initialize $\tau^* \leftarrow 1$, $t \leftarrow 0$, $s \leftarrow 0$; \quad //$s$ is a resource counter

Initialize $\mathbf{w}(0)$ as a constant or a random vector;

Initialize $\mathbf{w}^\mathrm{f} \leftarrow \mathbf{w}(0)$;

\RepeatNoEnd
{
    Send $\mathbf{w}(t)$ and $\tau^*$ to all edge nodes, also send \texttt{STOP} if it is set; \label{alg:protocolAggregator:sendWTau} 
    
    $t_0 \leftarrow t$; \quad //Save iteration index of last transmission of $\mathbf{w}(t)$

    $t \leftarrow t+\tau^*$; \quad //Next global aggregation is after $\tau$ iterations
    
    Receive $\mathbf{w}_i(t)$, $\hat{c}_i$ from each node $i$;
    
    Compute $\mathbf{w}(t)$ according to (\ref{eq:globalAverage});   \label{alg:protocolAggregator:computeWGlobal} 
    
    \If{$t_0 > 0$ \label{alg:protocolAggregator:t0Cond} }
    {
        Receive $\hat{\rho}_i$, $\hat{\beta}_i$, $F_{i}(\mathbf{w}(t_0))$, $\nabla F_{i}(\mathbf{w}(t_0))$ from each node $i$;  \label{alg:protocolAggregator:receiveBetaGrad}
        
        Compute $F(\mathbf{w}(t_0))$ according to (\ref{eq:globalLossFuncAllSamples})
        
        \If{$F(\mathbf{w}(t_0)) < F(\mathbf{w}^\mathrm{f})$   \label{alg:protocolAggregator:minStepStart}}
        {
            $\mathbf{w}^\mathrm{f} \leftarrow \mathbf{w}(t_0)$;   \label{alg:protocolAggregator:minStepEnd}
        }

        \If{\texttt{STOP} flag is set}
        {
            \textbf{break}; \quad //Break out of the loop here if \texttt{STOP} is set
        }

        Estimate $\hat{\rho} \leftarrow \frac{\sum_{i=1}^{N}D_{i}\hat{\rho}_{i}}{D}$;  \label{alg:protocolAggregator:firstEstimation}

        Estimate $\hat{\beta} \leftarrow \frac{\sum_{i=1}^{N}D_{i}\hat{\beta}_{i}}{D}$;
                        
        Compute $\nabla F(\mathbf{w}(t_0)) \leftarrow \frac{\sum_{i=1}^{N}D_{i}\nabla F_{i}(\mathbf{w}(t_0))}{D}$,
        estimate $\hat{\delta}_i \leftarrow \Vert \nabla F_{i}(\mathbf{w}(t_0)) - \nabla F(\mathbf{w}(t_0))  \Vert$ for each $i$,
        from which we estimate $\hat{\delta} \leftarrow \frac{\sum_{i=1}^{N}D_{i}\hat{\delta}_{i}}{D}$;   \label{alg:protocolAggregator:estDelta}
        
        Compute new value of $\tau^*$ according to (\ref{eq:optimizationGTau}) via linear search on integer values of $\tau$ within $[1, \tau_\textrm{m}]$, where we set $\tau_\textrm{m} \leftarrow \min\{\gamma\tau^*; \tau_\textrm{max}\}$; \label{alg:protocolAggregator:TauCompute} 
    }
    
    \For{$m = 1,2,...,M$}
    {
        Estimate resource consumptions $\hat{c}_m$, $\hat{b}_m$, using $\hat{c}_{m,i}$ received from all nodes $i$ and local measurements at the aggregator; \label{alg:protocolAggregator:resourceConsumptionEst} 

        $s_m \leftarrow s_m + \hat{c}_m\tau + \hat{b}_m$;
    }

    \If{$\exists m$ such that $s_m + \hat{c}_m(\tau + 1) + 2\hat{b}_m \geq R_m$ \label{alg:protocolAggregator:checkRemainingResource} }
    {
        Decrease $\tau^*$ to the maximum possible value such that the estimated resource consumption for remaining iterations is within budget $R_m$ for all $m$, set \texttt{STOP} flag;   \label{alg:protocolAggregator:decreaseTauToRemainWithinBudget}
    }

}

Send $\mathbf{w}(t)$ to all edge nodes;   \label{alg:protocolAggregator:FinalStart}

Receive $F_{i}(\mathbf{w}(t))$ from each node $i$;

Compute $F(\mathbf{w}(t))$ according to (\ref{eq:globalLossFuncAllSamples})
        
\If{$F(\mathbf{w}(t)) < F(\mathbf{w}^\mathrm{f})$ \label{alg:protocolAggregator:FinalLossCompare}}
{
    $\mathbf{w}^\mathrm{f} \leftarrow \mathbf{w}(t)$;   \label{alg:protocolAggregator:FinalEnd}
}

}

\end{algorithm}

\begin{algorithm}[t]
\caption{Procedure at each edge node $i$} 
\label{alg:protocolEdgeNode} 
{\footnotesize

Initialize $t \leftarrow 0$;

\Repeat{\texttt{STOP} flag is received}{
    Receive $\mathbf{w}(t)$ and new $\tau^*$ from aggregator, set $\widetilde{\mathbf{w}}_i(t) \leftarrow \mathbf{w}(t)$;   \label{alg:protocolEdgeNode:receiveWTauSetTilde} 
    
    $t_0 \leftarrow t$; \quad //Save iteration index of last transmission of $\mathbf{w}(t)$ \label{alg:protocolEdgeNode:t0Def}

    \If{$t>0$    \label{alg:protocolEdgeNode:tCond} }
    {
        Estimate $\hat{\rho}_i \leftarrow \left\Vert F_{i}(\mathbf{w}_i(t)) - F_{i}(\mathbf{w}(t))  \right\Vert / \left\Vert \mathbf{w}_i(t) - \mathbf{w}(t)  \right\Vert$; \label{alg:protocolEdgeNode:RhoEstimate} 

        Estimate $\hat{\beta}_i \leftarrow \left\Vert \nabla F_{i}(\mathbf{w}_i(t)) - \nabla F_{i}(\mathbf{w}(t))  \right\Vert / \left\Vert \mathbf{w}_i(t) - \mathbf{w}(t)  \right\Vert$; \label{alg:protocolEdgeNode:BetaEstimate}

    }
    
    \For{$\mu = 1,2,..., \tau^*$  \label{alg:protocolEdgeNode:startLocalUpdt} }
    {
        $t \leftarrow t+1$; \quad //Start of next iteration

        Perform local update and obtain $\mathbf{w}_i(t)$ according to (\ref{eq:localUpdate});   \label{alg:protocolEdgeNode:LocalUpdateGradDescent} 

        \If{$\mu < \tau^*$}
        {
            Set $\widetilde{\mathbf{w}}_i(t) \leftarrow \mathbf{w}_i(t)$;  \label{alg:protocolEdgeNode:endLocalUpdt}   \label{alg:protocolEdgeNode:LocalUpdateTildeSet} 
        }
    }  
    
    \For{$m = 1,2,...,M$}
    {
        Estimate type-$m$ resource consumption $\hat{c}_{m,i}$ for one local update at node $i$;
    }
    
    Send $\mathbf{w}_i(t)$, $\hat{c}_{m,i}$ ($\forall m$) to aggregator;

    \If{$t_0 > 0$ \label{alg:protocolEdgeNode:t0Cond} }
    {
        Send $\hat{\rho}_i$, $\hat{\beta}_i$, $F_{i}(\mathbf{w}(t_0))$, $\nabla F_{i}(\mathbf{w}(t_0))$ to aggregator; \label{alg:protocolEdgeNode:sendBetaAndGrad} 

    }
}

Receive $\mathbf{w}(t)$ from aggregator;   \label{alg:protocolEdgeNode:FinalStart}

Send $F_{i}(\mathbf{w}(t))$ to aggregator;    \label{alg:protocolEdgeNode:FinalEnd}

}
\end{algorithm}

\emph{Remark:} In the extreme case where $\mathbf{w}_i (t) = \mathbf{w}(t)$ in Lines~\ref{alg:protocolEdgeNode:RhoEstimate} and \ref{alg:protocolEdgeNode:BetaEstimate} of Algorithm~\ref{alg:protocolEdgeNode}, we estimate $\hat{\rho}_i$ and $\hat{\beta}_i$ as zero. When $\delta = \beta = 0$ and $\frac{\delta}{\beta}$ in $h(\tau)$ is undefined, we define that $h(\tau) = 0$ for all $\tau \geq 1$. This is because for $t>0$, $\mathbf{w}_i (t) = \mathbf{w}(t)$ only occurs when different nodes have extremely similar (often equal) datasets, in which case a large value of $\tau$ does not make the convergence worse than a small value of $\tau$, thus it makes sense to define $h(\tau) = 0$ in this case.

The parameter $\eta$ is the gradient-descent step size which is pre-specified and known. The remaining parameter $\varphi$ includes $\omega$ which is non-straightforward to estimate because the algorithm does not know $\mathbf{w}^*$, thus we regard $\varphi$ as a control parameter that is manually chosen and remains fixed for the same machine learning model\footnote{Although $\varphi$ is related to $\beta$ and we estimate $\beta$ separately, we found that it is good to keep $\varphi$ a constant value that does not vary with the estimated value of $\beta$ in practice, because there can be occasions where the estimated $\beta$ is large causing $\varphi < 0$, which causes abnormal behavior when computing $\tau^*$ from $G(\tau)$.}.
Experimentation results presented in the next section show that a fixed value of $\varphi$ works well across different data distributions, various numbers of nodes, and various resource consumptions/budgets. 
If we multiply both sides of (\ref{eq:GTauDef}) by $\varphi$, we can see that a larger value of $\varphi$ gives a higher weight to the terms with $h(\tau)$, yielding a smaller value of $\tau^*$ (because $h(\tau)$ increases with $\tau$), and vice versa. Therefore, in practice, it is not hard to tune the value of $\varphi$ on a small and simple setup, which can then be applied to general cases. See also the results on the sensitivity of $\varphi$ in Section~\ref{subsubsec:sensitivityOfPhiResult}.

\subsubsection{Recomputing $\tau^*$}

The value of $\tau^*$ is recomputed by the aggregator during each global aggregation step, based on the most updated parameter estimations. 
When searching for $\tau^*$, we use the following search range instead of the range in Proposition~\ref{prop:TauOptBounded} due to practical considerations of estimation error.
As shown in Line~\ref{alg:protocolAggregator:TauCompute} of Algorithm~\ref{alg:protocolAggregator}, we search for new values of $\tau^*$ up to $\gamma$ times the current value of $\tau^*$, and find $\tau^*$ that minimizes $G(\tau)$, where $\gamma > 0$ is a fixed parameter. The presence of $\gamma$ limits the search space and also avoids $\tau^*$ from growing too quickly as initial parameter estimates may be inaccurate. We also impose a maximum value of $\tau$, denoted by $\tau_\textrm{max}$, because if $\tau^*$ is too large, it is more likely for the system to operate beyond the resource budget due to inaccuracies in the estimation of local resource consumption, see Line~\ref{alg:protocolAggregator:checkRemainingResource} of Algorithm~\ref{alg:protocolAggregator}. 
The new value of $\tau^*$ is sent to each node together with $\mathbf{w}(t)$ (Line~\ref{alg:protocolAggregator:sendWTau} of Algorithm~\ref{alg:protocolAggregator}).

\subsubsection{Distributed Gradient Descent}
The local update steps of distributed gradient descent at the edge node include Lines~\ref{alg:protocolEdgeNode:startLocalUpdt}--\ref{alg:protocolEdgeNode:endLocalUpdt} of Algorithm~\ref{alg:protocolEdgeNode}, where Line~\ref{alg:protocolEdgeNode:LocalUpdateGradDescent} of Algorithm~\ref{alg:protocolEdgeNode} corresponds to Line~\ref{alg:distGradDescent:LocalUpdate} of Algorithm~\ref{alg:distGradDescent} and Line~\ref{alg:protocolEdgeNode:LocalUpdateTildeSet} of Algorithm~\ref{alg:protocolEdgeNode} corresponds to Line~\ref{alg:distGradDescent:TildeSet} of Algorithm~\ref{alg:distGradDescent}. When global aggregation is performed, Line~\ref{alg:protocolAggregator:computeWGlobal} of Algorithm~\ref{alg:protocolAggregator} computes the global model parameter $\mathbf{w}(t)$ at the aggregator, which is sent to the edge nodes in Line~\ref{alg:protocolAggregator:sendWTau} of Algorithm~\ref{alg:protocolAggregator}, and each edge node receives $\mathbf{w}(t)$ in Line~\ref{alg:protocolEdgeNode:receiveWTauSetTilde} of Algorithm~\ref{alg:protocolEdgeNode} and sets $\widetilde{\mathbf{w}}_i(t) \leftarrow \mathbf{w}(t)$ to use $\mathbf{w}(t)$ as the initial model parameter for the next round of local update; this corresponds to Line~\ref{alg:distGradDescent:TildeSetGlobalAgg} of Algorithm~\ref{alg:distGradDescent}. 

The final model parameter $\mathbf{w}^\mathrm{f}$ that minimizes $F(\mathbf{w})$ is obtained at the aggregator in Lines~\ref{alg:protocolAggregator:minStepStart}--\ref{alg:protocolAggregator:minStepEnd} of Algorithm~\ref{alg:protocolAggregator}, corresponding to Line~\ref{alg:distGradDescent:minStep} of Algorithm~\ref{alg:distGradDescent}.
As discussed in Section~\ref{sec:ProblemFormulation}, the computation of $\mathbf{w}^\mathrm{f}$ lags for one round of global aggregation, because for any iteration $t_0$ that includes a global aggregation step, $F(\mathbf{w}(t_0))$ can only be computed after each edge node has received $\mathbf{w}(t_0)$ and sent the local loss $F_i(\mathbf{w}(t_0))$ to the aggregator in the next round of global aggregation. To take into account the final value of $\mathbf{w}(t)$ in the computation of $\mathbf{w}^\mathrm{f}$, Lines~\ref{alg:protocolAggregator:FinalStart}--\ref{alg:protocolAggregator:FinalEnd} of Algorithm~\ref{alg:protocolAggregator} and Lines~\ref{alg:protocolEdgeNode:FinalStart}--\ref{alg:protocolEdgeNode:FinalEnd} of Algorithm~\ref{alg:protocolEdgeNode} perform an additional round of computation of the loss and $\mathbf{w}^\mathrm{f}$, as also discussed in Section~\ref{sec:ProblemFormulation}.

Overall, when global aggregation is executed for $K$ times in total, the computational complexity of Algorithm~\ref{alg:protocolAggregator} is $O(K(NM+\tau_\textrm{max}))$, because each global aggregation step includes the computation of global parameters from the local parameters collected from $N$ different nodes for $M$ resource types and the linear search step in Line~\ref{alg:protocolAggregator:TauCompute}  of Algorithm~\ref{alg:protocolAggregator} which has at most $\tau_\textrm{max}$ steps. When $T$ steps of local updates are performed in total, Algorithm~\ref{alg:protocolEdgeNode} has a computational complexity of $O(T+KM)$, where the additional term $KM$ corresponds to the additional local processing (at each node) in global aggregation steps.

\subsection{Extension to Stochastic Gradient Descent}

When the amount of training data is large, it is usually computationally prohibitive to compute the gradient of the loss function defined on the entire (local) dataset. In such cases, stochastic gradient descent (SGD) is often used \cite{shalev2014understanding,Goodfellow-et-al-2016,bottou2010large}, which uses the gradient computed on the loss function defined on a randomly sampled subset (referred to as a mini-batch) of data to approximate the real gradient. Although the theoretical analysis in this paper is based on deterministic gradient descent (DGD), the proposed approach can be directly extended to SGD. As discussed in \cite{FUSION2018}, SGD can be seen as an approximation to DGD.

When using SGD with our proposed algorithm, all losses and their gradients are computed on mini-batches. Each local iteration step corresponds to a step of gradient descent where the gradient is computed on a mini-batch of local training data. The mini-batch changes for every step of local iteration, i.e., for each new local iteration, a new mini-batch of a given size is randomly selected from the local training data. However, to reduce errors introduced by random data sampling when estimating the parameters $\rho$, $\beta$, and $\delta$, the first iteration after global aggregation uses the same mini-batch as the last iteration before global aggregation. When $\tau=1$, the mini-batch changes if the same mini-batch has already been used in two iterations, to ensure that different mini-batches are used for training over time.

To avoid approximation errors caused by mini-batch sampling when determining $\mathbf{w}^\mathrm{f}$, when using SGD, the aggregator informs the edge nodes whether the current $\mathbf{w}(t_0)$ is selected as $\mathbf{w}^\mathrm{f}$ using an additional flag sent together with the message in Line~\ref{alg:protocolAggregator:sendWTau} of Algorithm~\ref{alg:protocolAggregator}. The edge nodes save their own copies of $\mathbf{w}^\mathrm{f}$. When an edge node computes $F_i(\mathbf{w}(t_0))$ that is sent in Line~\ref{alg:protocolEdgeNode:sendBetaAndGrad}  of Algorithm~\ref{alg:protocolEdgeNode}, it also recomputes $F_i(\mathbf{w}^\mathrm{f})$ using the same mini-batch as for computing $F_i(\mathbf{w}(t_0))$. It then sends both $F_i(\mathbf{w}^\mathrm{f})$ and $F_i(\mathbf{w}(t_0))$ to the aggregator in Line~\ref{alg:protocolEdgeNode:sendBetaAndGrad}  of Algorithm~\ref{alg:protocolEdgeNode}. The aggregator recomputes $F(\mathbf{w}^\mathrm{f})$ based on the most recently received $F_i(\mathbf{w}^\mathrm{f})$. In this way, the values of $F(\mathbf{w}^\mathrm{f})$ and $F(\mathbf{w}(t_0))$ used for the comparison in Lines~\ref{alg:protocolAggregator:minStepStart} and \ref{alg:protocolAggregator:FinalLossCompare} of Algorithm~\ref{alg:protocolAggregator} are computed on the same mini-batch at each edge node.

\section{Experimentation Results}
\label{sec:experimentation}

\subsection{Setup}

To evaluate the performance of our proposed adaptive federated learning algorithm, we conducted experiments both on networked prototype system with $5$ nodes and in a simulated environment with the number of nodes varying from $5$ to $500$.
The prototype system consists of three Raspberry Pi (version 3) devices and two laptop computers, which are all interconnected via Wi-Fi in an office building. 
This represents an edge computing environment where the computational capabilities of edge nodes are heterogeneous. All these $5$ nodes have local datasets on which model training is conducted. The aggregator is located on one of the laptop computers, and hence co-located with one of the local datasets.

\subsubsection{Resource Definition}
\label{subsec:experimentation:resource}

For ease of presentation and interpretation of results, we let $M=1$ and consider time as the single resource type in our experiments.
For the prototype system, we train each model for a fixed amount of time budget. The values of $c$ and $b$ (we omit the subscript $m=1$ for simplicity) correspond to the actual time used for each local update and global aggregation, respectively.
The simulation environment performs model training with simulated resource consumptions, which are randomly generated according to Gaussian distribution with mean and standard deviation values
\if\citetechreport1
(see \cite[Appendix~\ref{append:ExperimentSimParam}]{JournalTechReport} for these values)
\else
(see Appendix~\ref{append:ExperimentSimParam} for these values)
\fi
obtained from measurements of the squared-SVM model on the prototype. See Section~\ref{subsec:experimentation:modelsDatasets} below for definitions of models and datasets.

\subsubsection{Baselines}
\label{subsec:experimentation:baseline}

We compare with the following baseline approaches:
\begin{enumerate}
\item[(a)] Centralized gradient descent \cite{shalev2014understanding, Goodfellow-et-al-2016}, where the entire training dataset is stored on a single edge node and the model is trained directly on that node using a standard (centralized) gradient descent procedure;
\item[(b)] Canonical federated learning approach presented in~\cite{mcmahan2016communication}, which is equivalent to using a fixed (non-adaptive) value of $\tau$ in our setting;
\item[(c)] Synchronous distributed gradient descent \cite{Chen2016}, which is equivalent to fixing $\tau = 1$ in our setting.
\end{enumerate}

For a fair comparison, we implement the estimation of resource consumptions for all baselines and the training stops when we have reached the resource (time) budget.
When conducting experiments on the prototype system, the centralized gradient descent is performed on a Raspberry Pi device. To avoid resource consumption related to loss computation, centralized gradient descent uses the last model parameter $\mathbf{w}(T)$ (instead of $\mathbf{w}^\mathrm{f}$) as the result, because convergence of $\mathbf{w}(T)$ can be proven in the centralized case \cite{convex}.
We do not explicitly distinguish the baselines (b) and (c) above because they both correspond to an approach with non-adaptive $\tau$ of a certain value. When $\tau$ is non-adaptive, we use the same protocol as in Algorithms~\ref{alg:protocolAggregator} and \ref{alg:protocolEdgeNode}, but remove any parts related to parameter estimation and recomputation of $\tau$.

\subsubsection{DGD and SGD}
We consider both DGD and SGD in the experiments to evaluate the general applicability of the proposed algorithm. For SGD, the mini-batch sampling uses the same initial random seed at all nodes, which means that when the datasets at all nodes are identical, the mini-batches at all nodes are also identical in the same iteration (while they are generally different across different iterations). This setup is for a better consideration of the differences between equal and non-equal data distributions (see Section~\ref{subsec:DataDistributionCases} below).

\begin{figure*}
    \centering
    \begin{subfigure}{0.2\textwidth}
        \centering
        \includegraphics[width=0.8\linewidth]{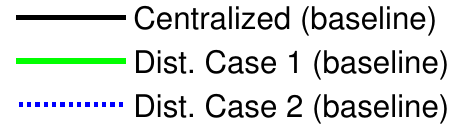}
    \end{subfigure}%
    ~
    \begin{subfigure}{0.2\textwidth}
        \centering
        \includegraphics[width=0.8\linewidth]{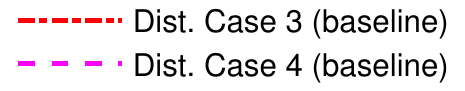}
    \end{subfigure}%
    ~
    \begin{subfigure}{0.2\textwidth}
        \centering
        \includegraphics[width=0.75\linewidth]{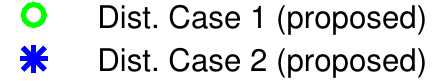}
    \end{subfigure}%
    ~
    \begin{subfigure}{0.2\textwidth}
        \centering
        \includegraphics[width=0.75\linewidth]{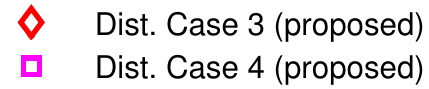}
    \end{subfigure}%
    \vspace{0.05in}
        
    \begin{subfigure}[b]{0.17\textwidth}
        \centering
        \includegraphics[width=1\linewidth]{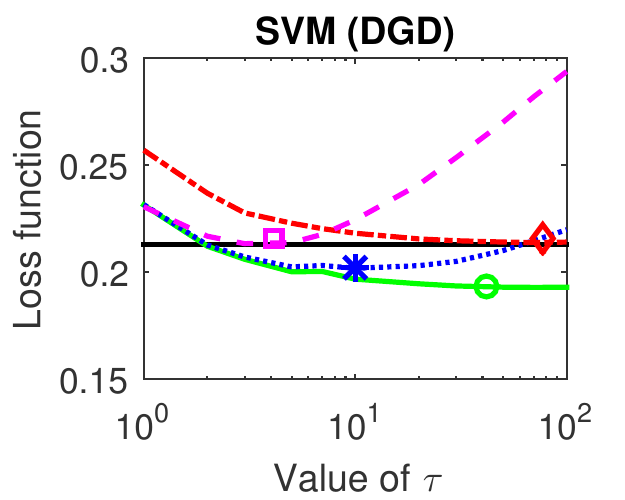}
    \end{subfigure}%
    ~\hspace{-0.1in}
    \begin{subfigure}[b]{0.17\textwidth}
        \centering
        \includegraphics[width=1\linewidth]{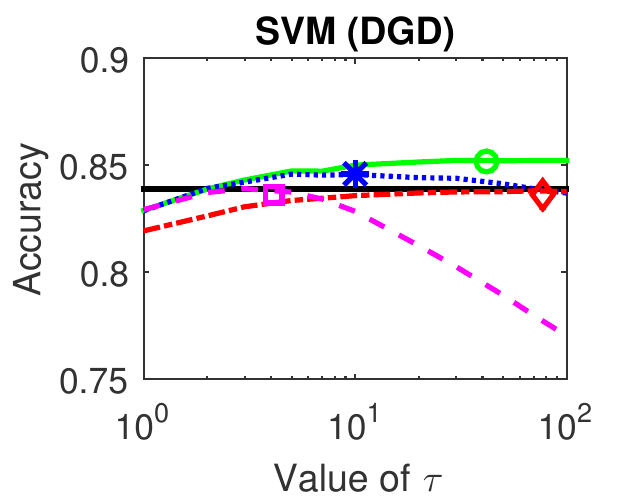}
    \end{subfigure}%
    ~\hspace{-0.1in}
    \begin{subfigure}[b]{0.17\textwidth}
        \centering
        \includegraphics[width=1\linewidth]{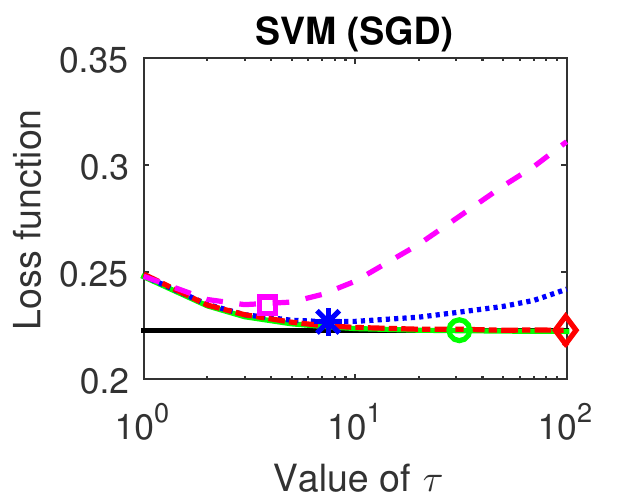}
    \end{subfigure}%
    ~\hspace{-0.1in}
    \begin{subfigure}[b]{0.17\textwidth}
        \centering
        \includegraphics[width=1\linewidth]{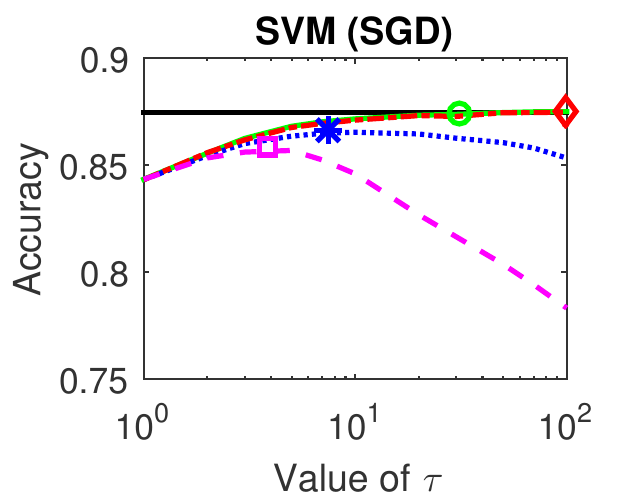}
    \end{subfigure}%
    ~\hspace{-0.1in}
    \begin{subfigure}[b]{0.17\textwidth}
        \includegraphics[width=1\linewidth]{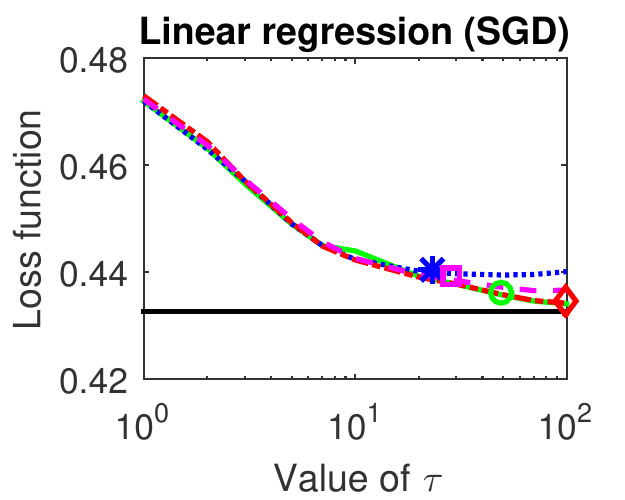}
    \end{subfigure}%
    ~\hspace{-0.1in}
    \begin{subfigure}[b]{0.17\textwidth}
        \includegraphics[width=1\linewidth]{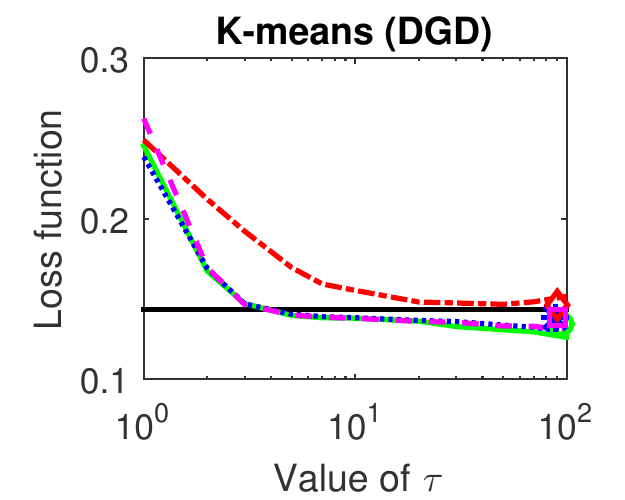}
    \end{subfigure}%
    \vspace{0.05in}

    \begin{subfigure}[b]{0.17\textwidth}
        \centering
        \includegraphics[width=1\linewidth]{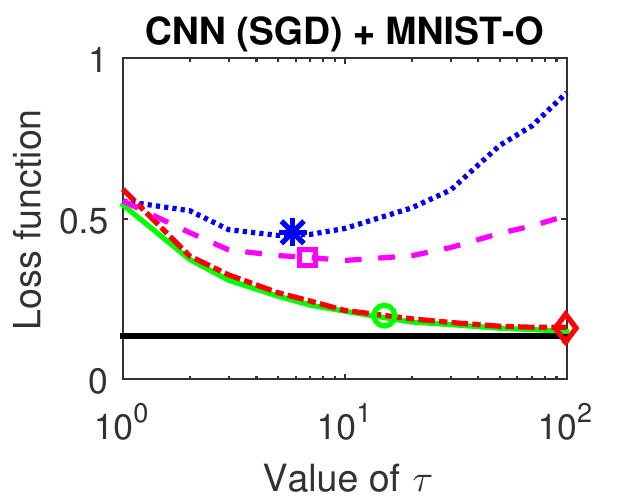}
    \end{subfigure}%
    ~\hspace{-0.1in}
    \begin{subfigure}[b]{0.17\textwidth}
        \centering
        \includegraphics[width=1\linewidth]{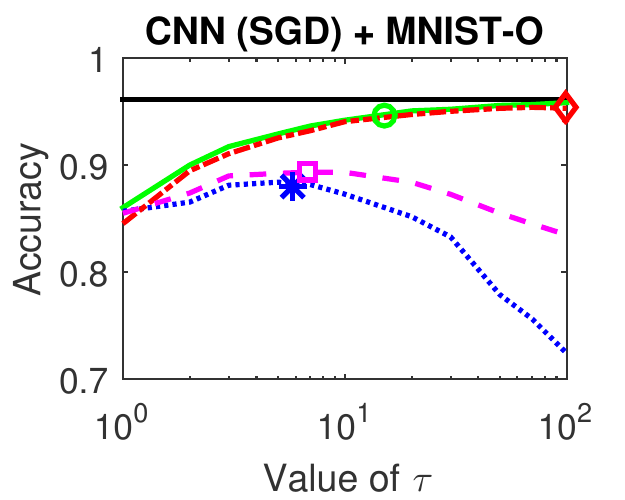}
    \end{subfigure}%
    ~\hspace{-0.1in}
    \begin{subfigure}[b]{0.17\textwidth}
        \centering
        \includegraphics[width=1\linewidth]{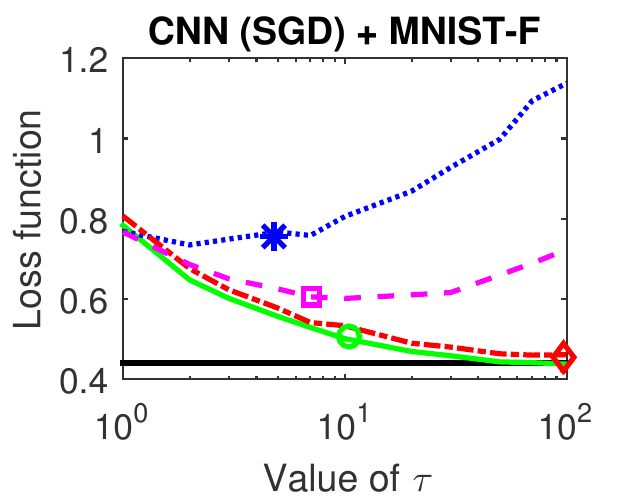}
    \end{subfigure}%
    ~\hspace{-0.1in}
    \begin{subfigure}[b]{0.17\textwidth}
        \centering
        \includegraphics[width=1\linewidth]{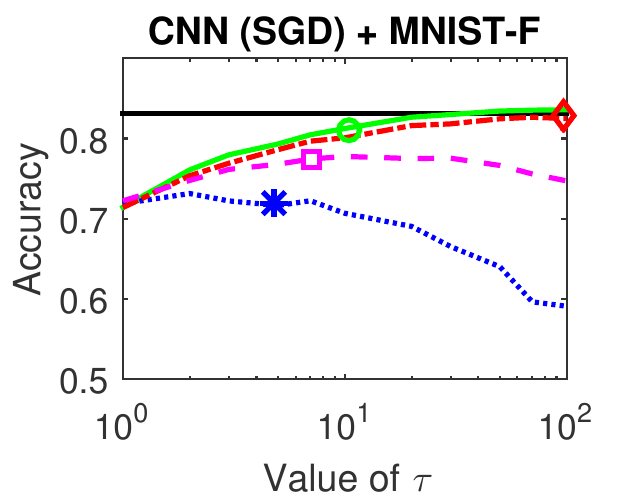}
    \end{subfigure}%
    ~\hspace{-0.1in}
    \begin{subfigure}[b]{0.17\textwidth}
        \centering
        \includegraphics[width=1\linewidth]{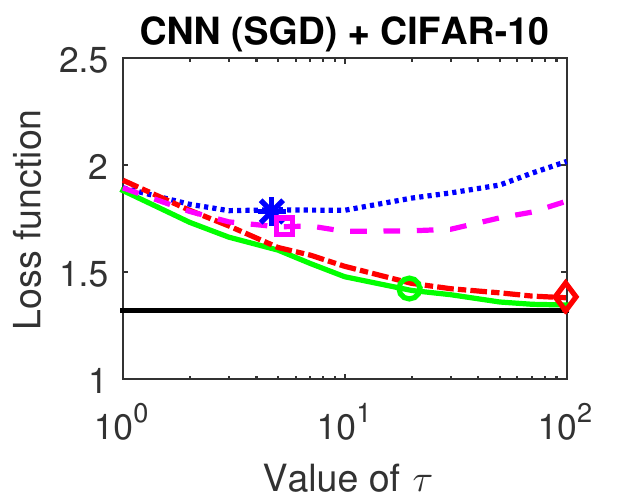}
    \end{subfigure}%
    ~\hspace{-0.1in}
    \begin{subfigure}[b]{0.17\textwidth}
        \centering
        \includegraphics[width=1\linewidth]{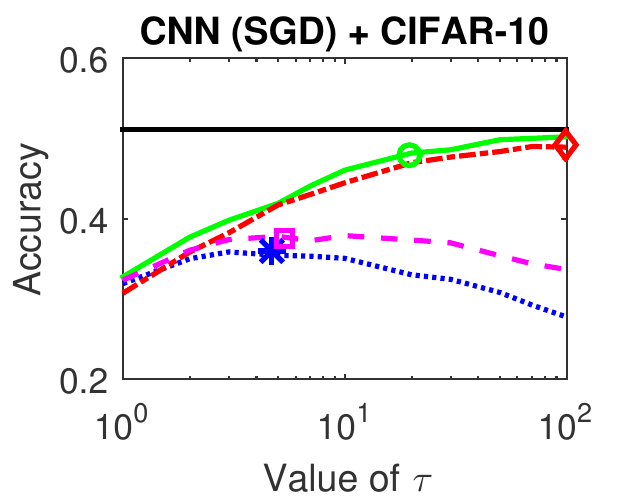}
    \end{subfigure}%

\caption{Loss function values and classification accuracy with different $\tau$. Only SVM and CNN classifiers have accuracy values. The curves show the results from the baseline with different fixed values of $\tau$. Our proposed solution (represented by a single marker for each case) gives an average $\tau$ and loss/accuracy that is close to the optimum in all cases.}
\label{fig:LossAndAccuracyExperimentResults}
\vspace{-0.2in}
\end{figure*}

\subsubsection{Models and Datasets}
\label{subsec:experimentation:modelsDatasets}

We evaluate the training of four different models on five different datasets, which represent a large variety of both small and large models and datasets, as one can expect all these variants to exist in edge computing scenarios. 
The models include squared-SVM, linear regression, K-means, and deep convolutional neural networks (CNN)\footnote{The CNN has $9$ layers with the following structure: $5 \times 5 \times 32$ Convolutional $\rightarrow$ $2 \times 2$ MaxPool $\rightarrow$ Local Response Normalization $\rightarrow$ $5 \times 5 \times 32$ Convolutional $\rightarrow$ Local Response Normalization $\rightarrow$ $2 \times 2$ MaxPool $\rightarrow$ $z \times 256$ Fully connected $\rightarrow$ $256 \times 10$ Fully connected $\rightarrow$ Softmax, where $z$ depends on the input image size and $z=1568$ for MNIST-O and MNIST-F and $z=2048$ for CIFAR-10. This configuration is similar to what is suggested in the TensorFlow tutorial \cite{TensorFlowTutorial}.}. See Table~\ref{tab:learningModels} for a summary of the loss functions of these models, and see \cite{shalev2014understanding, Goodfellow-et-al-2016,bottou2010large} for more details. Among them, the loss functions for squared-SVM (which we refer to as SVM in short in the following) and linear regression satisfy Assumption~\ref{assumption:Convex}, whereas the loss functions for K-means and CNN are non-convex and thus do not satisfy Assumption~\ref{assumption:Convex}.

SVM is trained on the original MNIST dataset (referred to as \emph{MNIST-O}) \cite{lecun1998gradient}, which contains gray-scale images of $70,000$ handwritten digits ($60,000$ for training and $10,000$ for testing). The SVM outputs a binary label that corresponds to whether the digit is even or odd.
We consider both DGD and SGD variants of SVM. The DGD variant only uses $1,000$ training and $1,000$ testing data samples out of the entire dataset in each simulation round, because DGD cannot process a large amount of data. The SGD variant uses the entire MNIST dataset.

Linear regression is performed with SGD on the energy dataset \cite{CANDANEDO201781}, which contains $19,735$ records of measurements from multiple sensors and the energy consumptions of appliances and lights. The model learns to predict the appliance energy consumption from sensor measurements.

K-means is performed with DGD on the user knowledge modeling dataset \cite{Kahraman}, which has $403$ samples each with $5$ attributes summarizing the user interaction with a web environment. The samples can be grouped into $4$ clusters representing different knowledge levels, but we assume that we do not have prior knowledge of this grouping.

CNN is trained using SGD on three different datasets, including MNIST-O as described above, the fashion MNIST dataset (referred to as \emph{MNIST-F}) which has the same format as MNIST-O but includes images of fashion items instead of digits \cite{FashionMNIST}, and the CIFAR-10 dataset which includes $60,000$ color images  ($50,000$ for training and $10,000$ for testing) of $10$ different types of objects \cite{CIFAR10}.
A separate CNN model is trained on each dataset, to perform multi-class classification among the $10$ different labels in the dataset.

\subsubsection{Data Distribution at Different Nodes (Cases 1--\,4)}
\label{subsec:DataDistributionCases}
For the distributed settings, we consider four different ways of distributing the data into different nodes.
In \emph{Case~1}, each data sample is randomly assigned to a node, thus each node has uniform (but not full) information.
In \emph{Case~2}, all the data samples in each node have the same label\footnote{When there are more labels than nodes, each node may have data with more than one label, but the number of labels at each node is no more than the total number of labels divided by the total number of nodes rounded to the next integer.}. This represents the case where each node has non-uniform information, because the entire dataset has samples with multiple different labels.
In \emph{Case~3}, each node has the entire dataset (thus full information).
In \emph{Case~4}, data samples with the first half of the labels are distributed to the first half of the nodes as in Case 1; the other samples are distributed to the second half of the nodes as in Case 2. This represents a combined uniform and non-uniform case.
For datasets that do not have ground truth labels, such the energy dataset used with linear regression, the data to node assignment is based on labels generated from an unsupervised clustering approach.

\subsubsection{Training and Control Parameters}
\label{subsec:experiments:trainingCtrlParams}
In all our experiments, we set the search range parameter $\gamma=10$, the maximum $\tau$ value $\tau_\textrm{max} = 100$. Unless otherwise specified, we set the control parameter $\varphi=0.025$ for SVM, linear regression, and K-means, and $\varphi=5\times 10^{-5}$ for CNN. The gradient descent step size is $\eta=0.01$. The resource (time) budget is set as $R = 15$ seconds unless otherwise specified. Except for the instantaneous results in Section~\ref{subsec:results:instantaneous}, the average results of $15$ independent experiment/simulation runs are shown.

\subsection{Results}

\subsubsection{Loss and Accuracy Values}
\label{sec:LossAndAccuracyFromExperiment}
In our first set of experiments, the SVM, linear regression, and K-means models were trained on the prototype system. 
Due to the resource limitation of Raspberry Pi devices, the CNN model was trained in a simulated environment of $5$~nodes, with resource consumptions generated in the way described in Section~\ref{subsec:experimentation:resource}.

We compare the loss function values of our proposed algorithm (with adaptive $\tau$) to baseline approaches, and also compare the classification accuracies for the SVM and CNN classifiers.
The results are shown in Fig.~\ref{fig:LossAndAccuracyExperimentResults}. We note that \emph{the proposed approach only has one data point (represented by a single marker in the figure) in each case}, because the value of $\tau$ is adaptive in this case and the marker location shows the average $\tau^*$ with the corresponding loss or accuracy. The centralized case also only has one data point but we show a flat line across different values of $\tau$ for the ease of comparison. We see that the proposed approach performs close to the optimal point for all cases and all models\footnote{Note that the loss and accuracy values shown in Fig.~\ref{fig:LossAndAccuracyExperimentResults} can be improved if we allow a longer training time. For example, the accuracy of CNN on MNIST data can become close to $1.0$ if we allow a long enough time for training. The goal of our experiments here is to show that our proposed approach can operate close to the optimal point with a \emph{fixed and limited} amount of training time (resource budget) as defined in Section~\ref{subsec:experiments:trainingCtrlParams}. }. We also see that the (empirically) optimal value of $\tau$ is different for different cases and models, so a fixed value of $\tau$ does not work well for all cases. In some cases, the distributed approach can perform better than the centralized approach, because for a given amount of time budget, federated learning is able to make use of the computation resource at multiple nodes. 
For DGD approaches, Case~3 does not perform as well as Case~1, because the amount of data at each node in Case~3 is larger than that in Case~1, and DGD processes the entire amount of data thus Case~3 requires more resource  for each local update.

Due to the high complexity of evaluating CNN models and the fact that linear regression and K-means models do not provide accuracy values, we focus on the SVM model in the following and provide further insights on the system.

\subsubsection{Varying Number of Nodes}

\begin{figure}
    \centering
    \begin{subfigure}{0.45\textwidth}
        \centering
	   \begin{subfigure}{0.45\textwidth}
             \includegraphics[width=1\linewidth]{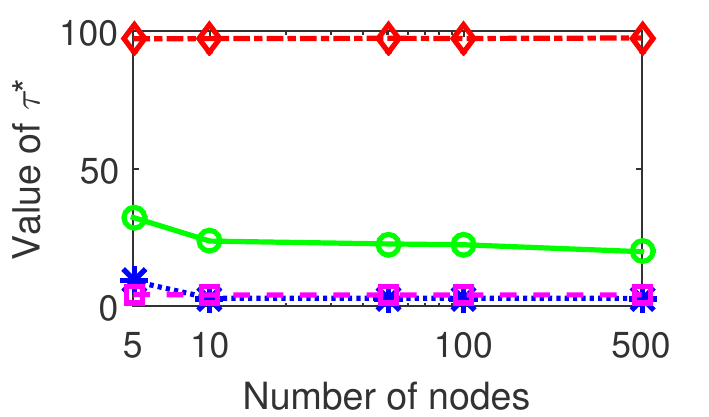}
        \end{subfigure}%
        ~\,\,\,\,
        \begin{subfigure}{0.18\textwidth}
             \includegraphics[width=1\linewidth]{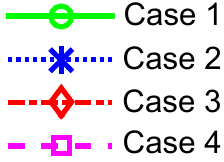}
             \vspace{0.2in}
        \end{subfigure}%
        \caption{$\tau^*$ in proposed algorithm}
    \end{subfigure}%
    \vspace{0.1in}

    \begin{subfigure}{0.45\textwidth}
        \centering
	    \begin{subfigure}{0.24\textwidth}
	        \centering
	        \hspace{1\linewidth}
	    \end{subfigure}%
	    ~
	    \begin{subfigure}{0.24\textwidth}
	        \centering
	        \includegraphics[width=0.9\linewidth]{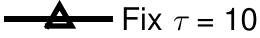}
	    \end{subfigure}%
	    ~
	    \begin{subfigure}{0.24\textwidth}
	        \centering
	        \includegraphics[width=0.9\linewidth]{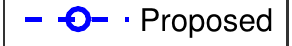}
	    \end{subfigure}%
	    ~
	    \begin{subfigure}{0.24\textwidth}
	        \centering
	        \hspace{1\linewidth}
	    \end{subfigure}%

	    \begin{subfigure}[b]{0.035\textwidth}
	        \centering
	        \includegraphics[width=1\linewidth]{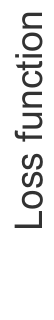}
	    \end{subfigure}%
	    ~
	    \begin{subfigure}[b]{0.23\textwidth}
	        \centering
	        \includegraphics[width=1\linewidth]{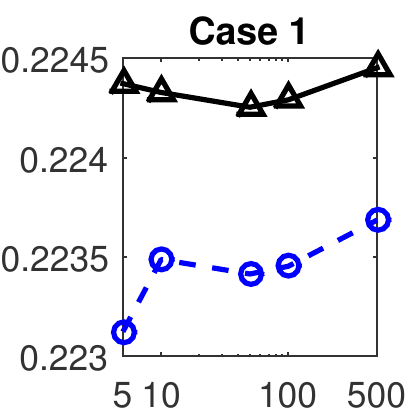}
	    \end{subfigure}%
	    ~
	    \begin{subfigure}[b]{0.23\textwidth}
	        \centering
	        \includegraphics[width=1\linewidth]{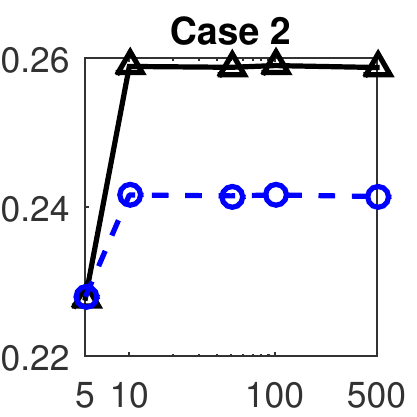}
	    \end{subfigure}%
	    ~
	    \begin{subfigure}[b]{0.23\textwidth}
	        \centering
	        \includegraphics[width=1\linewidth]{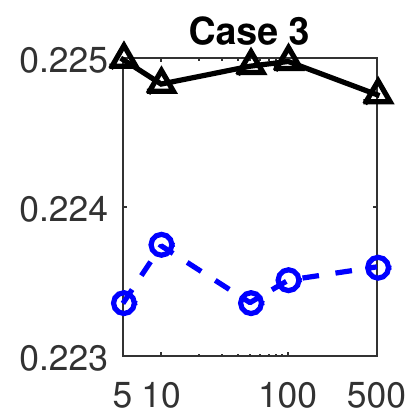}
	    \end{subfigure}%
	    ~
	    \begin{subfigure}[b]{0.23\textwidth}
	        \centering
	        \includegraphics[width=1\linewidth]{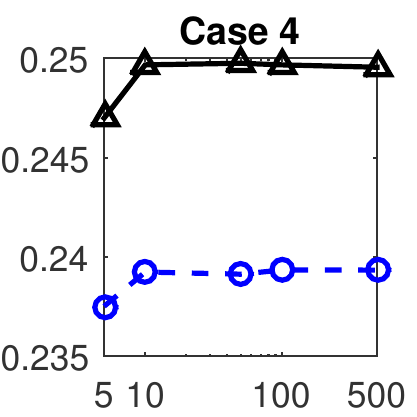}
	    \end{subfigure}%

	    \begin{subfigure}[b]{0.035\textwidth}
	        \centering
	        \includegraphics[width=1\textwidth]{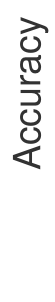}
	    \end{subfigure}%
	    ~
	    \begin{subfigure}[b]{0.23\textwidth}
	        \centering
	        \includegraphics[width=1\textwidth]{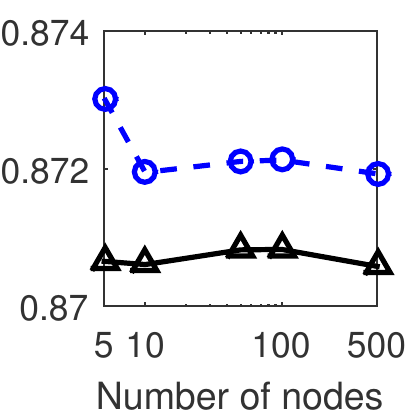}
	    \end{subfigure}%
	    ~
	    \begin{subfigure}[b]{0.23\textwidth}
	        \centering
	        \includegraphics[width=1\linewidth]{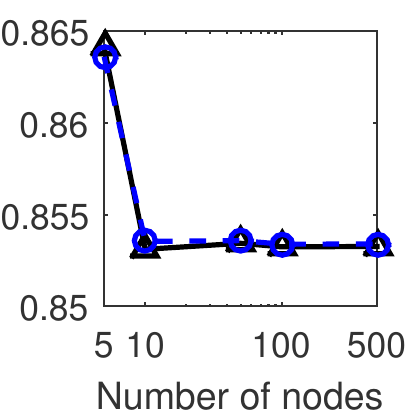}
	    \end{subfigure}%
	    ~
	    \begin{subfigure}[b]{0.23\textwidth}
	        \centering
	        \includegraphics[width=1\linewidth]{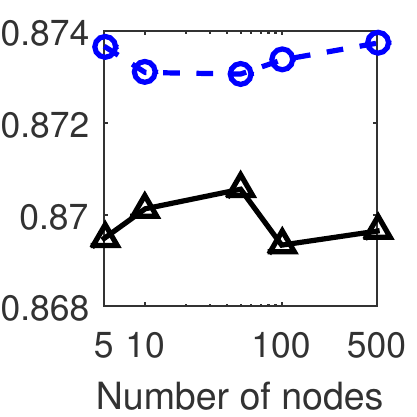}
	    \end{subfigure}%
	    ~
	    \begin{subfigure}[b]{0.23\textwidth}
	        \centering
	        \includegraphics[width=1\linewidth]{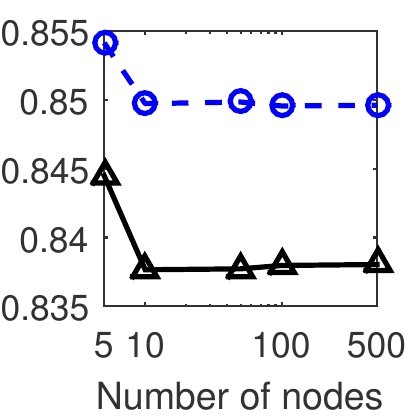}
	    \end{subfigure}%

        \caption{Loss function values and classification accuracy}
    \end{subfigure}%

\caption{SVM (SGD) with different numbers of nodes.}
\label{fig:MultiNodeExperimentResults}
\end{figure}

Results of SVM (SGD) for the number of nodes varying from $5$ to $500$ are shown in Fig.~\ref{fig:MultiNodeExperimentResults}, which are obtained in the simulated environment. Our proposed approach performs better than or similar to the fixed $\tau=10$ baseline in all cases, where we choose fixed $\tau=10$ as the baseline in this and the following evaluations because it is empirically a good value for non-adaptive $\tau$ in different cases according to the results in Fig.~\ref{fig:LossAndAccuracyExperimentResults}.

\subsubsection{Varying Global Aggregation Time}

\begin{figure}
    \centering
    \begin{subfigure}{0.45\textwidth}
        \centering
	   \begin{subfigure}{0.45\textwidth}
             \includegraphics[width=1\linewidth]{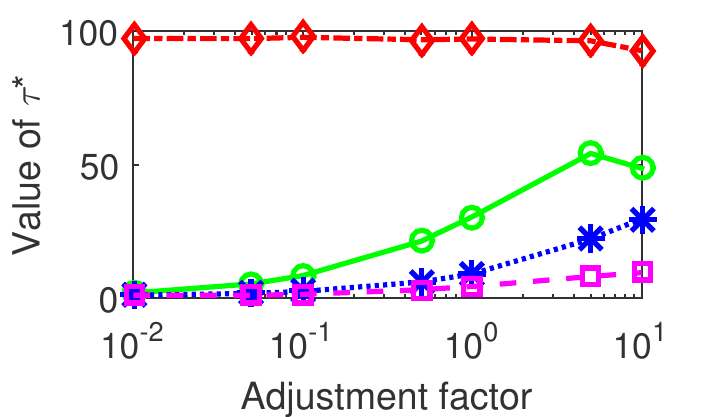}
        \end{subfigure}%
        ~\,\,\,\,
        \begin{subfigure}{0.18\textwidth}
             \includegraphics[width=1\linewidth]{Varying_Tau_Legend.pdf}
             \vspace{0.2in}
        \end{subfigure}%
        \caption{$\tau^*$ in proposed algorithm}
    \end{subfigure}%
    \vspace{0.1in}

    \begin{subfigure}{0.45\textwidth}
        \centering
	    \begin{subfigure}{0.24\textwidth}
	        \centering
	        \hspace{1\linewidth}
	    \end{subfigure}%
	    ~
	    \begin{subfigure}{0.24\textwidth}
	        \centering
	        \includegraphics[width=0.9\linewidth]{Varying_EachCase_Legend1.pdf}
	    \end{subfigure}%
	    ~
	    \begin{subfigure}{0.24\textwidth}
	        \centering
	        \includegraphics[width=0.9\linewidth]{Varying_EachCase_Legend2.pdf}
	    \end{subfigure}%
	    ~
	    \begin{subfigure}{0.24\textwidth}
	        \centering
	        \hspace{1\linewidth}
	    \end{subfigure}%

	    \begin{subfigure}[b]{0.035\textwidth}
	        \centering
	        \includegraphics[width=1\linewidth]{LossLabel.pdf}
	    \end{subfigure}%
	    ~
	    \begin{subfigure}[b]{0.23\textwidth}
	        \centering
	        \includegraphics[width=1\linewidth]{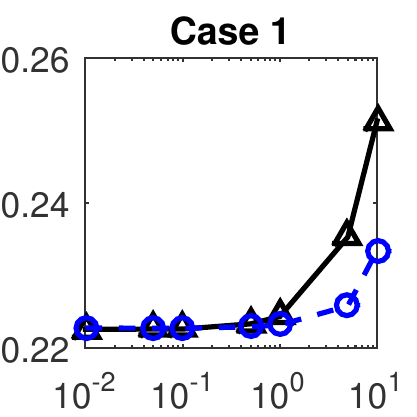}
	    \end{subfigure}%
	    ~
	    \begin{subfigure}[b]{0.23\textwidth}
	        \centering
	        \includegraphics[width=1\linewidth]{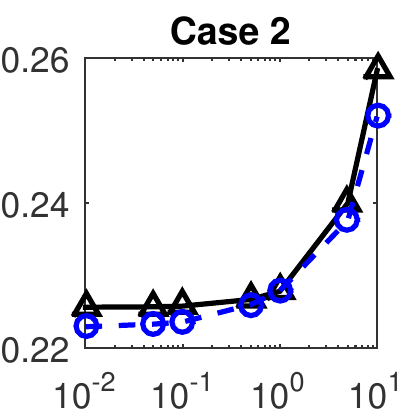}
	    \end{subfigure}%
	    ~
	    \begin{subfigure}[b]{0.23\textwidth}
	        \centering
	        \includegraphics[width=1\linewidth]{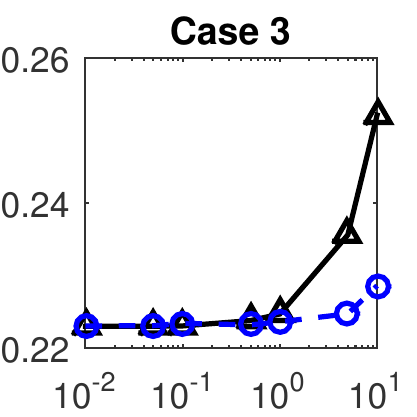}
	    \end{subfigure}%
	    ~
	    \begin{subfigure}[b]{0.23\textwidth}
	        \centering
	        \includegraphics[width=1\linewidth]{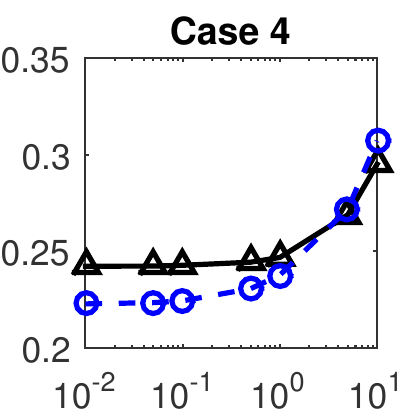}
	    \end{subfigure}%

	    \begin{subfigure}[b]{0.035\textwidth}
	        \centering
	        \includegraphics[width=1\textwidth]{AccuracyLabel.pdf}
	    \end{subfigure}%
	    ~
	    \begin{subfigure}[b]{0.23\textwidth}
	        \centering
	        \includegraphics[width=1\textwidth]{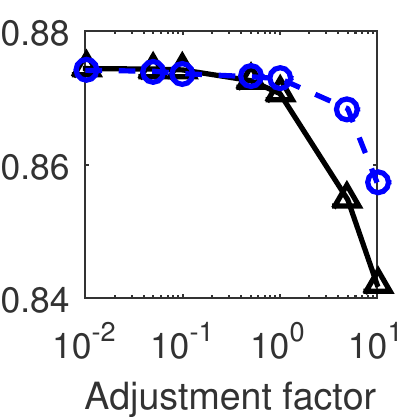}
	    \end{subfigure}%
	    ~
	    \begin{subfigure}[b]{0.23\textwidth}
	        \centering
	        \includegraphics[width=1\linewidth]{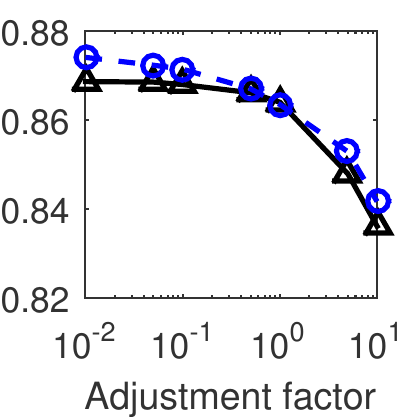}
	    \end{subfigure}%
	    ~
	    \begin{subfigure}[b]{0.23\textwidth}
	        \centering
	        \includegraphics[width=1\linewidth]{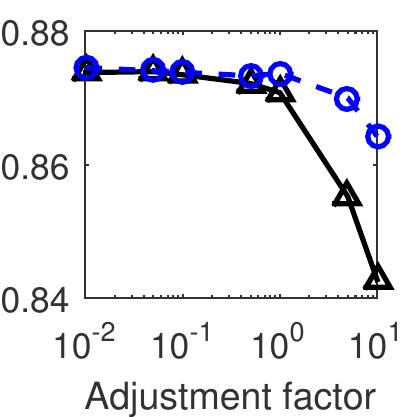}
	    \end{subfigure}%
	    ~
	    \begin{subfigure}[b]{0.23\textwidth}
	        \centering
	        \includegraphics[width=1\linewidth]{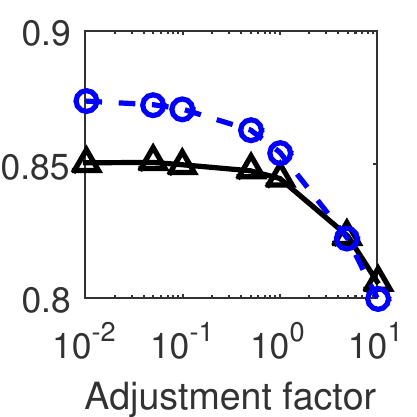}
	    \end{subfigure}%

        \caption{Loss function values and classification accuracy}
    \end{subfigure}%

\caption{SVM (SGD) with different global aggregation times.}
\label{fig:VaryingGlobalAggSGDExperimentResults}
\end{figure}

To study the impact of different resource consumption (time) for global aggregation, we modify the simulation environment so that the global aggregation time is scaled by an \emph{adjustment factor}. The actual time of global aggregation is equal to the original global aggregation time multiplied by the adjustment factor, thus a small adjustment factor corresponds to a small global aggregation time. The results for SVM (SGD) are shown in Fig.~\ref{fig:VaryingGlobalAggSGDExperimentResults}.
\if\citetechreport1
Additional results for SVM (DGD) are included in \cite[Appendix~\ref{append:VaryingGlobalAggDGDExperimentResults}]{JournalTechReport}.
\else
Additional results for SVM (DGD) are included in Appendix~\ref{append:VaryingGlobalAggDGDExperimentResults}.
\fi
We can see that as one would intuitively expect, a larger global aggregation time generally results in a larger $\tau^*$ for the proposed algorithm, because when it takes more time to perform global aggregation, the system should perform global aggregation less frequently, to make the best use of available time (resource). The fact that $\tau^*$ slightly decreases when the adjustment factor is large is because in this case, the global aggregation time is so large that only a few rounds of global aggregation can be performed before reaching the resource budget, and the value of $\tau^*$ will be decreased in the last round to remain within the resource budget (see Line~\ref{alg:protocolAggregator:decreaseTauToRemainWithinBudget} of Algorithm~\ref{alg:protocolAggregator}).
Comparing to the fixed $\tau=10$ baseline, the proposed algorithm performs better in (almost) all cases.

\subsubsection{Varying Total Time Budget}

\begin{figure}
    \centering
    \begin{subfigure}{0.45\textwidth}
        \centering
	   \begin{subfigure}{0.45\textwidth}
             \includegraphics[width=1\linewidth]{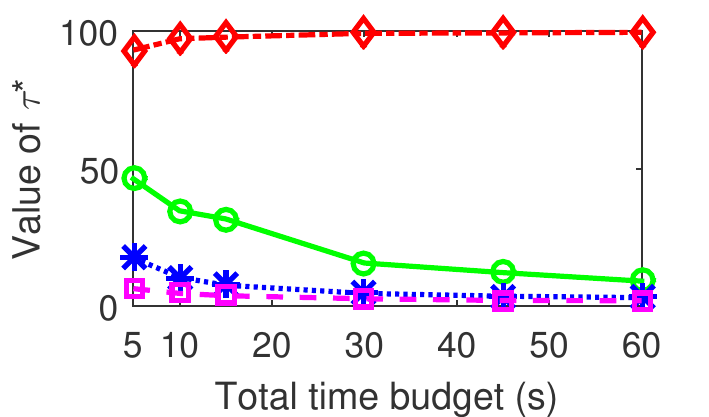}
        \end{subfigure}%
        ~\,\,\,\,
        \begin{subfigure}{0.18\textwidth}
             \includegraphics[width=1\linewidth]{Varying_Tau_Legend.pdf}
             \vspace{0.2in}
        \end{subfigure}%
        \caption{$\tau^*$ in proposed algorithm}
    \end{subfigure}%
    \vspace{0.1in}

    \begin{subfigure}{0.45\textwidth}
        \centering
	    \begin{subfigure}{0.24\textwidth}
	        \centering
	        \hspace{1\linewidth}
	    \end{subfigure}%
	    ~
	    \begin{subfigure}{0.24\textwidth}
	        \centering
	        \includegraphics[width=0.9\linewidth]{Varying_EachCase_Legend1.pdf}
	    \end{subfigure}%
	    ~
	    \begin{subfigure}{0.24\textwidth}
	        \centering
	        \includegraphics[width=0.9\linewidth]{Varying_EachCase_Legend2.pdf}
	    \end{subfigure}%
	    ~
	    \begin{subfigure}{0.24\textwidth}
	        \centering
	        \hspace{1\linewidth}
	    \end{subfigure}%

	    \begin{subfigure}[b]{0.035\textwidth}
	        \centering
	        \includegraphics[width=1\linewidth]{LossLabel.pdf}
	    \end{subfigure}%
	    ~
	    \begin{subfigure}[b]{0.23\textwidth}
	        \centering
	        \includegraphics[width=1\linewidth]{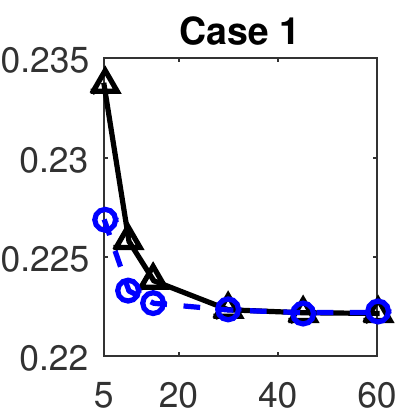}
	    \end{subfigure}%
	    ~
	    \begin{subfigure}[b]{0.23\textwidth}
	        \centering
	        \includegraphics[width=1\linewidth]{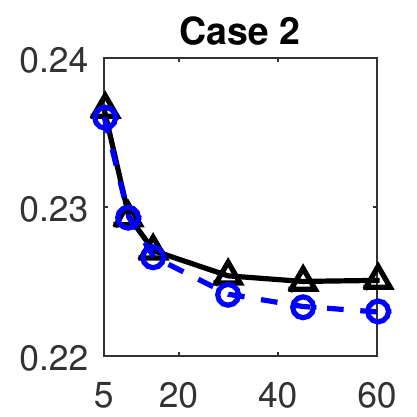}
	    \end{subfigure}%
	    ~
	    \begin{subfigure}[b]{0.23\textwidth}
	        \centering
	        \includegraphics[width=1\linewidth]{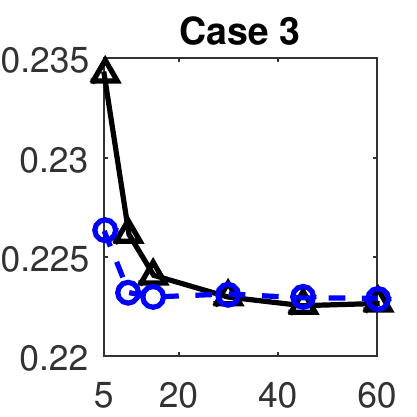}
	    \end{subfigure}%
	    ~
	    \begin{subfigure}[b]{0.23\textwidth}
	        \centering
	        \includegraphics[width=1\linewidth]{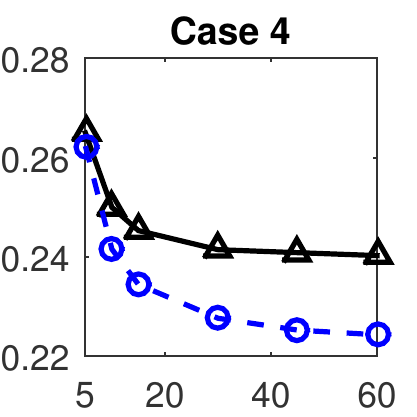}
	    \end{subfigure}%

	    \begin{subfigure}[b]{0.035\textwidth}
	        \centering
	        \includegraphics[width=1\textwidth]{AccuracyLabel.pdf}
	    \end{subfigure}%
	    ~
	    \begin{subfigure}[b]{0.23\textwidth}
	        \centering
	        \includegraphics[width=1\textwidth]{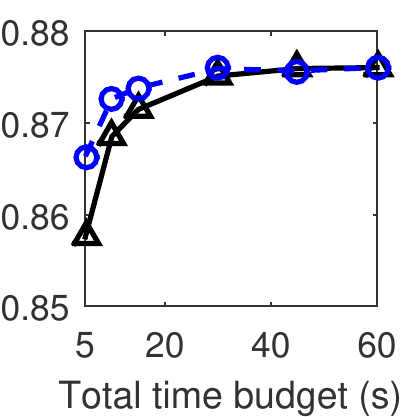}
	    \end{subfigure}%
	    ~
	    \begin{subfigure}[b]{0.23\textwidth}
	        \centering
	        \includegraphics[width=1\linewidth]{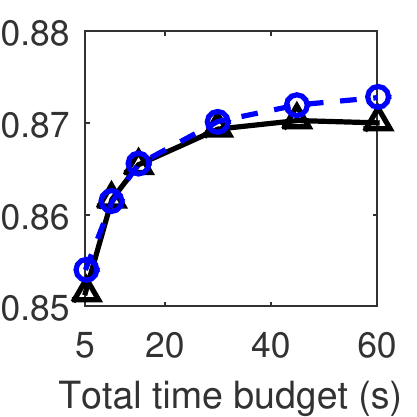}
	    \end{subfigure}%
	    ~
	    \begin{subfigure}[b]{0.23\textwidth}
	        \centering
	        \includegraphics[width=1\linewidth]{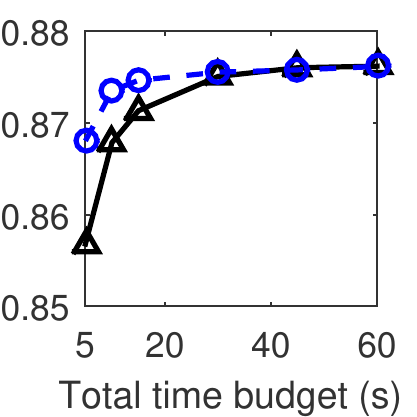}
	    \end{subfigure}%
	    ~
	    \begin{subfigure}[b]{0.23\textwidth}
	        \centering
	        \includegraphics[width=1\linewidth]{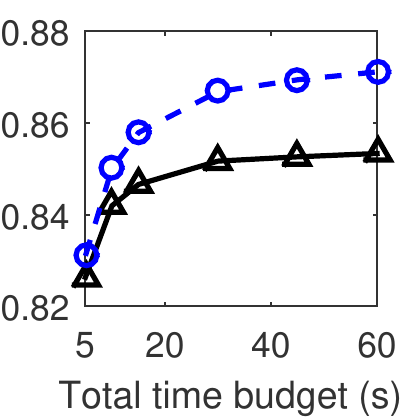}
	    \end{subfigure}%

        \caption{Loss function values and classification accuracy}
    \end{subfigure}%

\caption{SVM (SGD) with different total time budgets.}
\label{fig:VaryingTotalTimeSGDExperimentResults}
\end{figure}

We evaluate the impact of the total time (resource) budget on the prototype system. Results for SVM (SGD) are shown in Fig.~\ref{fig:VaryingTotalTimeSGDExperimentResults}.
\if\citetechreport1
Further results for SVM (DGD) are included in \cite[Appendix~\ref{append:VaryingTotalTimeDGDExperimentResults}]{JournalTechReport}.
\else
Further results for SVM (DGD) are included in Appendix~\ref{append:VaryingTotalTimeDGDExperimentResults}.
\fi
We see that except for Case 3 where all nodes have the same dataset, the value of $\tau^*$ of the proposed algorithm decreases with the total time budget. This aligns with the discussion in Section~\ref{subsec:approxSolutionControlAlg} that $\tau^*$ becomes close to one when the resource budget is large enough. We also see that the proposed algorithm performs better than or similar to the fixed $\tau=10$ baseline in all cases.

\subsubsection{Instantaneous Behavior}
\label{subsec:results:instantaneous}

We further study the instantaneous behavior of our system for a single run of $30$ seconds (for each case) on the prototype system.  Results for SVM (DGD) is shown in Fig.~\ref{fig:InstantSVMDGDExperimentResults}.
\if\citetechreport1
Further results for SVM (SGD) are available in \cite[Appendix~\ref{append:InstantSVMSGDExperimentResults}]{JournalTechReport}.
\else
Further results for SVM (SGD) are available in Appendix~\ref{append:InstantSVMSGDExperimentResults}. 
\fi
We see that the value of $\tau^*$ remains stable after an initial adaptation period, showing that the control algorithm is stable. The value of $\tau^*$ decreases at the end due to adjustment caused by the system reaching the resource budget (see Line~\ref{alg:protocolAggregator:decreaseTauToRemainWithinBudget} of Algorithm~\ref{alg:protocolAggregator}).
As expected, the gradient deviation $\delta$ is larger for Cases~2 and 4 where the data samples at different nodes are non-uniform. The same is observed for $\rho$ and $\beta$, indicating that the model parameter $\mathbf{w}$ is in a less smooth region for Cases~2 and 4. In Case 3, the data at different nodes are equal so we always have $\mathbf{w}_i (t) = \mathbf{w}(t)$ regardless of whether global aggregation is performed in iteration $t$. Thus, the estimated $\rho$ and $\beta$ values are zero by definition, as explained in the remark in Section~\ref{subsec:ControlAlgParamEst}.
Case 3 of SVM (DGD) has a much larger value of $c$ because it processes more data than in other cases and thus takes more time, as explained before. The value of $b$ exhibits fluctuations because of the randomness of the wireless channel.

\begin{figure}
    \centering
    \begin{subfigure}[b]{0.11\textwidth}
        \centering
        \includegraphics[width=0.9\linewidth]{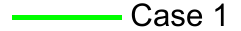}
    \end{subfigure}%
    \begin{subfigure}[b]{0.11\textwidth}
        \centering
        \includegraphics[width=0.9\linewidth]{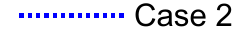}
    \end{subfigure}%    
    \begin{subfigure}[b]{0.11\textwidth}
        \centering
        \includegraphics[width=0.9\linewidth]{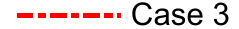}
    \end{subfigure}%
    \begin{subfigure}[b]{0.11\textwidth}
        \centering
        \includegraphics[width=0.9\linewidth]{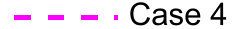}
    \end{subfigure}%

    \begin{subfigure}[b]{0.12\textwidth}
        \centering
        \includegraphics[width=1\linewidth]{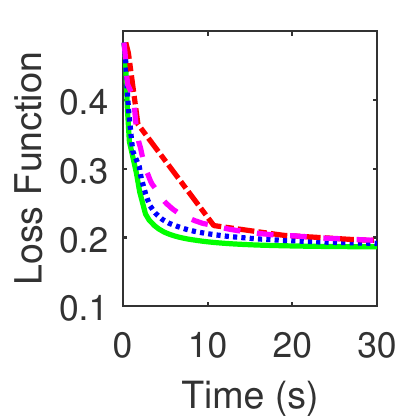}
    \end{subfigure}%
    ~\hspace{-0.1in}
    \begin{subfigure}[b]{0.12\textwidth}
        \centering
        \includegraphics[width=1\linewidth]{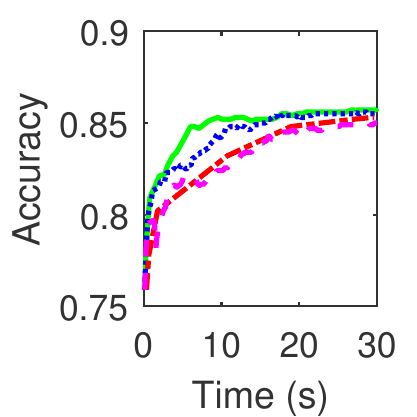}
    \end{subfigure}%
    ~\hspace{-0.1in}
    \begin{subfigure}[b]{0.12\textwidth}
        \centering
        \includegraphics[width=1\linewidth]{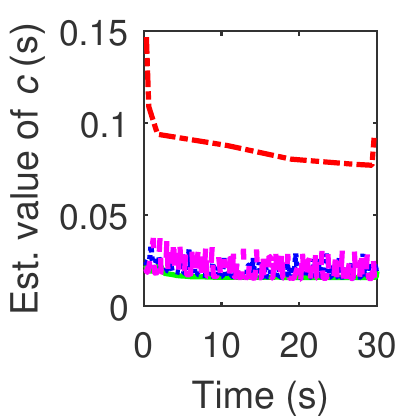}
    \end{subfigure}%
    ~\hspace{-0.1in}
    \begin{subfigure}[b]{0.12\textwidth}
        \centering
        \includegraphics[width=1\linewidth]{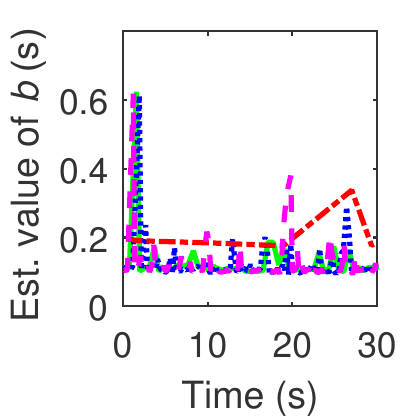}
    \end{subfigure}%
    \vspace{-0.1in}

    \begin{subfigure}[b]{0.12\textwidth}
        \centering
        \includegraphics[width=1\linewidth]{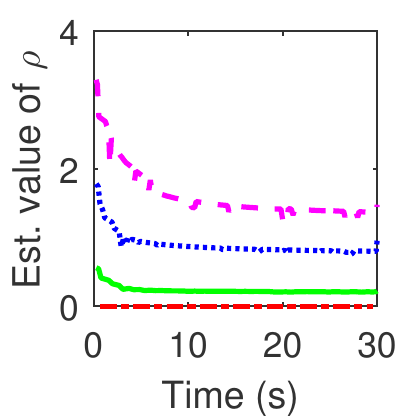}
    \end{subfigure}%
    ~\hspace{-0.1in}
    \begin{subfigure}[b]{0.12\textwidth}
        \centering
        \includegraphics[width=1\linewidth]{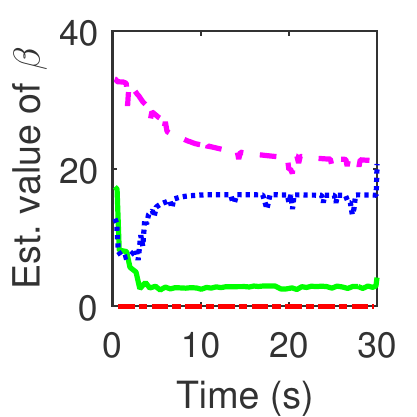}
    \end{subfigure}%
    ~\hspace{-0.1in}
    \begin{subfigure}[b]{0.12\textwidth}
        \centering
        \includegraphics[width=1\linewidth]{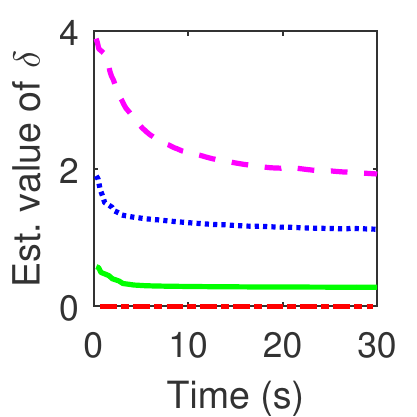}
    \end{subfigure}%
    ~\hspace{-0.1in}
    \begin{subfigure}[b]{0.12\textwidth}
        \centering
        \includegraphics[width=1\linewidth]{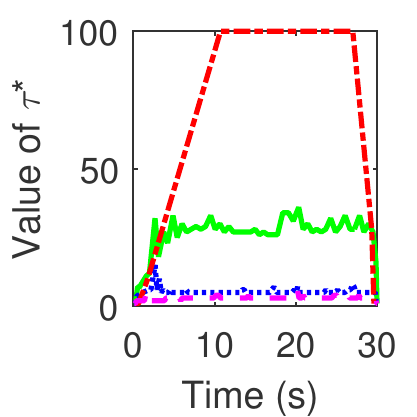}
    \end{subfigure}%

\caption{Instantaneous results of SVM (DGD) with the proposed algorithm.}
\label{fig:InstantSVMDGDExperimentResults}
\end{figure}

\subsubsection{Sensitivity of $\varphi$}
\label{subsubsec:sensitivityOfPhiResult}

\begin{figure}[t]
    \centering
    \begin{subfigure}[b]{0.18\textwidth}
        \centering
        \includegraphics[width=1\linewidth]{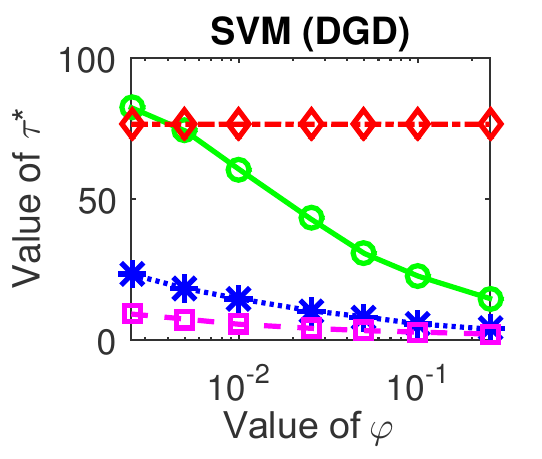}
    \end{subfigure}%
    ~
    \begin{subfigure}[b]{0.18\textwidth}
        \centering
        \includegraphics[width=1\linewidth]{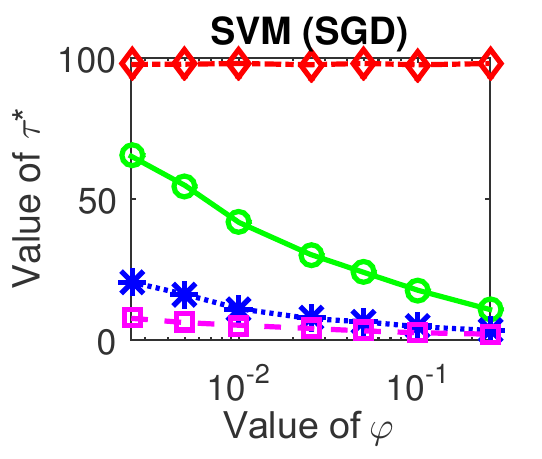}
    \end{subfigure}%
    ~
    \begin{subfigure}[b]{0.08\textwidth}
        \centering
        \includegraphics[width=1\linewidth]{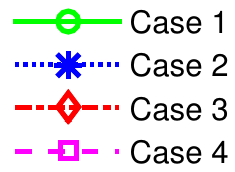}
        \vspace{0.4in}
    \end{subfigure}%    
    \caption{Impact of $\varphi$ on the average value of $\tau^*$ in the proposed algorithm.}
    \label{fig:sensitivityControlParameterExperimentationResult}
\end{figure}

The sensitivity of the control parameter $\varphi$ evaluated on the prototype system is shown in Fig.~\ref{fig:sensitivityControlParameterExperimentationResult}. We see that the relationship among $\tau^*$ in different cases is mostly maintained with different values of $\varphi$. The value of $\tau^*$ decreases approximately linearly with $\log\varphi$, which is consistent with the fact that there is an exponential term w.r.t. $\tau$ in $h(\tau)$ (and thus $G(\tau)$). 
For Case~3, $\tau^*$ remains the same with different $\varphi$, because $h(\tau)=0$ in this case by definition (see the remark in Section~\ref{subsec:ControlAlgParamEst}) and the value of $\varphi$ does not affect $\tau^*$, as $G(\tau) \propto \frac{1}{\varphi}$ independently of $\tau$ in this case according to (\ref{eq:GTauDef}).
We also see that small changes of $\varphi$ does not change $\tau^*$ much, indicating that one can take big steps when tuning $\varphi$ in practice and the tuning is not difficult.

\subsubsection{Comparison to Asynchronous Distributed Gradient Descent}

Asynchronous gradient descent~\cite{Chen2016} is an alternative to the typically used synchronous gradient descent in federated learning. With asynchronous gradient descent, the edge nodes operate in an asynchronous manner. Each edge node pulls the most up-to-date model parameter from the aggregator, computes the gradient on its local dataset, then sends the gradient back to the aggregator. The aggregator performs gradient descent according to the step size $\eta$ weighted by the dataset sizes of each node, similar to the combination of (\ref{eq:localUpdate}) and (\ref{eq:globalAverage}). The process repeats until the training finishes. 
Asynchronous gradient descent is able to fully utilize the available computational resource at each node by running more gradient descent steps at more powerful (faster) nodes. However, the asynchronism may hurt the overall performance.

It was shown in~\cite{Chen2016} that synchronous gradient descent has benefits over asynchronous gradient descent in a datacenter setting. Here, we study their differences in the edge computing setting with heterogeneous resources (laptops and Raspberry Pis in our experiment) and different data distributions (Cases 1--4). The results for DGD and SGD with SVM are shown in Figs.~\ref{fig:asyncDGD} and \ref{fig:asyncSGD}, respectively. We see that the performance of asynchronous gradient descent is much worse than synchronous gradient descent for non-uniform data distribution in Cases 2 and 4, with slower convergence, sudden changes (indicating instability of the training process), and convergence to higher loss and lower accuracy values. This is because the model tends overfit the datasets on the faster nodes, as many more steps of gradient descent are performed on these nodes compared to the slower nodes. With uniform data distribution (Cases 1 and 3), asynchronous gradient descent performs similar as or slightly better than synchronous gradient descent, because when the datasets at different nodes are similar (Case~1) or equal (Case 3), there is not much harm caused by overfitting the data on the faster nodes. 

Considering the overall performance in all Cases 1--4, we can conclude that it is still better to perform federated learning with synchronous gradient descent as we do throughout this paper. However, how to make more efficient use of heterogeneous resources is something worth investigating in the future.

\begin{figure}[t]
    \centering
    \begin{subfigure}[b]{0.11\textwidth}
        \centering
        \includegraphics[width=1\textwidth]{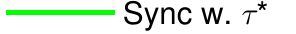}
    \end{subfigure}%
    ~
    \begin{subfigure}[b]{0.11\textwidth}
        \centering
        \includegraphics[width=1\textwidth]{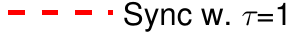}
    \end{subfigure}%
    ~\,\,\,\,
    \begin{subfigure}[b]{0.11\textwidth}
        \centering
        \includegraphics[width=1\textwidth]{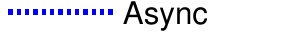}
    \end{subfigure}%
        
    \begin{subfigure}[b]{0.015\textwidth}
        \centering
        \includegraphics[width=1\textwidth]{LossLabel.pdf}
    \end{subfigure}%
    ~
    \begin{subfigure}[b]{0.11\textwidth}
        \centering
        \includegraphics[width=1\textwidth]{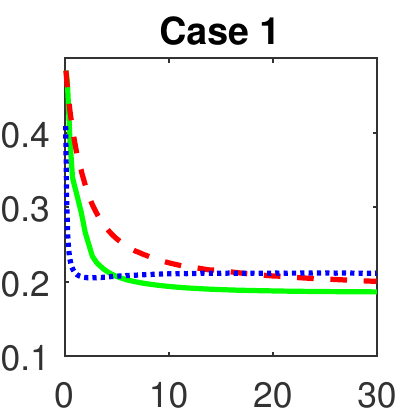}
    \end{subfigure}%
    ~
    \begin{subfigure}[b]{0.11\textwidth}
        \centering
        \includegraphics[width=1\linewidth]{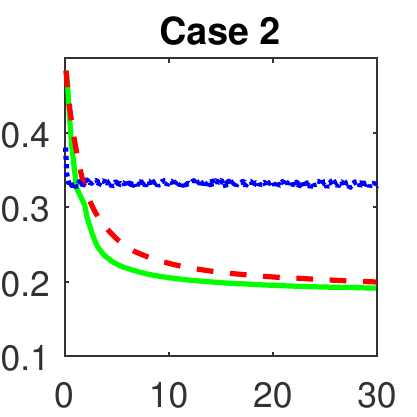}
    \end{subfigure}%
    ~
    \begin{subfigure}[b]{0.11\textwidth}
        \centering
        \includegraphics[width=1\linewidth]{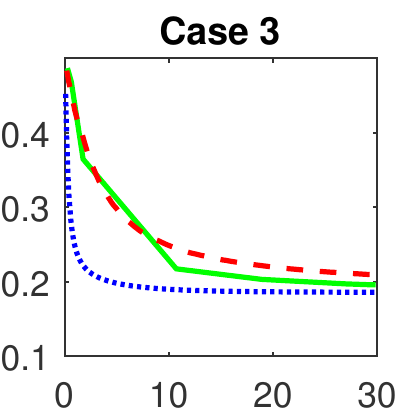}
    \end{subfigure}%
    ~
    \begin{subfigure}[b]{0.11\textwidth}
        \centering
        \includegraphics[width=1\linewidth]{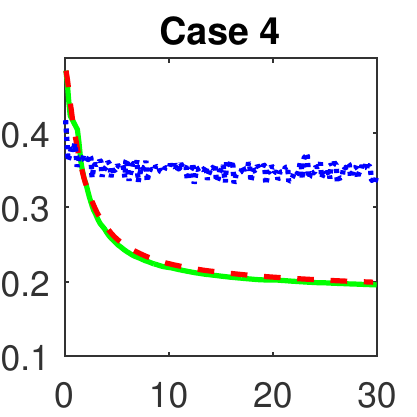}
    \end{subfigure}%

    \begin{subfigure}[b]{0.015\textwidth}
        \centering
        \includegraphics[width=1\textwidth]{AccuracyLabel.pdf}
    \end{subfigure}%
    ~
    \begin{subfigure}[b]{0.11\textwidth}
        \centering
        \includegraphics[width=1\textwidth]{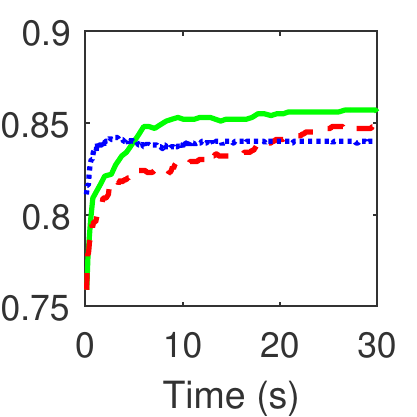}
    \end{subfigure}%
    ~
    \begin{subfigure}[b]{0.11\textwidth}
        \centering
        \includegraphics[width=1\linewidth]{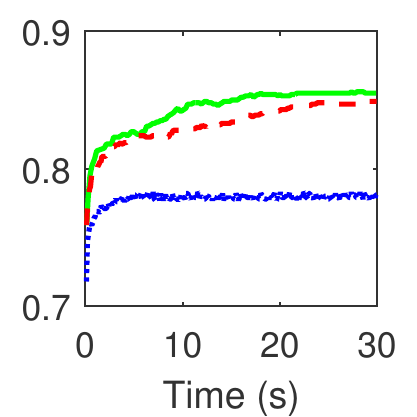}
    \end{subfigure}%
    ~
    \begin{subfigure}[b]{0.11\textwidth}
        \centering
        \includegraphics[width=1\linewidth]{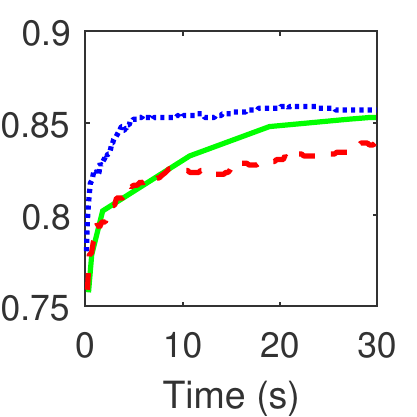}
    \end{subfigure}%
    ~
    \begin{subfigure}[b]{0.11\textwidth}
        \centering
        \includegraphics[width=1\linewidth]{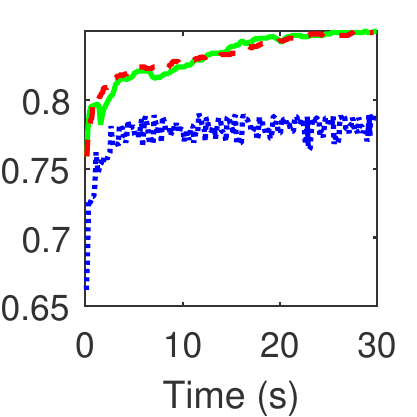}
    \end{subfigure}%
    \caption{Synchronous vs. asynchronous distributed DGD with SVM.}
    \label{fig:asyncDGD}
\end{figure}

\begin{figure}[t]
    \centering
    \begin{subfigure}[b]{0.11\textwidth}
        \centering
        \includegraphics[width=1\textwidth]{Async_Legend1.pdf}
    \end{subfigure}%
    ~
    \begin{subfigure}[b]{0.11\textwidth}
        \centering
        \includegraphics[width=1\textwidth]{Async_Legend2.pdf}
    \end{subfigure}%
    ~\,\,\,\,
    \begin{subfigure}[b]{0.11\textwidth}
        \centering
        \includegraphics[width=1\textwidth]{Async_Legend3.pdf}
    \end{subfigure}%
        
    \begin{subfigure}[b]{0.015\textwidth}
        \centering
        \includegraphics[width=1\textwidth]{LossLabel.pdf}
    \end{subfigure}%
    ~
    \begin{subfigure}[b]{0.11\textwidth}
        \centering
        \includegraphics[width=1\textwidth]{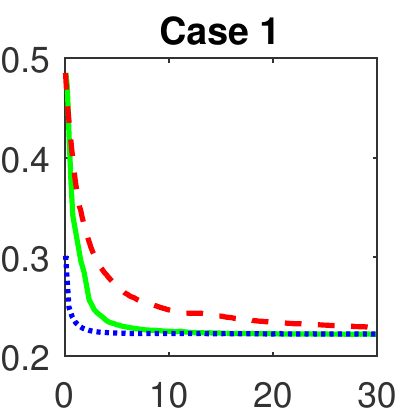}
    \end{subfigure}%
    ~
    \begin{subfigure}[b]{0.11\textwidth}
        \centering
        \includegraphics[width=1\linewidth]{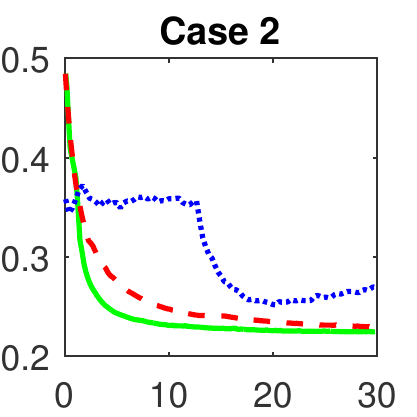}
    \end{subfigure}%
    ~
    \begin{subfigure}[b]{0.11\textwidth}
        \centering
        \includegraphics[width=1\linewidth]{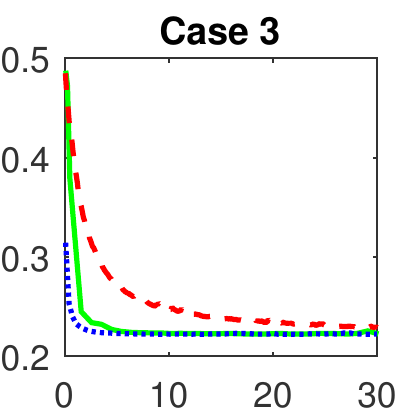}
    \end{subfigure}%
    ~
    \begin{subfigure}[b]{0.11\textwidth}
        \centering
        \includegraphics[width=1\linewidth]{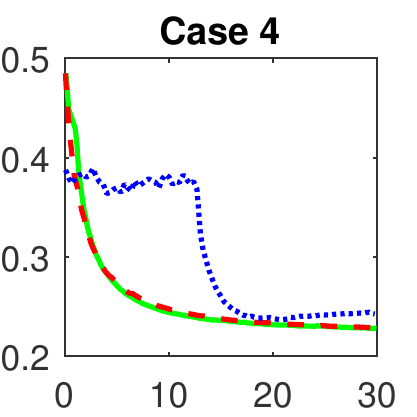}
    \end{subfigure}%

    \begin{subfigure}[b]{0.015\textwidth}
        \centering
        \includegraphics[width=1\textwidth]{AccuracyLabel.pdf}
    \end{subfigure}%
    ~
    \begin{subfigure}[b]{0.11\textwidth}
        \centering
        \includegraphics[width=1\textwidth]{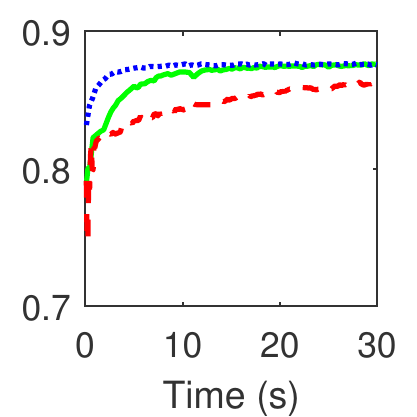}
    \end{subfigure}%
    ~
    \begin{subfigure}[b]{0.11\textwidth}
        \centering
        \includegraphics[width=1\linewidth]{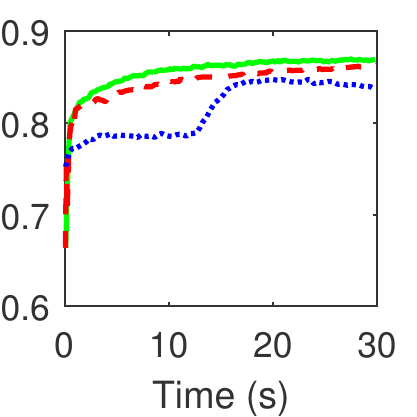}
    \end{subfigure}%
    ~
    \begin{subfigure}[b]{0.11\textwidth}
        \centering
        \includegraphics[width=1\linewidth]{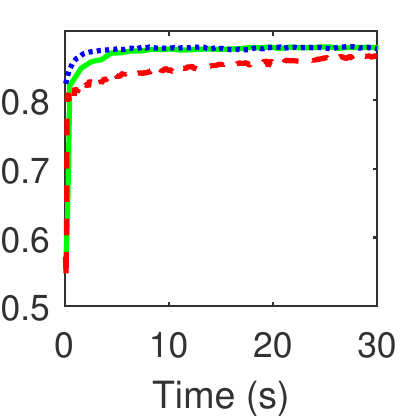}
    \end{subfigure}%
    ~
    \begin{subfigure}[b]{0.11\textwidth}
        \centering
        \includegraphics[width=1\linewidth]{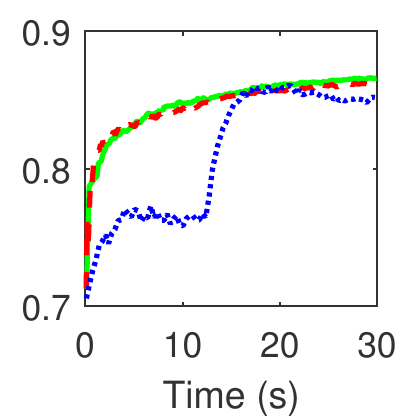}
    \end{subfigure}%
    \caption{Synchronous vs. asynchronous distributed SGD with SVM.}
    \label{fig:asyncSGD}
\end{figure}

\section{Conclusion}

In this paper, we have focused on gradient-descent based federated
learning that include local update and global aggregation steps.
Each step of local update and global aggregation consumes
resources. We have analyzed the convergence bound for federated learning with non-i.i.d. data distributions.  Using this theoretical bound, a control algorithm has been proposed to achieve the desirable trade-off between local update and global aggregation in order to
minimize the loss function under a resource budget constraint.
Extensive experimentation results confirm the effectiveness of our proposed
algorithm.
Future work can investigate how to make the most efficient use of heterogeneous resources for distributed learning, as well as the theoretical convergence analysis of some form of non-convex loss functions representing deep neural networks.

\bibliographystyle{IEEEtran}
\bibliography{ref}

\if\citetechreport1
    \clearpage
\fi

\appendix

\subsection{Distributed vs. Centralized Gradient Descent}
\label{append:optimalityOfDistributedGradDescent}

\begin{proposition}
\label{prop:distGradTauEqualOne}
When $\tau=1$, Algorithm~\ref{alg:distGradDescent} yields the following recurrence relation for $\mathbf{w}{(t)}$:
\begin{equation}
\mathbf{w}{(t)}=\mathbf{w}{(t-1)}-\eta\nabla F\left(\mathbf{w}{(t-1)}\right)
\label{eq:recurrenceCentralizedGradDescent}
\end{equation}
\end{proposition}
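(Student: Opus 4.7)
The plan is a direct unrolling of the update rules under the condition $\tau=1$. Fix any iteration $t \geq 1$. Since $\tau = 1$, a global aggregation is performed at every iteration, so in particular at iteration $t-1$ (or at initialization when $t=1$, which also sets all local parameters equal to a common $\mathbf{w}(0)$). In either case, I can invoke Line~\ref{alg:distGradDescent:TildeSetGlobalAgg} of Algorithm~\ref{alg:distGradDescent} to conclude that $\widetilde{\mathbf{w}}_i(t-1) = \mathbf{w}(t-1)$ for every node $i$.

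Next, I would substitute this equality into the local update rule (\ref{eq:localUpdate}) to obtain $\mathbf{w}_i(t) = \mathbf{w}(t-1) - \eta \nabla F_i(\mathbf{w}(t-1))$ for every $i$. Then I would plug this expression into the definition (\ref{eq:globalAverage}) of $\mathbf{w}(t)$, pull the common term $\mathbf{w}(t-1)$ out of the weighted average (since the weights $D_i/D$ sum to one), and interchange the weighted sum with the gradient operator (gradient is linear in its argument function). Finally, I would invoke the definition (\ref{eq:globalLossFuncAllSamples}) of the global loss, $F(\mathbf{w}) = \sum_i (D_i/D) F_i(\mathbf{w})$, to identify $\sum_i (D_i/D) \nabla F_i(\mathbf{w}(t-1)) = \nabla F(\mathbf{w}(t-1))$, yielding (\ref{eq:recurrenceCentralizedGradDescent}).

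There is no real obstacle here; the argument is essentially a one-line computation leveraging (i) synchronization of local parameters after each aggregation, (ii) linearity of the gradient, and (iii) the weighted-sum structure of $F$. The only subtle point worth stating explicitly is the base case $t=1$, where one uses the common initialization of $\widetilde{\mathbf{w}}_i(0)$ across nodes rather than a prior aggregation step; both cases give the same conclusion $\widetilde{\mathbf{w}}_i(t-1) = \mathbf{w}(t-1)$, so the recurrence holds uniformly for all $t \geq 1$.
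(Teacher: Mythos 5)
Your proposal is correct and follows essentially the same route as the paper's proof in Appendix~\ref{append:optimalityOfDistributedGradDescent}: use $\widetilde{\mathbf{w}}_i(t-1)=\mathbf{w}(t-1)$ (which holds at every iteration when $\tau=1$), substitute into the local update rule~(\ref{eq:localUpdate}), average via~(\ref{eq:globalAverage}), and invoke the linearity of the gradient together with~(\ref{eq:globalLossFuncAllSamples}). Your explicit treatment of the $t=1$ base case via the common initialization is a minor addition the paper leaves implicit.
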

\begin{proof}
When $\tau=1$, we have $\widetilde{\mathbf{w}}_{i}{(t)}=\mathbf{w}{(t)}$ for all $t$. Thus,
\begin{align*}
\mathbf{w}{(t)} & = \frac{\sum_{i=1}^{N}D_{i}\mathbf{w}_{i}{(t)}}{D} \\ 
& = \frac{\sum_{i=1}^{N}D_{i}\left(  \widetilde{\mathbf{w}}_{i}{(t-1)}-\eta\nabla F_{i}\left(\widetilde{\mathbf{w}}_{i}{(t-1)}\right)  \right)  }{D}   \\
& = \frac{\sum_{i=1}^{N}D_{i} \mathbf{w}{(t-1)}      }{D}  - \eta \frac{\sum_{i=1}^{N}D_{i}\nabla F_{i}\left(\mathbf{w}{(t-1)}\right)    }{D}   \\
& = \mathbf{w}{(t-1)}-\eta\nabla F\left(\mathbf{w}{(t-1)}\right)
\end{align*}
where the second term in the last equality is because 
\begin{align*}
\frac{\sum_{i=1}^{N}D_{i}\nabla F_{i}\left(\mathbf{w}\right)}{D} = \nabla \left( \frac{\sum_{i=1}^{N}D_{i} F_{i}\left(\mathbf{w}\right)}{D} \right)
=\nabla F\left(\mathbf{w}\right)
\end{align*}
due to the linearity of the gradient operator.
\end{proof}

We note that (\ref{eq:recurrenceCentralizedGradDescent}) is the recurrence relation for centralized gradient decent on the global loss $F(\mathbf{w})$. 
Therefore, the distributed gradient descent algorithm presented in Algorithm~\ref{alg:distGradDescent} is logically equivalent to centralized gradient descent for $\tau=1$.

\subsection{Proof of Theorem~\ref{theorem:wBound}}
\label{append:proofWBound}

To prove Theorem~\ref{theorem:wBound}, we first introduce the following lemma.

\begin{lemma}
\label{lemma:wiBound}
For any interval $[k]$, and $t \in [(k-1)\tau, k\tau)$, we have
$$\left\Vert \widetilde{\mathbf{w}}_{i}{(t)}-\mathbf{v}_{[k]}{(t)}\right\Vert \leq g_i(t- (k-1)\tau)$$
where we define the function $g_i(x)$ as
\begin{align*}
g_{i}(x) \triangleq \frac{\delta_{i}}{\beta}\left((\eta\beta+1)^{x}-1 \right)
\end{align*}
\end{lemma}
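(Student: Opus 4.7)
The plan is to prove the lemma by induction on $x = t - (k-1)\tau$, using the recursive structure of the local updates together with the $\beta$-smoothness of $F_i$ and the definition of the gradient divergence $\delta_i$.

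For the base case $x = 0$, i.e., $t = (k-1)\tau$, global aggregation (or initialization, when $k=1$) has just occurred, so $\widetilde{\mathbf{w}}_i((k-1)\tau) = \mathbf{w}((k-1)\tau)$; by the synchronization condition $\mathbf{v}_{[k]}((k-1)\tau) = \mathbf{w}((k-1)\tau)$, the two vectors are equal, matching $g_i(0) = 0$. For the inductive step, I would observe that because $t \in [(k-1)\tau, k\tau)$ means $t-1$ is either equal to $(k-1)\tau$ or lies strictly inside the interval, in both cases the local update performed at iteration $t$ gives $\widetilde{\mathbf{w}}_i(t) = \widetilde{\mathbf{w}}_i(t-1) - \eta \nabla F_i(\widetilde{\mathbf{w}}_i(t-1))$ (no global aggregation in the interior). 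Subtracting the centralized update $\mathbf{v}_{[k]}(t) = \mathbf{v}_{[k]}(t-1) - \eta \nabla F(\mathbf{v}_{[k]}(t-1))$ and adding and subtracting $\eta \nabla F_i(\mathbf{v}_{[k]}(t-1))$ inside the resulting expression gives the decomposition
\begin{align*}
\widetilde{\mathbf{w}}_i(t) - \mathbf{v}_{[k]}(t) &= \bigl(\widetilde{\mathbf{w}}_i(t-1) - \mathbf{v}_{[k]}(t-1)\bigr) \\
&\quad - \eta\bigl(\nabla F_i(\widetilde{\mathbf{w}}_i(t-1)) - \nabla F_i(\mathbf{v}_{[k]}(t-1))\bigr) \\
&\quad - \eta\bigl(\nabla F_i(\mathbf{v}_{[k]}(t-1)) - \nabla F(\mathbf{v}_{[k]}(t-1))\bigr).
\end{align*}

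Applying the triangle inequality, $\beta$-smoothness on the second term, and the gradient divergence bound $\delta_i$ on the third term yields the one-step linear recurrence
$\|\widetilde{\mathbf{w}}_i(t) - \mathbf{v}_{[k]}(t)\| \leq (1+\eta\beta)\,\|\widetilde{\mathbf{w}}_i(t-1) - \mathbf{v}_{[k]}(t-1)\| + \eta\delta_i$.
Plugging in the inductive hypothesis $g_i(x-1)$ and simplifying shows that the resulting expression equals exactly $g_i(x) = \frac{\delta_i}{\beta}((\eta\beta+1)^x - 1)$, closing the induction. This closed-form computation is just the standard solution of an affine recurrence with base $1+\eta\beta$ and forcing term $\eta\delta_i$, consistent with the form of $g_i$.

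The only subtlety I expect is bookkeeping around the endpoint of the interval: the statement restricts $t$ to $[(k-1)\tau, k\tau)$ precisely so that no global aggregation is triggered at iteration $t$ itself, ensuring $\widetilde{\mathbf{w}}_i(t) = \mathbf{w}_i(t)$ and allowing the recursion above. Extending to $t = k\tau$ would require handling the aggregation step explicitly, which is why the companion Theorem~\ref{theorem:wBound} is stated in terms of $\mathbf{w}(t)$ rather than $\widetilde{\mathbf{w}}_i(t)$ and will need an additional averaging argument (together with $\delta = \sum_i D_i \delta_i / D$) to pass from the per-node bound $g_i$ to the global bound $h$.
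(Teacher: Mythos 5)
Your proposal is correct and matches the paper's proof essentially step for step: the same induction on $t-(k-1)\tau$, the same add-and-subtract of $\eta\nabla F_i(\mathbf{v}_{[k]}(t-1))$, the same use of $\beta$-smoothness and the gradient-divergence bound to obtain the affine recurrence with base $\eta\beta+1$ and forcing term $\eta\delta_i$, whose closed form is exactly $g_i$. Your closing remark about why $t$ is restricted to $[(k-1)\tau,k\tau)$ and how the averaging over nodes with $\delta=\sum_i D_i\delta_i/D$ enters only in Theorem~\ref{theorem:wBound} is also consistent with how the paper organizes the argument.
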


\begin{proof}

We show by induction that $\left\Vert \mathbf{w}_{i}{(t)}-\mathbf{v}_{[k]}{(t)}\right\Vert \leq g(t - (k-1)\tau)$ for all $t \in ((k-1)\tau, k\tau]$.

When $t=(k-1)\tau$, we know that $\widetilde{\mathbf{w}}_{i}{(t)}=\mathbf{v}_{[k]}{(t)}$  by the definition of $\mathbf{v}_{[k]}{(t)}$, and we have $\left\Vert \widetilde{\mathbf{w}}_{i}{(t)}-\mathbf{v}_{[k]}{(t)}\right\Vert = g_i(0)$.

We note that  $\widetilde{\mathbf{w}}_{i}{(t)}=\mathbf{w}_{i}{(t)}$ for $t \in ((k-1)\tau, k\tau)$ because there is no global aggregation within this interval. Combining this with  (\ref{eq:localUpdate}), for $t \in ((k-1)\tau, k\tau)$, we have
\begin{align}
\widetilde{\mathbf{w}}_{i}{(t)}=\widetilde{\mathbf{w}}_{i}{(t-1)}-\eta\nabla F_{i}(\widetilde{\mathbf{w}}_{i}{(t-1)})
\label{eq:proofFirstBoundEqWUpdtModified}
\end{align}

For the induction, we assume that
\begin{equation}
\left\Vert \widetilde{\mathbf{w}}_{i}{(t-1)} - \mathbf{v}_{[k]}{(t-1)}\right\Vert \allowbreak \leq g_{i} (t-1 - (k-1)\tau)
\label{eq:proofFirstBoundEqIndAssumption}
\end{equation} 
holds for some $t \in ((k-1)\tau, k\tau) $. We now show that $\left\Vert {\widetilde{\mathbf{w}}}_{i}{(t)}-\mathbf{v}_{[k]}{(t)}\right\Vert \leq g_{i}{(t- (k-1)\tau)}$ holds for $t$. We have
\begin{align*}
& \left\Vert \widetilde{\mathbf{w}}_{i}{(t)}-\mathbf{v}_{[k]}{(t)}\right\Vert  \\
& =  \big\Vert \left(\widetilde{\mathbf{w}}_{i}{(t-1)}-\eta\nabla F_{i}(\widetilde{\mathbf{w}}_{i}{(t-1)})\right) \\ 
& \quad -\left(\mathbf{v}_{[k]}{(t-1)}-\eta\nabla F(\mathbf{v}_{[k]}{(t-1)})\right) \big\Vert \tag{from (\ref{eq:updateV}), (\ref{eq:proofFirstBoundEqWUpdtModified})}\\
& =  \big\Vert \left(\widetilde{\mathbf{w}}_{i}{(t-1)}-\mathbf{v}_{[k]}{(t-1)}\right)-\eta\big[\nabla F_{i}(\widetilde{\mathbf{w}}_{i}{(t-1)}) \\
& \quad - \! \nabla F_{i}(\mathbf{v}_{[k]}{(t-1)}) \! + \! \nabla F_{i}(\mathbf{v}_{[k]}{(t-1)}) \! - \! \nabla F(\mathbf{v}_{[k]}{(t-1)})\big]\big\Vert \tag{adding a zero term and rearranging}\\
& \leq  \left\Vert \widetilde{\mathbf{w}}_{i}{(t-1)}-\mathbf{v}_{[k]}{(t-1)}\right\Vert \\
& \quad +\eta\left\Vert \nabla F_{i}(\widetilde{\mathbf{w}}_{i}{(t-1)})-\nabla F_{i}(\mathbf{v}_{[k]}{(t-1)})\right\Vert \\
& \quad +\eta\left\Vert \nabla F_{i}(\mathbf{v}_{[k]}{(t-1)})-\nabla F(\mathbf{v}_{[k]}{(t-1)})\right\Vert \tag{from triangle inequality} \\
& \leq  (\eta\beta+1)\left\Vert \widetilde{\mathbf{w}}_{i}{(t-1)}-\mathbf{v}_{[k]}{(t-1)}\right\Vert +\eta\delta_{i} \tag{from the $\beta$-smoothness of $F_i(\cdot )$ and (\ref{eq:deltaiDef})} \\
& \leq  (\eta\beta+1)g_{i}{(t-1- (k-1)\tau)}+\eta\delta_{i}  \tag{from the induction assumption in (\ref{eq:proofFirstBoundEqIndAssumption})} \\
& =  (\eta\beta+1)\left(\frac{\delta_{i}}{\beta}\left((\eta\beta+1)^{t-1- (k-1)\tau}-1\right)\right)+\eta\delta_{i}\\
& = \frac{\delta_{i}}{\beta} (\eta\beta+1)^{t- (k-1)\tau}-\frac{\delta_{i}}{\beta} (\eta\beta+1) +\eta\delta_i\\
& = \frac{\delta_{i}}{\beta} (\eta\beta+1)^{t- (k-1)\tau}-\frac{\delta_i}{\beta}\\
& = \frac{\delta_{i}}{\beta}((\eta\beta+1)^{t- (k-1)\tau}-1)\\
& = g_i(t- (k-1)\tau)
\end{align*}

Using the above induction, we have shown that $\left\Vert \widetilde{\mathbf{w}}_{i}{(t)}-\mathbf{v}_{[k]}{(t)}\right\Vert \leq g_{i}(t- (k-1)\tau)$ for all $t \in [(k-1)\tau, k\tau)$.
\end{proof}

We are now ready to prove Theorem~\ref{theorem:wBound}.

\begin{proof}[Proof of Theorem~\ref{theorem:wBound}]

From (\ref{eq:localUpdate}) and (\ref{eq:globalAverage}), we have
\begin{equation}
\mathbf{w}{(t)}=\mathbf{w}{(t-1)}-\eta\frac{\sum_{i}D_{i}\nabla F_{i}(\widetilde{\mathbf{w}}_{i}{(t-1)})}{D}
\label{eq:proogBound2wUpdate}
\end{equation}

Then, for $t\in ((k-1)\tau, k\tau]$, we have
\begin{align*}
& \left\Vert \mathbf{w}{(t)}-\mathbf{v}_{[k]}{(t)}\right\Vert \\
& = \bigg \Vert \mathbf{w}{(t-1)}-\eta\frac{\sum_{i}D_{i}\nabla F_{i}(\widetilde{\mathbf{w}}_{i}{(t-1)})}{D}-\mathbf{v}_{[k]}{(t-1)}  \\
& \quad +\eta\nabla F(\mathbf{v}_{[k]}{(t-1)})\bigg\Vert \tag{from (\ref{eq:proogBound2wUpdate}) and (\ref{eq:updateV})} \\
& = \bigg\Vert \mathbf{w}{(t-1)}-\mathbf{v}_{[k]}{(t-1)} \\
& \quad -\eta\left(\frac{\sum_{i}D_{i}\nabla F_{i}(\widetilde{\mathbf{w}}_{i}{(t-1)})}{D}  -\nabla F(\mathbf{v}_{[k]}{(t-1)})\right)\bigg\Vert \\
& = \bigg\Vert \mathbf{w}{(t-1)}-\mathbf{v}_{[k]}{(t-1)} \\
& \quad -\eta\left(\frac{\sum_{i}D_{i}(\nabla F_{i}(\widetilde{\mathbf{w}}_{i}{(t-1)})-\nabla F_{i}(\mathbf{v}_{[k]}{(t-1)}))}{D}\right)\bigg\Vert \\
& \leq \left\Vert \mathbf{w}{(t-1)}-\mathbf{v}_{[k]}{(t-1)}\right\Vert  \\
& \quad +\eta\left(\frac{\sum_{i}D_{i}\left\Vert \nabla F_{i}(\widetilde{\mathbf{w}}_{i}{(t-1)})-\nabla F_{i}(\mathbf{v}_{[k]}{(t-1)})\right\Vert }{D}\right) \tag{from triangle inequality}\\
& \leq \left\Vert \mathbf{w}{(t-1)}-\mathbf{v}_{[k]}{(t-1)}\right\Vert  \\
& \quad +\eta\beta\left(\frac{\sum_{i}D_{i}\left\Vert \widetilde{\mathbf{w}}_{i}{(t-1)}-\mathbf{v}_{[k]}{(t-1)}\right\Vert }{D}\right)\tag{because $F(\cdot ) $ is $\beta$-smooth}\\
& \leq \left\Vert \mathbf{w}{(t \! - \! 1)} \! - \! \mathbf{v}_{[k]}{(t \! - \! 1)}\right\Vert  \! + \! \eta\beta\left(\frac{\sum_{i}D_{i} g_i(t \! - \! 1 \! - (k \! - \! 1)\tau)}{D}\right)\tag{from Lemma \ref{lemma:wiBound}}\\
& = \left\Vert \mathbf{w}{(t-1)}-\mathbf{v}_{[k]}{(t-1)}\right\Vert +\eta\delta\left((\eta\beta+1)^{t-1- (k-1)\tau}-1\right)
\end{align*}
where the last equality is because for any $x$,
\begin{align*}
\eta\beta\left(\frac{\sum_{i}D_{i} g_i(x)}{D}\right) 
 & = \eta\beta\left(\frac{\sum_{i}D_{i} \frac{\delta_{i}}{\beta} \left( (\eta\beta+1)^{x}-1 \right)  }{D}\right) \\
& = \eta\left(\frac{\sum_{i}D_{i} \delta_{i}}{D}\right) \left((\eta\beta+1)^{x}-1 \right)\\
& = \eta\delta\left((\eta\beta+1)^{x}-1\right)
\end{align*}
Equivalently,
\begin{align}
& \left\Vert \mathbf{w}{(t)}-\mathbf{v}_{[k]}{(t)}\right\Vert - \left\Vert \mathbf{w}{(t-1)}-\mathbf{v}_{[k]}{(t-1)}\right\Vert \nonumber \\
& \leq \eta\delta\left((\eta\beta+1)^{t-1- (k-1)\tau}-1\right)
\label{eq:wvProof}
\end{align}
When $t=(k-1)\tau$, we have $\mathbf{w}{(t)} = \mathbf{v}_{[k]}{(t)}$ according to the definition, thus $\left\Vert \mathbf{w}{(t)}-\mathbf{v}_{[k]}{(t)}\right\Vert=0$.
For $t\in ((k-1)\tau, k\tau]$, by summing up (\ref{eq:wvProof}) over different values of $t$, we have
\begin{align*}
& \left\Vert \mathbf{w}{(t)}-\mathbf{v}_{[k]}{(t)}\right\Vert \\
& = \sum_{y=(k-1)\tau+1}^{t} \left\Vert \mathbf{w}{(y)}-\mathbf{v}_{[k]}{(y)}\right\Vert - \left\Vert \mathbf{w}{(y-1)}-\mathbf{v}_{[k]}{(y-1)}\right\Vert\\
& \leq  \eta\delta  \sum_{y=(k-1)\tau+1}^{t}  \left( (\eta\beta+1)^{y-1-(k-1)\tau}- 1 \right)\\
& =  \eta\delta  \sum_{z=1}^{t-(k-1)\tau}  \left( (\eta\beta+1)^{z-1}- 1 \right)\\
& =  \eta\delta  \sum_{z=1}^{t-(k-1)\tau}   (\eta\beta+1)^{z-1}- \eta\delta(t-(k-1)\tau) \\
& = \eta\delta\frac{(1-(\eta\beta+1)^{t-(k-1)\tau})}{-\eta\beta}-\eta\delta (t-(k-1)\tau)\\
& = \eta\delta\frac{(\eta\beta+1)^{t-(k-1)\tau}-1}{\eta\beta}-\eta\delta(t-(k-1)\tau)\\
& = \frac{\delta}{\beta}\left((\eta\beta+1)^{t-(k-1)\tau}-1\right)-\eta\delta (t-(k-1)\tau)\\
& = h(t-(k-1)\tau)
\end{align*}
\end{proof}

\subsection{Proof of Lemma~\ref{lemma:convergenceUpperBound}}
\label{append:proofConvergenceUpperBound}

To prove Lemma~\ref{lemma:convergenceUpperBound}, we first introduce some additional definitions and lemmas.

\begin{definition} For an interval $[k]$, we define $\theta_{[k]}(t)=F(\mathbf{v}_{[k]}(t))-F(\mathbf{w}^*)$, for a fixed $k$, $t$ is defined between $(k-1)\tau \leq t \leq k\tau$.
\end{definition}

According to the convergence lower bound of gradient descent given in \cite[Theorem 3.14]{convex}, we always have 
\begin{equation}
\theta_{[k]}(t) > 0
\label{eq:thetaLowerBound}
\end{equation}
for any finite $t$ and $k$.

\begin{lemma}
When $\eta \leq \frac{1}{\beta}$, for any $k$, and $t \in [(k-1)\tau, k\tau]$, we have that $ \left\Vert \mathbf{v}_{[k]}(t)- \mathbf{w}^*\right\Vert $ does not increase with $t$, where $\mathbf{w}^*$ is the optimal parameter defined in (\ref{eq:learningProblem}). 
\label{lemma:vwDecrease}
\end{lemma}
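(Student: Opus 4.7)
My plan is to prove this by a one-step comparison, showing that $\left\Vert \mathbf{v}_{[k]}(t)-\mathbf{w}^*\right\Vert^2 \leq \left\Vert \mathbf{v}_{[k]}(t-1)-\mathbf{w}^*\right\Vert^2$ for every $t \in ((k-1)\tau, k\tau]$, which immediately implies monotonicity over the whole interval. The argument is the standard contraction estimate for gradient descent on a $\beta$-smooth convex function, specialized to the centralized iteration (\ref{eq:updateV}).

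First, I would expand the squared distance using the update rule (\ref{eq:updateV}):
\begin{align*}
\left\Vert \mathbf{v}_{[k]}(t)-\mathbf{w}^*\right\Vert^2
&= \left\Vert \mathbf{v}_{[k]}(t-1)-\mathbf{w}^* - \eta \nabla F(\mathbf{v}_{[k]}(t-1))\right\Vert^2 \\
&= \left\Vert \mathbf{v}_{[k]}(t-1)-\mathbf{w}^*\right\Vert^2 \\
&\quad - 2\eta \left\langle \nabla F(\mathbf{v}_{[k]}(t-1)),\, \mathbf{v}_{[k]}(t-1)-\mathbf{w}^*\right\rangle \\
&\quad + \eta^2 \left\Vert \nabla F(\mathbf{v}_{[k]}(t-1))\right\Vert^2 .
\end{align*}
Next, I would invoke the co-coercivity property of the gradient of a $\beta$-smooth convex function: since $F$ is convex and $\beta$-smooth (by the Lemma stated before Definition~\ref{def:gradientDivergence}) and $\nabla F(\mathbf{w}^*)=0$ by optimality, we have
\begin{equation*}
\left\langle \nabla F(\mathbf{v}_{[k]}(t-1)),\, \mathbf{v}_{[k]}(t-1)-\mathbf{w}^*\right\rangle \geq \frac{1}{\beta}\left\Vert \nabla F(\mathbf{v}_{[k]}(t-1))\right\Vert^2 .
\end{equation*}
Substituting this bound back gives
\begin{equation*}
\left\Vert \mathbf{v}_{[k]}(t)-\mathbf{w}^*\right\Vert^2 \leq \left\Vert \mathbf{v}_{[k]}(t-1)-\mathbf{w}^*\right\Vert^2 - \eta\!\left(\tfrac{2}{\beta}-\eta\right)\!\left\Vert \nabla F(\mathbf{v}_{[k]}(t-1))\right\Vert^2 ,
\end{equation*}
and since $\eta \leq 1/\beta$, the coefficient $\eta(2/\beta - \eta) \geq \eta/\beta \geq 0$, so the extra term is non-positive. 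This yields the desired one-step inequality, and iterating over $t$ (including the trivial equality at $t=(k-1)\tau$) proves the claim.

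The only mildly non-trivial ingredient is the co-coercivity inequality, which I would either cite as a standard result for $\beta$-smooth convex functions (e.g., from \cite{convex}) or derive quickly from the descent lemma applied at $\mathbf{w}^*$. Everything else is direct algebra. No obstacle of real substance is expected; the main care is just in tracking that the step-size condition $\eta \leq 1/\beta$ exactly produces a non-negative contraction coefficient.
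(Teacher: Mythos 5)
Your proof is correct and follows essentially the same route as the paper's: both expand $\left\Vert \mathbf{v}_{[k]}(t)-\mathbf{w}^*\right\Vert^2$ via the update rule and lower-bound the cross term $\nabla F(\mathbf{v}_{[k]}(t-1))^\mathrm{T}(\mathbf{v}_{[k]}(t-1)-\mathbf{w}^*)$ using smoothness plus convexity. The only cosmetic difference is that you invoke co-coercivity directly (yielding the constant $\tfrac{1}{\beta}$ and a contraction valid up to $\eta \le \tfrac{2}{\beta}$), whereas the paper uses the function-value inequality of \cite[Lemma 3.5]{convex} together with $\theta_{[k]}(t)>0$ to get the weaker constant $\tfrac{1}{2\beta}$, which still suffices for $\eta \le \tfrac{1}{\beta}$.
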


\begin{proof}
\begin{align*}
& \left\Vert \mathbf{v}_{[k]}(t+1)- \mathbf{w}^*\right\Vert^2 \\
& = \left\Vert \mathbf{v}_{[k]}(t)-\eta \nabla F(\mathbf{v}_{[k]}(t))-\mathbf{w}^*\right\Vert^2\\
&=\left\Vert \mathbf{v}_{[k]}(t)- \mathbf{w}^*\right\Vert^2 -2\eta\nabla F(\mathbf{v}_{[k]}(t))^\mathrm{T}(\mathbf{v}_{[k]}(t)-\mathbf{w}^*) \\
& \quad +\eta^2 \left\Vert \nabla F(\mathbf{v}_{[k]}(t)) \right\Vert^2
\end{align*}

Because $F(\cdot )$ is $\beta$-smooth, from (\ref{eq:thetaLowerBound}) and \cite[Lemma 3.5]{convex}, we have
\begin{align*}
0 & <  \theta_{[k]}(t) \leq \nabla F(\mathbf{v}_{[k]}(t))^\mathrm{T}(\mathbf{v}_{[k]}(t) \! - \! \mathbf{w}^*) \! - \! \frac{\left\Vert \nabla F(\mathbf{v}_{[k]}(t))\right\Vert^2}{2\beta}
\end{align*}
Thus,
$$ -\nabla F(\mathbf{v}_{[k]}(t))^\mathrm{T}(\mathbf{v}_{[k]}(t)-\mathbf{w}^*) < - \frac{\left\Vert \nabla F(\mathbf{v}_{[k]}(t))\right\Vert^2}{2\beta} $$

Therefore,
\begin{align*}
& \left\Vert \mathbf{v}_{[k]}(t+1)- \mathbf{w}^*\right\Vert^2 \\ 
& = \left\Vert \mathbf{v}_{[k]}{(t)} - \eta\nabla F(\mathbf{v}_{[k]}{(t)}) - \mathbf{w}^*\right\Vert^2  \tag{from (\ref{eq:updateV})}\\
& = \left\Vert \mathbf{v}_{[k]}(t)- \mathbf{w}^*\right\Vert^2 -2\eta\nabla F(\mathbf{v}_{[k]}(t))^\mathrm{T}(\mathbf{v}_{[k]}(t)-\mathbf{w}^*) \tag{expanding the squared norm}\\
& \quad +\eta^2 \left\Vert \nabla F(\mathbf{v}_{[k]}(t)) \right\Vert^2 \\
& < \! \left\Vert \mathbf{v}_{[k]}(t) \!-\! \mathbf{w}^*\right\Vert^2  \!-\! \eta \frac{\left\Vert \nabla F(\mathbf{v}_{[k]}(t))\right\Vert^2}{\beta} + \eta^2 \left\Vert \nabla F(\mathbf{v}_{[k]}(t))\right\Vert^2\\
& = \left\Vert \mathbf{v}_{[k]}(t)- \mathbf{w}^*\right\Vert^2 -\eta \left(\frac{1}{\beta} - \eta \right)\left\Vert \nabla F(\mathbf{v}_{[k]}(t)) \right\Vert^2 
\end{align*}

When $\eta \leq \frac{1}{\beta}$, we obtain
\begin{align*}
\left\Vert \mathbf{v}_{[k]}(t+1)- \mathbf{w}^*\right\Vert^2 & \leq \left\Vert \mathbf{v}_{[k]}(t)- \mathbf{w}^*\right\Vert^2 
\end{align*}
\end{proof}

\begin{lemma}
\label{lemma:FvDecrease}
For any $k$, when $\eta \leq \frac{1}{\beta}$ and $t \in [(k-1)\tau, k\tau)$, we have
\begin{align}
F(\mathbf{v}_{[k]}(t+1))-F(\mathbf{v}_{[k]}(t))  \leq -\eta \left(1-\frac{\beta\eta}{2}\right)\left\Vert\nabla F(\mathbf{v}_{[k]}(t))\right\Vert^2 \label{eq:FvDecrease}
\end{align}
\end{lemma}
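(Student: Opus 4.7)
The plan is to obtain this bound directly from the $\beta$-smoothness of $F$ (established earlier as an immediate consequence of Assumption~\ref{assumption:Convex}), combined with the centralized gradient descent update rule (\ref{eq:updateV}) that governs $\mathbf{v}_{[k]}(t)$ within the interval $[(k-1)\tau, k\tau)$.

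First, I would invoke the standard descent lemma for $\beta$-smooth functions: for any $\mathbf{x}, \mathbf{y}$,
\begin{equation*}
F(\mathbf{y}) \leq F(\mathbf{x}) + \nabla F(\mathbf{x})^\mathrm{T} (\mathbf{y} - \mathbf{x}) + \frac{\beta}{2} \left\Vert \mathbf{y} - \mathbf{x} \right\Vert^2.
\end{equation*}
This is a well-known consequence of $\beta$-smoothness (e.g., integrating the gradient along the segment from $\mathbf{x}$ to $\mathbf{y}$); see \cite{convex}.

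Next, I would apply this with $\mathbf{x} = \mathbf{v}_{[k]}(t)$ and $\mathbf{y} = \mathbf{v}_{[k]}(t+1)$. By (\ref{eq:updateV}), $\mathbf{y} - \mathbf{x} = -\eta \nabla F(\mathbf{v}_{[k]}(t))$, so substituting gives
\begin{equation*}
F(\mathbf{v}_{[k]}(t+1)) \leq F(\mathbf{v}_{[k]}(t)) - \eta \left\Vert \nabla F(\mathbf{v}_{[k]}(t)) \right\Vert^2 + \frac{\beta \eta^2}{2} \left\Vert \nabla F(\mathbf{v}_{[k]}(t)) \right\Vert^2.
\end{equation*}
Rearranging and factoring out $\left\Vert \nabla F(\mathbf{v}_{[k]}(t)) \right\Vert^2$ yields exactly (\ref{eq:FvDecrease}). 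Note the condition $\eta \leq 1/\beta$ is not actually needed to obtain the inequality itself (it only becomes relevant when one wants to guarantee that the right-hand side is nonpositive, i.e., that the loss is nonincreasing), but it is retained in the statement presumably for consistency with the hypotheses used in later steps of the convergence argument (e.g., Lemma~\ref{lemma:vwDecrease}).

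There is no substantive obstacle here — this is the textbook one-step descent inequality for gradient descent on a smooth function, and the only care needed is to verify that the update (\ref{eq:updateV}) applies for $t \in [(k-1)\tau, k\tau)$, which holds by the definition of $\mathbf{v}_{[k]}$. The lemma will then be used in Appendix~\ref{append:proofConvergenceUpperBound} together with convexity to relate $F(\mathbf{v}_{[k]}(t+1)) - F(\mathbf{w}^*)$ to $F(\mathbf{v}_{[k]}(t)) - F(\mathbf{w}^*)$ and derive the per-interval recurrence underlying Lemma~\ref{lemma:convergenceUpperBound}.
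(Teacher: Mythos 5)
Your proposal is correct and follows essentially the same route as the paper's own proof: both apply the standard descent lemma for $\beta$-smooth functions at $\mathbf{x} = \mathbf{v}_{[k]}(t)$, substitute the update rule (\ref{eq:updateV}), and factor out $\left\Vert \nabla F(\mathbf{v}_{[k]}(t)) \right\Vert^2$. Your side remark that the hypothesis $\eta \leq \frac{1}{\beta}$ is not actually needed to derive the inequality itself is also accurate.
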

\begin{proof}
Because $F(\cdot )$ $\beta$-smooth, from \cite[Lemma 3.4]{convex}, we have
\begin{align*}
F(\mathbf{x}) \leq F(\mathbf{y}) +\nabla F(\mathbf{y})^\mathrm{T}(\mathbf{x}-\mathbf{y})+\frac{\beta}{2}{\left\Vert \mathbf{x}-\mathbf{y}\right\Vert}^2
\end{align*}
for arbitrary $\mathbf{x}$ and $\mathbf{y}$.
Thus,
\begin{align}
& F(\mathbf{v}_{[k]}(t+1))-F(\mathbf{v}_{[k]}(t)) \nonumber\\
&\leq \nabla F(\mathbf{v}_{[k]}(t))^\mathrm{T}(\mathbf{v}_{[k]}(t+1)-\mathbf{v}_{[k]}(t)) \nonumber \\
& \quad +\frac{\beta}{2}{\left\Vert\mathbf{v}_{[k]}(t+1)-\mathbf{v}_{[k]}(t) \right\Vert}^2 \nonumber\\
& \leq -\eta \nabla F(\mathbf{v}_{[k]}(t))^\mathrm{T}\nabla F(\mathbf{v}_{[k]}(t))+\frac{\beta\eta^2}{2}{\left\Vert\nabla F(\mathbf{v}_{[k]}(t)) \right\Vert}^2 \tag{from (\ref{eq:updateV})}\nonumber\\
& \leq -\eta \left(1-\frac{\beta\eta}{2}\right)\left\Vert\nabla F(\mathbf{v}_{[k]}(t))\right\Vert^2  \nonumber
\end{align}
\end{proof}

\begin{lemma}
\label{lemma:thetaBound}
For any $k$, when $\eta \leq \frac{1}{\beta}$ and $t \in [(k-1)\tau, k\tau)$, we have
\begin{align}
\frac{1}{\theta_{[k]}(t+1)}-\frac{1}{\theta_{[k]}(t)} \geq \omega \eta \left(1-\frac{\beta\eta}{2}\right) 
\end{align}
where $\omega=\min_k \frac{1}{{\left\Vert \mathbf{v}_{[k]}((k-1)\tau)-\mathbf{w}^* \right\Vert^2 }}$
\end{lemma}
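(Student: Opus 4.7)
The plan is to combine convexity of $F$ with Lemma~\ref{lemma:vwDecrease} to get a lower bound on $\|\nabla F(\mathbf{v}_{[k]}(t))\|^2$ in terms of $\theta_{[k]}(t)^2$ and $\omega$, substitute that bound into the descent inequality from Lemma~\ref{lemma:FvDecrease} to produce a recurrence of the form $\theta_{[k]}(t+1) \leq \theta_{[k]}(t) - \omega\eta(1-\beta\eta/2)\,\theta_{[k]}(t)^2$, and then convert this into the reciprocal bound claimed in the lemma.

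Concretely, I would first apply convexity of $F$ at the point $\mathbf{v}_{[k]}(t)$ with reference point $\mathbf{w}^*$: $F(\mathbf{w}^*) \geq F(\mathbf{v}_{[k]}(t)) + \nabla F(\mathbf{v}_{[k]}(t))^\mathrm{T}(\mathbf{w}^* - \mathbf{v}_{[k]}(t))$, which after rearrangement and Cauchy--Schwarz gives $\theta_{[k]}(t) \leq \|\nabla F(\mathbf{v}_{[k]}(t))\|\cdot\|\mathbf{v}_{[k]}(t) - \mathbf{w}^*\|$. Next, Lemma~\ref{lemma:vwDecrease} says $\|\mathbf{v}_{[k]}(t) - \mathbf{w}^*\| \leq \|\mathbf{v}_{[k]}((k-1)\tau) - \mathbf{w}^*\|$, so by definition of $\omega$,
\begin{equation*}
\|\nabla F(\mathbf{v}_{[k]}(t))\|^2 \geq \frac{\theta_{[k]}(t)^2}{\|\mathbf{v}_{[k]}((k-1)\tau)-\mathbf{w}^*\|^2} \geq \omega\,\theta_{[k]}(t)^2.
\end{equation*}
Plugging this into Lemma~\ref{lemma:FvDecrease} and subtracting $F(\mathbf{w}^*)$ from both sides yields $\theta_{[k]}(t+1) - \theta_{[k]}(t) \leq -\omega\eta(1-\beta\eta/2)\,\theta_{[k]}(t)^2$.

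Finally, I would convert this additive recurrence into the multiplicative/reciprocal form. Writing
\begin{equation*}
\frac{1}{\theta_{[k]}(t+1)} - \frac{1}{\theta_{[k]}(t)} = \frac{\theta_{[k]}(t) - \theta_{[k]}(t+1)}{\theta_{[k]}(t)\,\theta_{[k]}(t+1)} \geq \frac{\omega\eta(1-\beta\eta/2)\,\theta_{[k]}(t)^2}{\theta_{[k]}(t)\,\theta_{[k]}(t+1)},
\end{equation*}
and noting that Lemma~\ref{lemma:FvDecrease} makes $\theta_{[k]}(t+1) \leq \theta_{[k]}(t)$ (so the ratio $\theta_{[k]}(t)/\theta_{[k]}(t+1) \geq 1$), the right-hand side is bounded below by $\omega\eta(1-\beta\eta/2)$, which is the claim. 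Division is legitimate here because (\ref{eq:thetaLowerBound}) guarantees $\theta_{[k]}(t),\theta_{[k]}(t+1) > 0$, and $\eta \leq 1/\beta$ ensures $1-\beta\eta/2 > 0$ so that the monotonicity $\theta_{[k]}(t+1) \leq \theta_{[k]}(t)$ indeed holds.

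There is no real obstacle; the only delicate point is justifying that $\theta_{[k]}(t+1) \leq \theta_{[k]}(t)$ so that one may lower bound $\theta_{[k]}(t)^2/(\theta_{[k]}(t)\theta_{[k]}(t+1))$ by $1$, and that the inequality does not degenerate through division by zero. Both are handled by the condition $\eta \leq 1/\beta$ together with (\ref{eq:thetaLowerBound}) and Lemma~\ref{lemma:FvDecrease}.
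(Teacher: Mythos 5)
Your proposal is correct and follows essentially the same route as the paper's proof: convexity plus Cauchy--Schwarz to get $\theta_{[k]}(t) \leq \Vert\nabla F(\mathbf{v}_{[k]}(t))\Vert\,\Vert\mathbf{v}_{[k]}(t)-\mathbf{w}^*\Vert$, Lemma~\ref{lemma:vwDecrease} and the definition of $\omega$ to replace the distance by its value at the start of the interval, substitution into the descent inequality of Lemma~\ref{lemma:FvDecrease}, and finally the reciprocal conversion using $\theta_{[k]}(t)/\theta_{[k]}(t+1)\geq 1$ and positivity from (\ref{eq:thetaLowerBound}). The only cosmetic difference is that the paper divides the additive recurrence by $\theta_{[k]}(t)\theta_{[k]}(t+1)$ and then rearranges, whereas you write the reciprocal difference directly as a quotient; the content is identical.
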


\begin{proof}

By definition,
$\theta_{[k]}(t)=F(\mathbf{v}_{[k]}(t))-F(\mathbf{w}^*)$ and $\theta_{[k]}(t+1)=F(\mathbf{v}_{[k]}(t+1))-F(\mathbf{w}^*)$ 
Substituting this into (\ref{eq:FvDecrease}) in Lemma~\ref{lemma:FvDecrease} yields
\begin{align*}
\theta_{[k]}(t+1)-\theta_{[k]}(t) &\leq  -\eta \left(1-\frac{\beta\eta}{2}\right)\left\Vert\nabla F(\mathbf{v}_{[k]}(t))\right\Vert^2
\end{align*}
Equivalently,
\begin{align}
\theta_{[k]}(t+1) &\leq \theta_{[k]}(t)  -\eta \left(1-\frac{\beta\eta}{2}\right)\left\Vert\nabla F(\mathbf{v}_{[k]}(t))\right\Vert^2\label{eq:proofThetaDiffBound}
\end{align}

The convexity condition gives 
\begin{align*}   
\theta_{[k]}(t) & = F(\mathbf{v}_{[k]}(t))-F(\mathbf{w}^*)  \leq  \nabla F(\mathbf{v}_{[k]}(t))^\mathrm{T}(\mathbf{v}_{[k]}(t)-\mathbf{w}^*) \\
& \leq \left\Vert \nabla F(\mathbf{v}_{[k]}(t))\right\Vert \left\Vert \mathbf{v}_{[k]}(t)-\mathbf{w}^* \right\Vert
\end{align*}
where the last inequality is from the Cauchy-Schwarz inequality.
Hence,
\begin{align}   
\frac{\theta_{[k]}(t)}{\left\Vert \mathbf{v}_{[k]}(t)-\mathbf{w}^* \right\Vert} \leq \left\Vert \nabla F(\mathbf{v}_{[k]}(t))\right\Vert \label{eq5}
\end{align} 

Substituting (\ref{eq5}) into (\ref{eq:proofThetaDiffBound}), we get
\begin{align*}
\theta_{[k]}(t+1) & \leq \theta_{[k]}(t)-\frac{\eta\left(1-\frac{\beta\eta}{2}\right)\theta_{[k]}(t)^2}{\left\Vert \mathbf{v}_{[k]}(t)-\mathbf{w}^* \right\Vert^2} \\
& \leq \theta_{[k]}(t)-\omega{\eta\left(1-\frac{\beta\eta}{2}\right)\theta_{[k]}(t)^2}
\end{align*}
where the last inequality in the above is explained as follows. From Lemma \ref{lemma:vwDecrease}, we know that for each interval of $[k]$, ${\left\Vert\mathbf{v}_{[k]}(t)-\mathbf{w}^*\right\Vert}$ does not increase with $t$ when $t \in [(k-1)\tau, k\tau]$. Hence, $\left\Vert \mathbf{{v}}_{[k]}((k-1)\tau)\ - \mathbf{w}^*\right\Vert \geq \left\Vert \mathbf{{v}}_{[k]}(t)\ - \mathbf{w}^*\right\Vert $.
Recall that we defined $\omega=\min_k \frac{1}{{\left\Vert \mathbf{v}_{[k]}((k-1)\tau)-\mathbf{w}^*\right\Vert^2 }}$, we have
 $-\omega \geq \frac{-1}{\left\Vert \mathbf{{v}}_{[k]}((k-1)\tau)\ - \mathbf{w}^*\right\Vert^2} \geq \frac{-1}{ \left\Vert \mathbf{{v}}_{[k]}(t)\ - \mathbf{w}^*\right\Vert^2}$ and the inequality follows.

As $\theta_{[k]}(t+1)\theta_{[k]}(t) > 0$ according to (\ref{eq:thetaLowerBound}), dividing both sides by $\theta_{[k]}(t+1)\theta_{[k]}(t)$, we obtain
\begin{align*}
\frac{1}{\theta_{[k]}(t)}\leq \frac{1}{\theta_{[k]}(t+1)}-\frac{\omega{\eta\left(1-\frac{\beta\eta}{2}\right)\theta_{[k]}(t)}}{\theta_{[k]}(t+1)}
\end{align*}

We have $0 < \theta_{[k]}(t+1) \leq \theta_{[k]}(t)$ from (\ref{eq:thetaLowerBound}) and (\ref{eq:proofThetaDiffBound}), thus $\frac{\theta_{[k]}(t) }{\theta_{[k]}(t+1) } \geq 1$. Hence, 
\begin{align*}
\frac{1}{\theta_{[k]}(t+1)}-\frac{1}{\theta_{[k]}(t)}\geq \frac{\omega{\eta\left(1-\frac{\beta\eta}{2}\right)\theta_{[k]}(t)}}{\theta_{[k]}(t+1)} \geq \omega{\eta\left(1-\frac{\beta\eta}{2}\right)}
\end{align*}
\end{proof}

We are now ready to prove Lemma~\ref{lemma:convergenceUpperBound}.

\begin{proof}[Proof of Lemma~\ref{lemma:convergenceUpperBound}]

Using Lemma \ref{lemma:thetaBound} and considering $t\in [(k-1)\tau,k\tau]$, we have
\begin{align*}
 \frac{1}{\theta_{[k]}(k\tau)}-\frac{1}{\theta_{[k]}((k-1)\tau)} 
& = \sum_{z=(k-1)\tau}^{k\tau -1}  \left( \frac{1}{\theta_{[k]}(t+1)}-\frac{1}{\theta_{[k]}(t)} \right) \\
& \geq \tau\omega\eta\left(1-\frac{\beta\eta}{2}\right)
\end{align*}
Summing up the above for all $k=1,2...,K$ yields
\begin{align*}
\sum_{k=1}^{K} \left( \frac{1}{\theta_{[k]}(k\tau)}-\frac{1}{\theta_{[k]}((k-1)\tau)} \right) & \geq \sum_{k=1}^{K}  \tau\omega\eta \left(1-\frac{\beta\eta}{2}\right) \\
& = K\tau\omega\eta \left(1-\frac{\beta\eta}{2}\right)
\end{align*}
Rewriting the left-hand side and noting that $T=K\tau$ yields
\begin{align*}
&\frac{1}{\theta_{\left[K\right]}(T)}-\frac{1}{\theta_{[1]}(0)}-\sum_{k=1}^{K-1}\left( \frac{1}{\theta_{[k+1]}(k\tau)}-\frac{1}{\theta_{[k]}(k\tau)}\right) \\
& \geq T\omega\eta \left(1-\frac{\beta\eta}{2}\right)
\end{align*}
which is equivalent to
\begin{align}
\label{etoile}
& \frac{1}{\theta_{[K]}(T)}-\frac{1}{\theta_{[1]}(0)} \nonumber \\
& \geq T\omega\eta \left(1-\frac{\beta\eta}{2}\right) +\sum_{k=1}^{K-1}\left( \frac{1}{\theta_{[k+1]}(k\tau)}-\frac{1}{\theta_{[k]}(k\tau)}\right) 
\end{align}
Each term in the sum in right-hand side of (\ref{etoile}) can be further expressed as
\begin{align}
\frac{1}{\theta_{[k+1]}(k\tau)}-\frac{1}{\theta_{[k]}(k\tau)} 
& = \frac{\theta_{[k]}(k\tau)-{\theta_{[k+1]}(k\tau)}}{{\theta_{[k]}(k\tau)}{\theta_{[k+1]}(k\tau)}} \nonumber \\
& = \frac{F(\mathbf{v}_{[k]}(k\tau))-F(\mathbf{v}_{[k+1]}(k\tau))}{\theta_{[k]}(k\tau)\theta_{[k+1]}(k\tau)} \nonumber \\
& \geq \frac{-\rho h(\tau)}{\theta_{[k]}(k\tau)\theta_{[k+1]}(k\tau)} \label{eq:convergenceProofBoundWithFuncH}
\end{align}
where the last inequality is obtained using Theorem~\ref{theorem:wBound} and noting that, according to the definition, $\mathbf{v}_{[k+1]}(k\tau) = \mathbf{w}(k\tau)$, thus $F(\mathbf{v}_{[k+1]}(k\tau))=F(\mathbf{w}(k\tau))$. 

It is assumed that $F(\mathbf{v}_{[k]}(k\tau))-F(\mathbf{w}^*) \geq \varepsilon$ for all $k$. According to Lemma~\ref{lemma:FvDecrease}, $F(\mathbf{v}_{[k]}(t)) \geq F(\mathbf{v}_{[k]}(t+1))$ for any $t \in [(k-1)\tau, k\tau)$. Therefore, we have 
$\theta_{[k]}(t) = F(\mathbf{v}_{[k]}(t))-F(\mathbf{w}^*) \geq \varepsilon$
for all $t$ and $k$ for which $\mathbf{v}_{[k]}(t)$ is defined.
Consequently,
\begin{align}
\theta_{[k]}(k\tau)\theta_{[k+1]}(k\tau) & \geq \varepsilon^2 \nonumber \\
\frac{-1}{\theta_{[k]}(k\tau)\theta_{[k+1]}(k\tau)} & \geq - \frac{1}{\varepsilon^2}
\label{eq:convergenceProofEpsilon}
\end{align}

Combining (\ref{eq:convergenceProofEpsilon}) with (\ref{eq:convergenceProofBoundWithFuncH}), the sum in the right-hand side of (\ref{etoile}) can be bounded by
\begin{align*}
\sum_{k=1}^{K-1}\left(\frac{1}{\theta_{[k+1]}(k\tau)}-\frac{1}{\theta_{[k]}(k\tau)}\right) 
& \geq -\sum_{k=1}^{K-1}\frac{\rho h(\tau)}{\varepsilon^2}  \\
& = -\left(K-1\right)\frac{\rho h(\tau)}{\varepsilon^2}
\end{align*}
Substituting the above into (\ref{etoile}), we get
\begin{align}
\frac{1}{\theta_{[K]}(T)}-\frac{1}{\theta_{[1]}(0)} 
& \geq T\omega\eta \left(1-\frac{\beta\eta}{2}\right) - (K-1)\frac{\rho h(\tau)}{\varepsilon^2} 
\label{eq:convergenceProofReciprocal}
\end{align}

It is also assumed that $ F\left(\mathbf{w}(T)\right)-F(\mathbf{w}^*) \geq \varepsilon$. Using the same argument as for obtaining (\ref{eq:convergenceProofEpsilon}), we have
\begin{align}
\frac{-1}{\left( F(\mathbf{w}(T))-F(\mathbf{w}^*) \right) \theta_{[K]}(T)} & \geq - \frac{1}{\varepsilon^2}
\label{eq:convergenceProofEpsilon2}
\end{align}
We then have
\begin{align}
 \frac{1}{F(\!\mathbf{w}(T))\!-\!F(\!\mathbf{w}^*\!)}\! -\! \frac{1}{\theta_{[K]}(T)}\!
& =\! \frac{\theta_{[K]}(T) \!-\! \left(F(\mathbf{w}(T))\!-\!F(\mathbf{w}^*)\right) }{\left(F(\mathbf{w}(T))\!-\!F(\mathbf{w}^*)\right)\theta_{[K]}(T)} \nonumber \\
& = \frac{F(\mathbf{v}_{[K]}(T)) - F(\mathbf{w}(T))}{\left(F(\mathbf{w}(T))-F(\mathbf{w}^*)\right)\theta_{[K]}(T)} \nonumber \\
& \geq \frac{-\rho h(\tau) }{\left(F(\mathbf{w}(T))-F(\mathbf{w}^*)\right)\theta_{[K]}(T)} \nonumber \\
& \geq - \frac{\rho h(\tau) }{\varepsilon^2}
\label{eq:convergenceProofReciprocal2}
\end{align}
where the first inequality is from Theorem~\ref{theorem:wBound} and the second inequality is from (\ref{eq:convergenceProofEpsilon2}).

Summing up (\ref{eq:convergenceProofReciprocal}) and (\ref{eq:convergenceProofReciprocal2}), we have
\begin{align*}
\frac{1}{F(\mathbf{w}(T))-F(\mathbf{w}^*)} - \frac{1}{\theta_{[1]}(0)} 
& \geq T\omega\eta \left(1-\frac{\beta\eta}{2}\right) - K\frac{\rho h(\tau)}{\varepsilon^2} \\
& = T\omega\eta \left(1-\frac{\beta\eta}{2}\right) - T\frac{\rho h(\tau)}{\tau\varepsilon^2} \\
& =  T\left(\omega\eta\left(1 \! - \! \frac{\beta \eta}{2}\right)-\frac{\rho h(\tau)}{\tau\varepsilon^2}\right)
\end{align*}
where the first equality is because $K=\frac{T}{\tau}$.

We note that
\begin{align*}
\frac{1}{F(\mathbf{w}(T))-F(\mathbf{w}^*)} & \geq \frac{1}{F(\mathbf{w}(T))-F(\mathbf{w}^*)} - \frac{1}{\theta_{[1]}(0)} \\
& \geq T\left(\omega\eta\left(1-\frac{\beta \eta}{2}\right)-\frac{\rho h(\tau)}{\tau\varepsilon^2}\right) > 0    
\end{align*}
where the first inequality is because $\theta_{[1]}(0)=F(\mathbf{v}_{[1]}(0))-F(\mathbf{w}^*) > 0$, and the last inequality is due to the assumption that $\omega\eta(1-\frac{\beta \eta}{2})-\frac{\rho h(\tau)}{\tau\varepsilon^2} > 0$.
Taking the reciprocal of the above inequality yields
\begin{align*}
F(\mathbf{w}(T))-F(\mathbf{w}^*)  & \leq \frac{1}{T\left(\omega\eta\left(1-\frac{\beta \eta}{2}\right)-\frac{\rho h(\tau)}{\tau\varepsilon^2}\right)} \\
& = \frac{1}{T\left(\eta\varphi -\frac{\rho h(\tau)}{\tau\varepsilon^2}\right)}
\end{align*}
\end{proof}
\vspace{-0.5in}

\subsection{Proof of Proposition~\ref{prop:TauOptBounded}}
\label{append:proofGTauOptBounded}

We first prove that $\tau_0$  is finite.
According to the definition of $\nu$, we have $\frac{c_\nu}{R'_\nu} \geq \frac{c_m}{R'_m}$ for all $m$, thus $c_\nu R'_m -c_m R'_\nu \geq 0$. For any $m$, we consider the following two cases.
\begin{enumerate}
\item When $c_\nu R'_m -c_m R'_\nu > 0$, it is obvious that $\frac{b_m R'_\nu - b_\nu R'_m}{c_\nu R'_m -c_m R'_\nu}$ is finite. 
\item When $c_\nu R'_m -c_m R'_\nu = 0$, according to the definition of $\nu$, we have $\frac{b_\nu}{R'_\nu} \geq \frac{b_m}{R'_m}$ thus $b_m R'_\nu - b_\nu R'_m \leq 0$. We further consider two cases as follows.
\begin{enumerate}
\item If $b_m R'_\nu - b_\nu R'_m < 0$, we have $\frac{b_m R'_\nu - b_\nu R'_m}{c_\nu R'_m -c_m R'_\nu} = -\infty$.
\item If $b_m R'_\nu - b_\nu R'_m = 0$, because we define $\frac{0}{0} \triangleq 0$, we have $\frac{b_m R'_\nu - b_\nu R'_m}{c_\nu R'_m -c_m R'_\nu} = 0$.
\end{enumerate}
\end{enumerate}

Combining the above, we know that $\max_m  \frac{b_m R'_\nu - b_\nu R'_m}{c_\nu R'_m -c_m R'_\nu}$ is finite. Then, we can easily see that $\tau_0$  is finite.

Now, we prove that $\tau^* \leq \tau_0$. We will first show that for any $\tau > \tau_0$, we have
\begin{align}
\max_m \frac{c_m\tau + b_m}{R'_m \tau} = \frac{c_\nu\tau + b_\nu}{R'_\nu \tau}.
\label{eq:proofGTauOptBounded:max_nu_equiv}
\end{align}
To see this, we note that when $\tau > \tau_0$, we have
\begin{align}
\tau > \tau_0 \geq \max_m  \frac{b_m R'_\nu - b_\nu R'_m}{c_\nu R'_m -c_m R'_\nu} \geq  \frac{b_m R'_\nu - b_\nu R'_m}{c_\nu R'_m -c_m R'_\nu}
\label{eq:proofGTauOptBounded:tau_nu_ineq}
\end{align}
for any $m$.
As mentioned above, we have $c_\nu R'_m -c_m R'_\nu \geq 0$ according to the definition of $\nu$.
We consider the following two cases for any $m$.
\begin{enumerate}
\item When $c_\nu R'_m -c_m R'_\nu > 0$, we can rearrange (\ref{eq:proofGTauOptBounded:tau_nu_ineq}) and obtain
\begin{align*}
\frac{c_\nu\tau + b_\nu}{R'_\nu \tau} >  \frac{c_m\tau + b_m}{R'_m \tau}.
\end{align*}
\item When $c_\nu R'_m -c_m R'_\nu = 0$ (i.e., $\frac{c_\nu}{R'_\nu} = \frac{c_m}{R'_m}$), we have $\frac{b_\nu}{R'_\nu} \geq \frac{b_m}{R'_m}$ according to the definition of $\nu$. Then, it is obvious that
\begin{align*}
\frac{c_\nu\tau + b_\nu}{R'_\nu \tau} \geq  \frac{c_m\tau + b_m}{R'_m \tau}.
\end{align*}
\end{enumerate}
Combining these two cases, we have proven (\ref{eq:proofGTauOptBounded:max_nu_equiv}).
In the following, we define $c \triangleq c_\nu$ and $b \triangleq b_\nu$ for simplicity.

It follows that for $\tau > \tau_0$, we can rewrite $G(\tau)$ as
\begin{equation}
G(\tau) = H_1(\tau) + \sqrt{H_2(\tau)}
\end{equation}
where
\begin{align*}
H_1(\tau) & \triangleq \frac{c\tau + b}{C_1 \tau} + \rho h(\tau) \\
& = \frac{c}{C_1} +\frac{b}{C_1 \tau} + \frac{\rho\delta(B^\tau - 1)}{\beta} - \rho\eta\delta\tau \\
H_2(\tau) & \triangleq \frac{(c\tau + b)^2}{C_2\tau^2} + \frac{\rho h(\tau)}{\eta\varphi\tau} \\
& = \frac{c^2}{C_2} + \frac{2cb}{C_2 \tau} + \frac{b^2}{C_2 \tau^2} + \frac{\rho\delta(B^\tau - 1)}{\eta\beta\varphi\tau} - \frac{\rho\delta}{\varphi}.
\end{align*}

Next, we consider continuous values of $\tau$ (with a slight abuse of notation) and continue to assume that $\tau > \tau_0$. We show that both $H_1(\tau)$ and $H_2(\tau)$ increase with $\tau$ in this case.

Taking the first order derivative of $H_1(\tau)$, we have
\begin{align*}
\frac{d H_1(\tau)}{d\tau} & = - \frac{b}{C_1 \tau^2} + \frac{\rho\delta B^\tau \log B}{\beta} - \rho\eta\delta  \\
& \geq - \frac{b}{C_1 \tau^2} + \frac{\rho\delta \log B}{\beta} (1 + \eta\beta\tau) - \rho\eta\delta   \tag{Bernoulli's inequality}  \\
& >  - \frac{b}{C_1} + \frac{\rho\delta \log B}{\beta} (1 + \eta\beta\tau) - \rho\eta\delta  \tag{$\tau > 1$, $\frac{b}{C_1} > 0$}  \\
& > 0 \tag{$\tau > \tau_0 > \frac{1}{\rho\delta\eta \log B} \left(\frac{b}{C_1} + \rho\eta\delta\right)   - \frac{1}{\eta\beta}$}
\end{align*}

Taking the first order derivative of $H_2(\tau)$, we have
\begin{align}
& \frac{d H_2(\tau)}{d\tau} \nonumber \\
& = - \frac{2cb}{C_2 \tau^2} - \frac{2b^2}{C_2 \tau^3} + \frac{\rho\delta}{\eta\beta\varphi}\left( \frac{ B^\tau \log B}{\tau} - \frac{B^\tau - 1}{\tau^2} \right) \nonumber \\
& = \frac{1}{\tau^2} \left( - \frac{2cb}{C_2} - \frac{2b^2}{C_2 \tau} + \frac{\rho\delta}{\eta\beta\varphi}\left( \tau B^\tau \log B - (B^\tau - 1) \right)  \right) \label{eq:TauOptBoundedProof1}
\end{align}
From \cite{topsok2006some}, we know that $\log B \geq \frac{2 \eta\beta}{2 + \eta\beta}$. We thus have
\begin{align}
\tau B^\tau \log B - (B^\tau - 1) & \geq  \left( \frac{2 \eta\beta \tau}{2 + \eta\beta}  - 1 \right) B^\tau + 1 \nonumber \\
& > \frac{2 \eta\beta \tau}{2 + \eta\beta}   \label{eq:TauOptBoundedProof2}
\end{align}
where the last inequality is because $ \frac{2 \eta\beta \tau}{2 + \eta\beta}  - 1 > 0$ due to $\tau > \tau_0 \geq \frac{1}{\eta\beta} + \frac{1}{2} = \frac{2 + \eta\beta}{2 \eta\beta}$, and $B^\tau > 1$ due to $B > 1$ and $\tau > \tau_0 > 1$.
Plugging (\ref{eq:TauOptBoundedProof2}) into (\ref{eq:TauOptBoundedProof1}), we have
\begin{align*}
\frac{d H_2(\tau)}{d\tau} & > \frac{1}{\tau^2} \left( - \frac{2cb}{C_2} - \frac{2b^2}{C_2 \tau} + \frac{2 \rho\delta\tau}{\varphi(2 + \eta\beta)}  \right)  \\
& > \frac{1}{\tau^2} \left( - \frac{2cb}{C_2} - \frac{2b^2}{C_2} + \frac{2 \rho\delta\tau}{\varphi(2 + \eta\beta)}  \right)  \tag{$\tau > 1$, $\frac{2b^2}{C_2}>0$} \\
& > 0 \tag{$\tau > \tau_0 \geq \frac{\varphi(2 + \eta\beta)}{2 \rho\delta}  \left( \frac{2cb}{C_2} + \frac{2b^2}{C_2} \right)$}.
\end{align*}

We have now proven that $\frac{d H_1(\tau)}{d\tau} > 0$ and $\frac{d H_2(\tau)}{d\tau} > 0$. Because $\sqrt{x}$ increases with $x$ for any $x\geq 0$, we conclude that $G(\tau)$ increases with $\tau$ for $\tau > \tau_0$. Hence, $\tau^* \leq \tau_0$.

\subsection{Parameters for Generating Resource Consumptions in Simulation}
\label{append:ExperimentSimParam}

The mean and standard deviation values for randomly generating resource consumptions in the simulation are shown in Tables~\ref{tab:DGDDistributedSimParam}, \ref{tab:SGDDistributedSimParam}, and \ref{tab:SGDCentralizedSimParam}. All these values are obtained from measurements on the prototype system when running with the SVM model. The distributed DGD uses different distributions for each case, because the amount of data samples processed in Case~3 is different from other cases. The distributed SGD uses the same distribution for all cases, because the mini-batch size remains the same among all cases. When running the centralized SGD in simulations, only the local update time is generated randomly, because the centralized gradient descent does not include any global aggregation step; thus Table~\ref{tab:SGDCentralizedSimParam}  only includes the mean and standard deviation values for local update. We note that we never simulated centralized DGD in our experiments, thus we do not include values for centralized DGD here.
\vspace{0.1in}

\begin{table}[h]
\caption {Parameters for generating resource consumptions for distributed DGD} \label{tab:DGDDistributedSimParam} 
\vspace{-0.1in}
{\footnotesize
\begin{center}
\begin{tabular}{clcc}
\hline
Case & Resource type & Mean  & Standard deviation  \\
\hline
1 & Local update (seconds) & 0.020613052 & 0.008154439  \\
& Global aggregation (seconds) & 0.137093837  &  0.05548447 \\
\hline
2 & Local update (seconds) & 0.021810727 & 0.008042984  \\
& Global aggregation (seconds) & 0.12322071 & 0.048079171  \\
\hline
3 & Local update (seconds) & 0.095353094 &  0.016688657 \\
& Global aggregation (seconds) & 0.157255906 & 0.066722225  \\
\hline
4 & Local update (seconds) & 0.022075891 & 0.008528005  \\
& Global aggregation (seconds) & 0.108598094 &  0.044627335 \\
\hline
\end{tabular}
\end{center}
}
\end{table}

\begin{table}[h]
\caption {Parameters for generating resource consumptions for distributed SGD} \label{tab:SGDDistributedSimParam} 
\vspace{-0.1in}
{\footnotesize
\begin{center}
\begin{tabular}{lcc}
\hline
Resource type & Mean  & Standard deviation  \\
\hline
Local update (seconds) & 0.013015156 &  0.006946299 \\
Global aggregation (seconds) & 0.131604348 & 0.053873234  \\
\hline
\end{tabular}
\end{center}
}
\end{table}

\begin{table}[h]
\caption {Parameters for generating resource consumptions for centralized SGD} \label{tab:SGDCentralizedSimParam} 
\vspace{-0.1in}
{\footnotesize
\begin{center}
\begin{tabular}{lcc}
\hline
Resource type & Mean  & Standard deviation  \\
\hline
Local update (seconds) & 0.009974248 &  0.011922926 \\
\hline
\end{tabular}
\end{center}
}
\end{table}
\vspace{-0.2in}

\subsection{Additional Results on Varying Global Aggregation Time}
\label{append:VaryingGlobalAggDGDExperimentResults}

See Fig.~\ref{fig:VaryingGlobalAggDGDExperimentResults}.
\vspace{-0.1in}

\begin{figure*}[b]
    \centering
    \begin{subfigure}{0.25\textwidth}
        \centering
        \hspace{1\linewidth}
    \end{subfigure}%
    ~
    \begin{subfigure}{0.18\textwidth}
        \centering
        \hspace{1\linewidth}
    \end{subfigure}%
    ~
    \begin{subfigure}{0.18\textwidth}
        \centering
        \includegraphics[width=0.7\linewidth]{Varying_EachCase_Legend1.pdf}
    \end{subfigure}%
    ~
    \begin{subfigure}{0.18\textwidth}
        \centering
        \includegraphics[width=0.7\linewidth]{Varying_EachCase_Legend2.pdf}
    \end{subfigure}%
    ~
    \begin{subfigure}{0.18\textwidth}
        \centering
        \hspace{1\linewidth}
    \end{subfigure}%

    \begin{subfigure}[b]{0.25\textwidth}
        \centering
        \includegraphics[width=0.4\linewidth]{Varying_Tau_Legend.pdf}
        \vspace{0.2in}
    \end{subfigure}%
    ~
    \begin{subfigure}[b]{0.18\textwidth}
        \centering
        \includegraphics[width=1\linewidth]{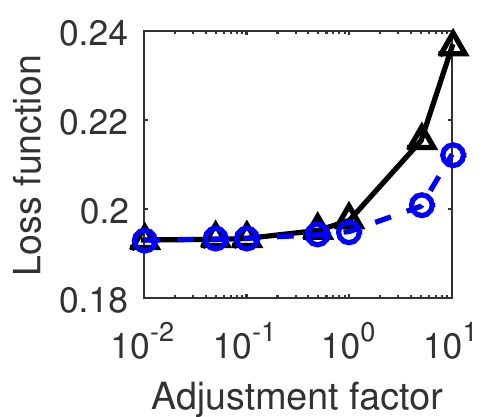}
    \end{subfigure}%
    ~
    \begin{subfigure}[b]{0.18\textwidth}
        \centering
        \includegraphics[width=1\linewidth]{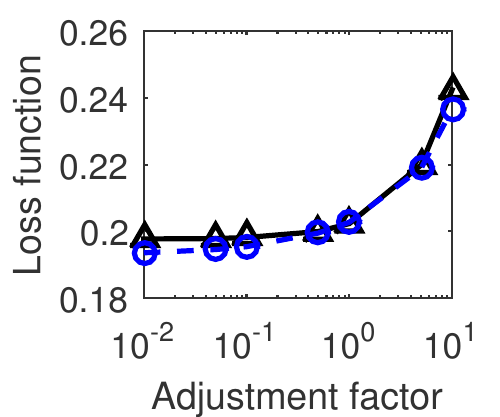}
    \end{subfigure}%
    ~
    \begin{subfigure}[b]{0.18\textwidth}
        \centering
        \includegraphics[width=1\linewidth]{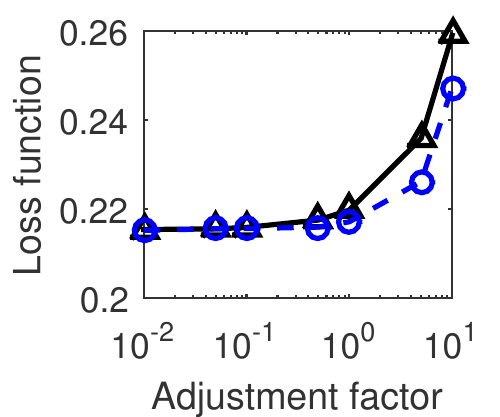}
    \end{subfigure}%
    ~
    \begin{subfigure}[b]{0.18\textwidth}
        \centering
        \includegraphics[width=1\linewidth]{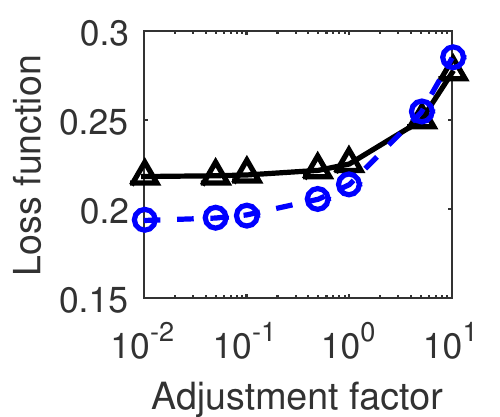}
    \end{subfigure}%
    \vspace{-0.1in}

    \begin{subfigure}[b]{0.25\textwidth}
        \centering
        \includegraphics[width=1\linewidth]{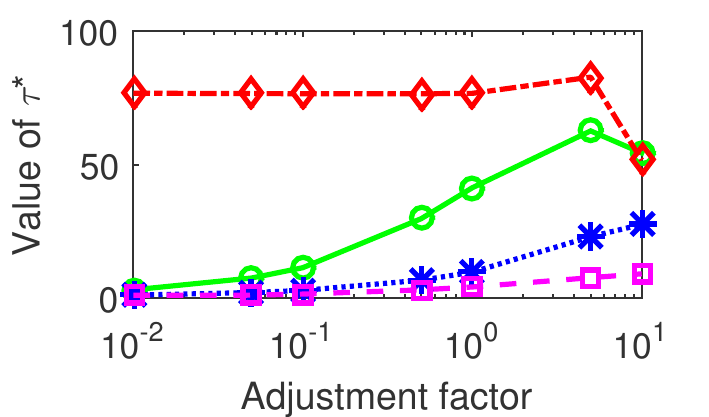}
        \caption{$\tau^*$ in proposed algorithm}
    \end{subfigure}%
    ~
    \begin{subfigure}[b]{0.18\textwidth}
        \centering
        \includegraphics[width=1\linewidth]{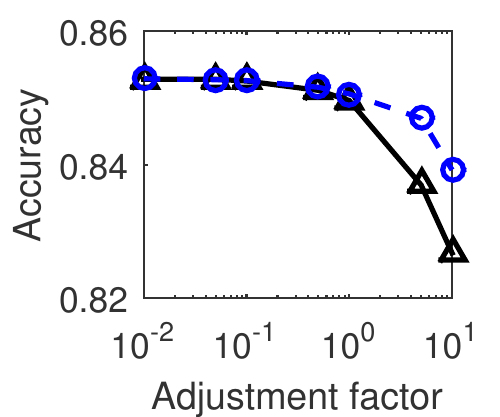}
        \caption{Case 1}
    \end{subfigure}%
    ~
    \begin{subfigure}[b]{0.18\textwidth}
        \centering
        \includegraphics[width=1\linewidth]{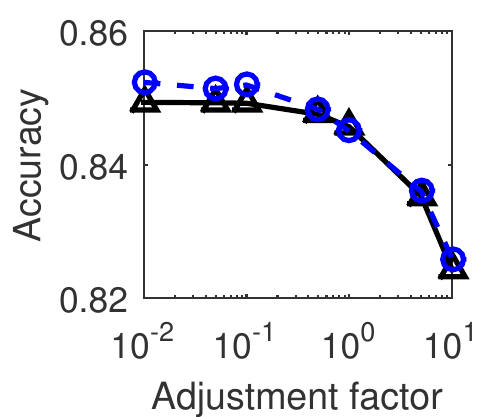}
        \caption{Case 2}
    \end{subfigure}%
    ~
    \begin{subfigure}[b]{0.18\textwidth}
        \centering
        \includegraphics[width=1\linewidth]{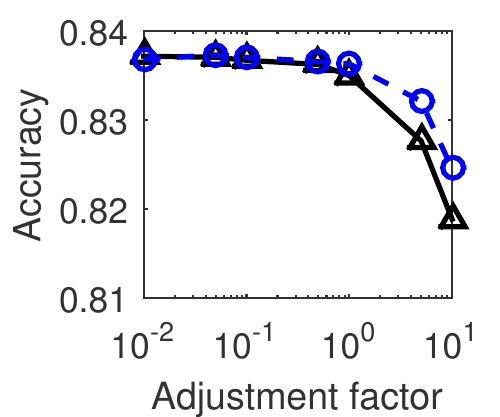}
        \caption{Case 3}
    \end{subfigure}%
    ~
    \begin{subfigure}[b]{0.18\textwidth}
        \centering
        \includegraphics[width=1\linewidth]{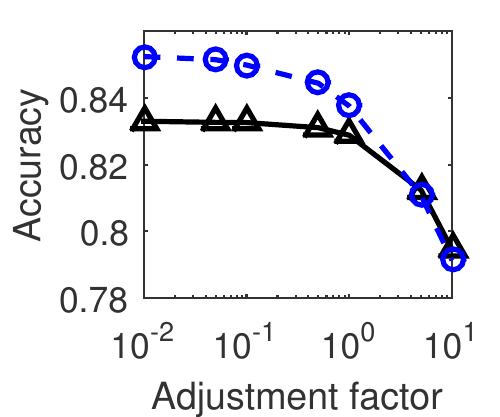}
        \caption{Case 4}
    \end{subfigure}%

\caption{Loss function values and classification accuracy with different global aggregation times for SVM (DGD).}
\label{fig:VaryingGlobalAggDGDExperimentResults}
\end{figure*}

\subsection{Additional Results on Varying Total Time Budget}
\label{append:VaryingTotalTimeDGDExperimentResults}

See Fig.~\ref{fig:VaryingTotalTimeDGDExperimentResults}.
\vspace{-0.1in}

\begin{figure*}[b]
    \centering
    \begin{subfigure}{0.25\textwidth}
        \centering
        \hspace{1\linewidth}
    \end{subfigure}%
    ~
    \begin{subfigure}{0.18\textwidth}
        \centering
        \hspace{1\linewidth}
    \end{subfigure}%
    ~
    \begin{subfigure}{0.18\textwidth}
        \centering
        \includegraphics[width=0.7\linewidth]{Varying_EachCase_Legend1.pdf}
    \end{subfigure}%
    ~
    \begin{subfigure}{0.18\textwidth}
        \centering
        \includegraphics[width=0.7\linewidth]{Varying_EachCase_Legend2.pdf}
    \end{subfigure}%
    ~
    \begin{subfigure}{0.18\textwidth}
        \centering
        \hspace{1\linewidth}
    \end{subfigure}%

    \begin{subfigure}[b]{0.25\textwidth}
        \centering
        \includegraphics[width=0.4\linewidth]{Varying_Tau_Legend.pdf}
        \vspace{0.2in}
    \end{subfigure}%
    ~
    \begin{subfigure}[b]{0.18\textwidth}
        \centering
        \includegraphics[width=1\linewidth]{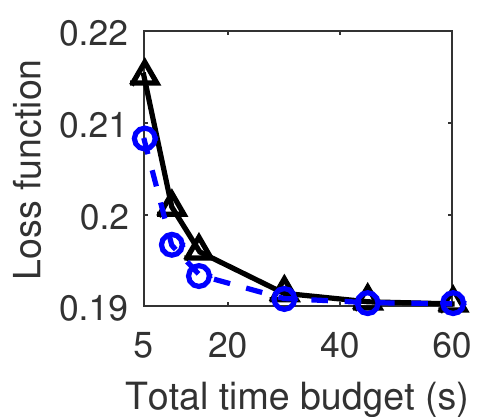}
    \end{subfigure}%
    ~
    \begin{subfigure}[b]{0.18\textwidth}
        \centering
        \includegraphics[width=1\linewidth]{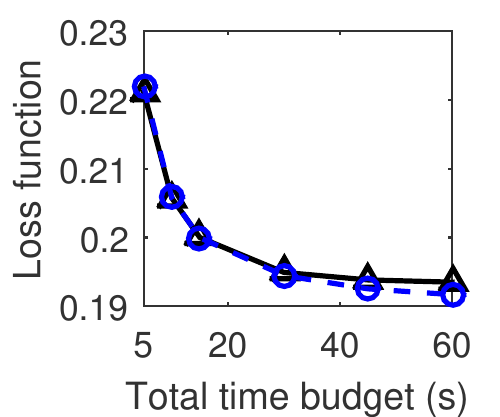}
    \end{subfigure}%
    ~
    \begin{subfigure}[b]{0.18\textwidth}
        \centering
        \includegraphics[width=1\linewidth]{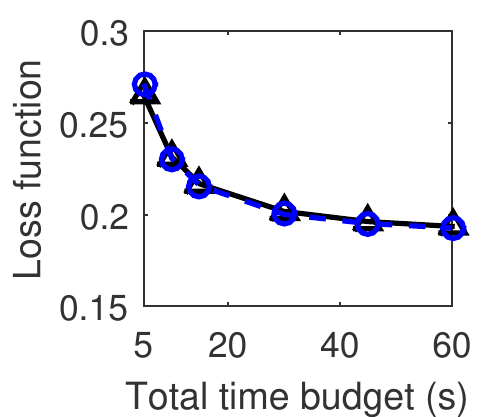}
    \end{subfigure}%
    ~
    \begin{subfigure}[b]{0.18\textwidth}
        \centering
        \includegraphics[width=1\linewidth]{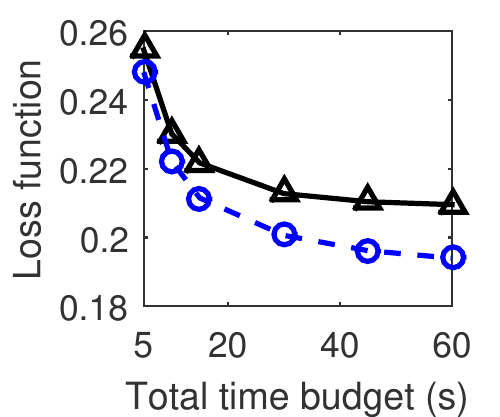}
    \end{subfigure}%
    \vspace{-0.1in}

    \begin{subfigure}[b]{0.25\textwidth}
        \centering
        \includegraphics[width=1\linewidth]{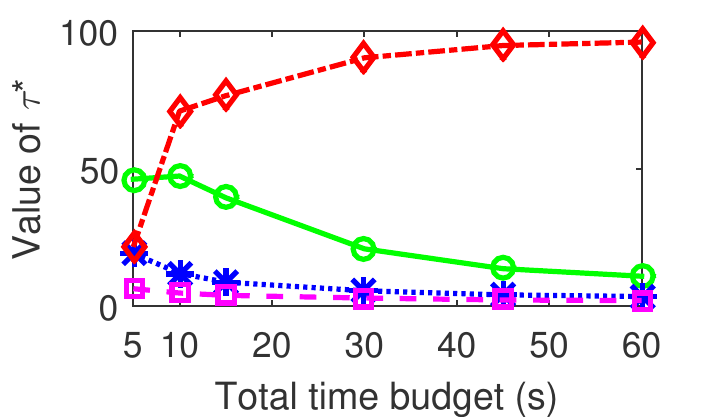}
        \caption{$\tau^*$ in proposed algorithm}
    \end{subfigure}%
    ~
    \begin{subfigure}[b]{0.18\textwidth}
        \centering
        \includegraphics[width=1\linewidth]{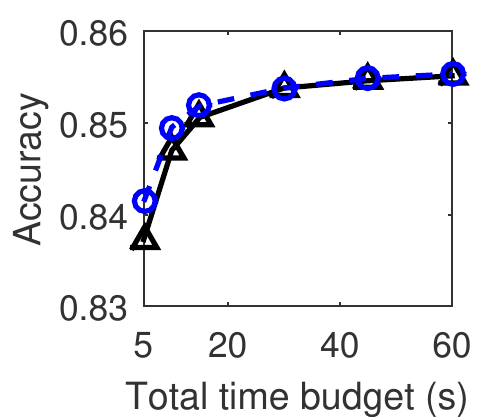}
        \caption{Case 1}
    \end{subfigure}%
    ~
    \begin{subfigure}[b]{0.18\textwidth}
        \centering
        \includegraphics[width=1\linewidth]{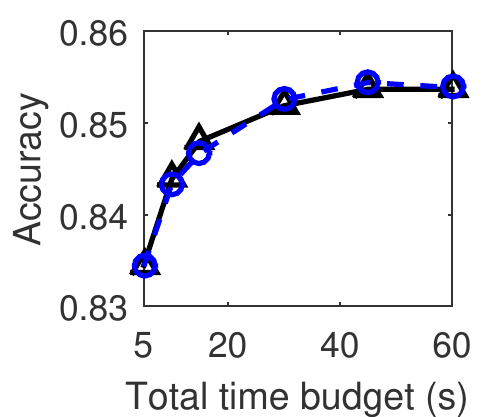}
        \caption{Case 2}
    \end{subfigure}%
    ~
    \begin{subfigure}[b]{0.18\textwidth}
        \centering
        \includegraphics[width=1\linewidth]{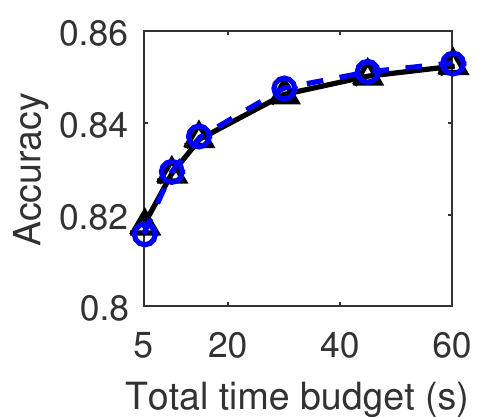}
        \caption{Case 3}
    \end{subfigure}%
    ~
    \begin{subfigure}[b]{0.18\textwidth}
        \centering
        \includegraphics[width=1\linewidth]{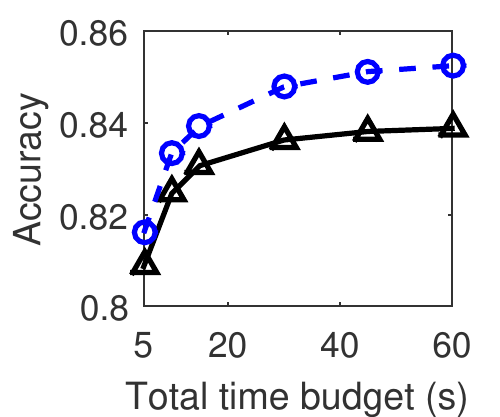}
        \caption{Case 4}
    \end{subfigure}%

\caption{Loss function values and classification accuracy with different total time budgets for SVM (DGD).}
\label{fig:VaryingTotalTimeDGDExperimentResults}
\end{figure*}

\subsection{Additional Results on Instantaneous Behavior}
\label{append:InstantSVMSGDExperimentResults}

See Fig.~\ref{fig:InstantSVMSGDExperimentResults}.

\begin{figure*}[b]
    \centering
    \begin{subfigure}[b]{0.18\textwidth}
        \centering
        \includegraphics[width=0.6\linewidth]{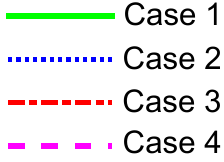}
        \vspace{0.2in}
    \end{subfigure}%
    ~
    \begin{subfigure}[b]{0.2\textwidth}
        \centering
        \includegraphics[width=1\linewidth]{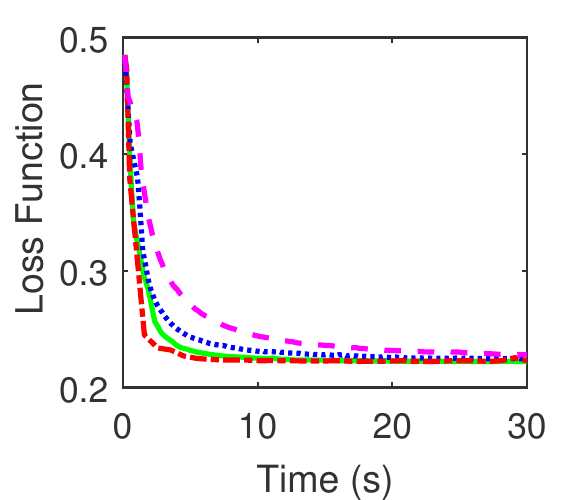}
    \end{subfigure}%
    ~
    \begin{subfigure}[b]{0.2\textwidth}
        \centering
        \includegraphics[width=1\linewidth]{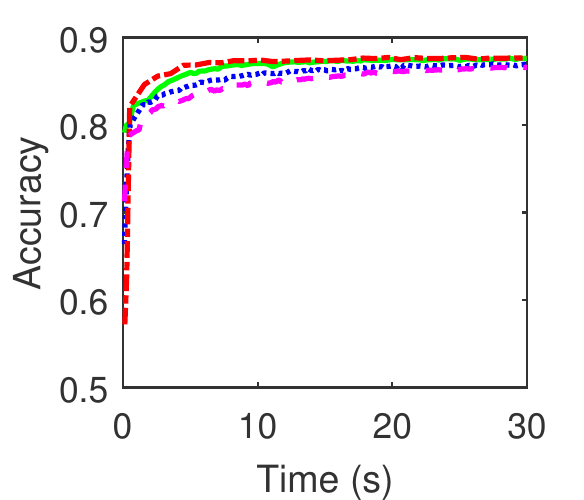}
    \end{subfigure}%
    ~
    \begin{subfigure}[b]{0.2\textwidth}
        \centering
        \includegraphics[width=1\linewidth]{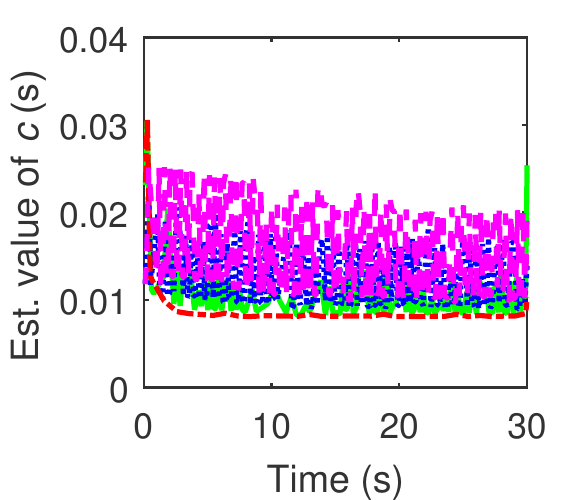}
    \end{subfigure}%
    ~
    \begin{subfigure}[b]{0.2\textwidth}
        \centering
        \includegraphics[width=1\linewidth]{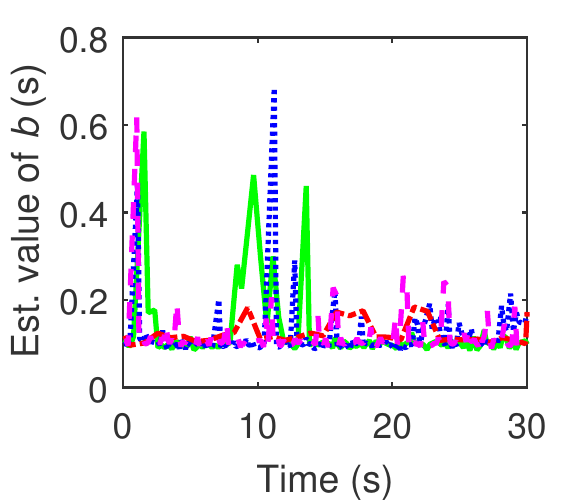}
    \end{subfigure}%
    \vspace{-0.1in}

    \begin{subfigure}[b]{0.18\textwidth}
        \centering
        \hspace{1\linewidth}
    \end{subfigure}%
    ~
    \begin{subfigure}[b]{0.2\textwidth}
        \centering
        \includegraphics[width=1\linewidth]{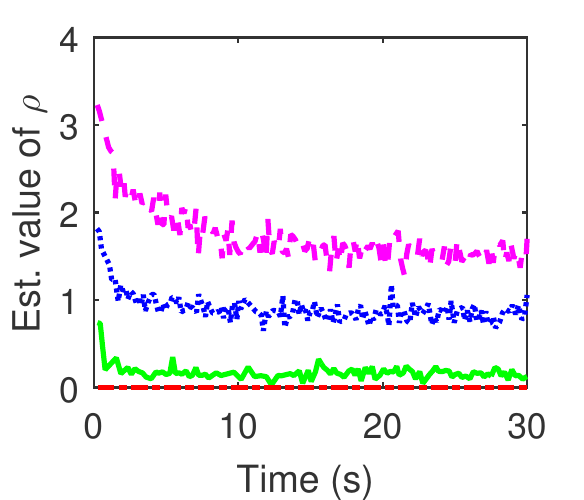}
    \end{subfigure}%
    ~
    \begin{subfigure}[b]{0.2\textwidth}
        \centering
        \includegraphics[width=1\linewidth]{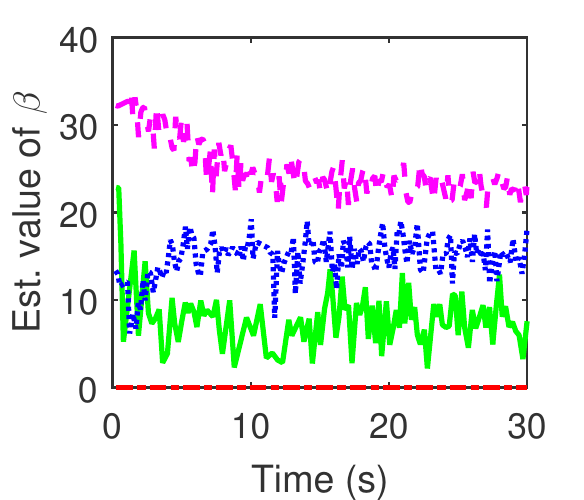}
    \end{subfigure}%
    ~
    \begin{subfigure}[b]{0.2\textwidth}
        \centering
        \includegraphics[width=1\linewidth]{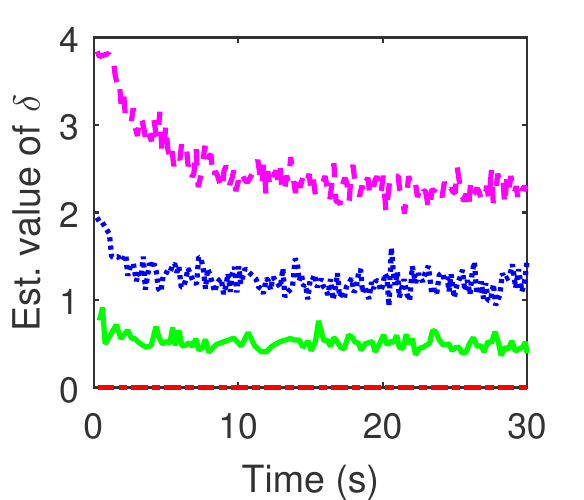}
    \end{subfigure}%
    ~
    \begin{subfigure}[b]{0.2\textwidth}
        \centering
        \includegraphics[width=1\linewidth]{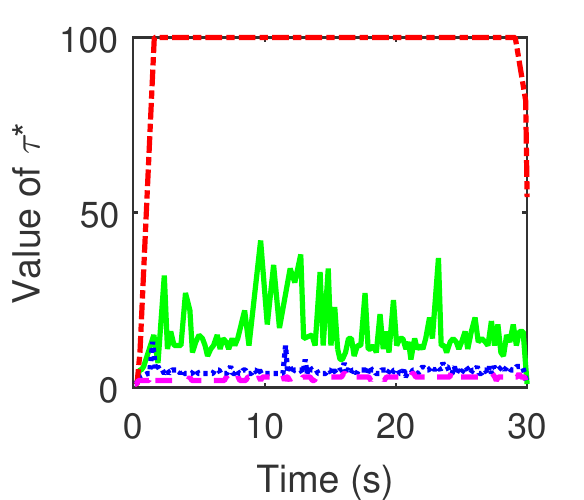}
    \end{subfigure}%

\caption{Instantaneous results of SVM (SGD) with the proposed algorithm.}
\label{fig:InstantSVMSGDExperimentResults}
\end{figure*}

\end{document}